\newcolumntype{C}[1]{>{\centering\arraybackslash}p{#1}}
\def\abs#1{\ensuremath{\lvert #1\rvert}} 
\def\norm#1{\ensuremath{\lVert #1\rVert}}
\DeclareRobustCommand\sfrac[1]{\@ifnextchar/{\@sfrac{#1}}%
                                            {\@sfrac{#1}/}}
\def\@sfrac#1/#2{\leavevmode\scalebox{.9}{\kern.1em\raise.5ex
         \hbox{$\m@th\mbox{\fontsize\sf@size\z@
                           \selectfont#1}$}\kern-.1em
         /\kern-.15em\lower.25ex
          \hbox{$\m@th\mbox{\fontsize\sf@size\z@
                            \selectfont#2}$}}}
\DeclareRobustCommand\numfrac[1]{\@ifnextchar/{\@numfrac{#1}}%
                                            {\@numfrac{#1}}}
\def\@numfrac#1{\leavevmode \hbox{$\m@th\mbox{\fontsize\sf@size\z@
                           \selectfont#1}$}}
\newcommand{\nat}{\mathbb N}
\newcommand{\tuple}[1]{\langle #1 \rangle}
\newcommand{\dist}{{\cal D}}
\newcommand{\q}{\hat{ q}}
\newcommand{\p}{\hat{ p}}
\newcommand{\M}{{\cal M}}
\newcommand{\N}{{\cal N}}
\newcommand{\A}{{\cal A}}
\newcommand{\Supp}{{\sf Supp}}
\newcommand{\Pref}{{\sf Pref}}
\newcommand{\Plays}{{\sf Play}}
\newcommand{\Last}{{\sf Last}}
\newcommand{\sink}{{\sf sink}}
\newcommand{\Outcomes}{\mathit{Outcomes}}
\newcommand{\fsum}{\mathit{sum}}
\newcommand{\fmax}{\mathit{max}}
\newcommand{\win}[2]{\langle \! \langle 1 \rangle \! \rangle_{\mathit{#2}}^{\mathit{#1}}}
\newcommand{\winsure}[1]{\langle \! \langle 1 \rangle \! \rangle_{\mathit{sure}}^{\mathit{#1}}}
\newcommand{\winas}[1]{\langle \! \langle 1 \rangle \! \rangle_{\mathit{almost}}^{\mathit{#1}}}
\newcommand{\winlim}[1]{\langle \! \langle 1 \rangle \! \rangle_{\mathit{limit}}^{\mathit{#1}}}
\newcommand{\sync}{{\sf sync}}
\newcommand{\init}{{\sf init}}
\newcommand{\Act}{{\sf A}}
\newcommand{\Pre}{{\sf Pre}}
\newcommand{\post}{{\sf post}}
\newcommand{\mem}{{\sf Mem}}
\let\epsilon\varepsilon
\let\oldBox\Box
\renewcommand{\Box}{\raisebox{1pt}{$\oldBox$}}
\newenvironment{exclude}{}{}
\def\mahsa#1{\marginpar{\textbf{Mahsa:} #1}}
\title{Robust Synchronization in \\ Markov Decision Processes\thanks{This work 
was partially supported by the Belgian Fonds National de la Recherche 
Scientifique (FNRS), and by the PICS project \emph{Quaverif} funded by the French 
Centre National de la Recherche Scientifique (CNRS).}}
\author{Laurent Doyen \inst{1}
\and Thierry Massart \inst{2} \and
Mahsa Shirmohammadi \inst{1,2} }
\institute{LSV, ENS Cachan \& CNRS, France 
\and Universit\'e Libre de Bruxelles, Belgium 
}
\begin{document}
\sloppy 
\maketitle 
\pagestyle{plain}

\begin{abstract}

\setlength{\parindent}{12pt}

We consider synchronizing properties of Markov decision processes (MDP),
viewed as generators of sequences of probability distributions over states.
A probability distribution is $p$-synchronizing if the probability
mass is at least $p$ in some state, and
a sequence of probability distributions is 
weakly $p$-synchronizing, or strongly $p$-synchronizing if respectively 
infinitely many, or all but finitely many distributions in the sequence 
are $p$-synchronizing.

For each synchronizing mode, an MDP can be 
$(i)$ \emph{sure} winning if there is a strategy that produces a $1$-synchronizing sequence; 
$(ii)$ \emph{almost-sure} winning if there is a strategy that produces 
a sequence that is, for all $\epsilon > 0$, a (1-$\epsilon$)-synchronizing sequence; 
$(iii)$ \emph{limit-sure} winning if for all $\epsilon > 0$, 
there is a strategy that produces a (1-$\epsilon$)-synchronizing sequence.

For each synchronizing and winning mode, 
we consider the problem of deciding whether an MDP is winning, 
and we establish matching upper and lower complexity bounds of 
the problems, as well as the optimal memory requirement
for winning strategies:  
(a) for all winning modes, we show that the problems are PSPACE-complete
for weak synchronization, and PTIME-complete for strong synchronization;
(b) we show that for weak synchronization, exponential memory is sufficient 
and may be necessary for sure winning, and infinite memory is necessary 
for almost-sure winning; for strong synchronization, linear-size memory is sufficient 
and may be necessary in all modes;
(c) we show a robustness result that the almost-sure and limit-sure winning modes coincide for
both weak and strong synchronization.
\end{abstract}


\section{Introduction}

Markov Decision Processes (MDPs) are studied in theoretical computer science 
in many problems related to system design and verification~\cite{Vardi-focs85,FV97,CY95}.
MDPs are a model of reactive systems with 
both stochastic and nondeterministic behavior, used
in the control problem for reactive systems: the nondeterminism
represents the possible choices of the controller, and the
stochasticity represents the uncertainties about the system response.
The controller synthesis problem is to compute a control strategy that
ensures correct behaviors of the system with probability~$1$.
Traditional well-studied specifications describe correct behaviors 
as infinite sequences of states, such as reachability, 
B\"uchi, and co-B\"uchi, which require the system to visit a target
state once, infinitely often, and ultimately always, respectively~\cite{AHK07,ConcOmRegGames}.

In contrast, we consider symbolic specifications of the behaviors of MDPs as sequences 
of probability distributions $X_i : Q \to [0,1]$ over the finite state
space~$Q$ of the system, where $X_i(q)$ is the probability that the MDP
is in state $q \in Q$ after $i$ steps. 
The symbolic specification of stochastic systems is relevant in  
applications such as system biology and robot planning~\cite{HMW09,BBMR08,DMS14},
and recently it has been used 
in several works on design and verification of reactive systems~\cite{KVAK10,CKVAK11,AAGT12}. 
While the verification of MDPs may yield undecidability,
both with traditional specifications~\cite{BBG08,GO10}, and symbolic
specifications~\cite{KVAK10,DMS11Err}, decidability results are obtained
for \emph{eventually synchronizing} conditions under general control  
strategies that depend on the full history of the system execution~\cite{DMS14}.
Intuitively, a sequence of probability distributions is eventually synchronizing
if the probability mass tends to accumulate in a given set of target states
along the sequence. This is an analogue, for sequences of probability 
distributions, of the reachability condition. 

In this paper, we consider an analogue of the B\"uchi 
and coB\"uchi conditions for sequences of distributions~\cite{DMS11b}: the probability
mass should get synchronized infinitely often, or ultimately at every step.
More precisely, for $0  \leq p \leq 1$ let a probability distribution
$X : Q \to [0,1]$ 
be $p$-synchronized if it assigns
probability at least~$p$ to some state. 
A sequence $\bar{X} = X_0 X_1 \dots$ of probability distributions is
$(a)$ \emph{eventually $p$-synchronizing} if $X_i$ is $p$-synchronized for some~$i$;
$(b)$ \emph{weakly $p$-synchronizing} if $X_i$ is $p$-synchronized for infinitely many~$i$'s;
$(c)$ \emph{strongly $p$-synchronizing} if $X_i$ is $p$-synchronized for all but finitely many~$i$'s.
It is easy to see that strongly $p$-synchronizing implies
weakly $p$-synchronizing, which implies eventually $p$-synchronizing.
The qualitative 
synchronizing properties, 
corresponding to the case where either $p=1$, or $p$ tends to $1$,
are analogous to the traditional 
reachability, B\"uchi, and coB\"uchi conditions.

We consider the following qualitative (winning) modes, summarized in Table~\ref{tab:def-modes}:  
$(i)$ \emph{sure} winning, if there is a strategy that generates
a \{eventually, weakly, strongly\} $1$-synchronizing sequence;
$(ii)$ \emph{almost-sure}
winning, if there is a strategy that generates a sequence that is, for all $\epsilon>0$,
\{eventually, weakly, strongly\} $(1-\epsilon)$-synchronizing;
$(iii)$ \emph{limit-sure} winning, if for all $\epsilon>0$, there is a strategy
that generates a \{eventually, weakly, strongly\} $(1-\epsilon)$-synchronizing sequence.

For eventually synchronizing deciding if a given MDP is winning is PSPACE-complete, 
and the three winning modes form a strict hierarchy~\cite{DMS14}. 
In particular, there are limit-sure winning MDPs that are not almost-sure winning. 
An important and difficult result in this paper is that the new synchronizing modes are more robust: 
for weak and strong synchronization, we show that the almost-sure and limit-sure modes coincide.
Moreover we establish the complexity of deciding if a given MDP is winning
by providing tight (matching) upper and lower bounds: 
for each winning mode we show that the problems are PSPACE-complete 
for weak synchronization, and PTIME-complete for strong synchronization.  

Thus the weakly and strongly
synchronizing properties provide conservative approximations of eventually synchronizing,
they are robust (limit-sure and almost-sure coincide), and they are of the same
(or even lower) complexity as compared to eventually synchronizing.

\begin{table}[!t]
\begin{center}
\scalebox{0.85}{
\begin{tabular}{|l@{\ } |*{9}{c@{\;}c@{\;}c@{\;}|}}
\hline                        
 \large{\strut}   & \multicolumn{3}{|c|}{Eventually} & \multicolumn{3}{|c|}{Weakly} & \multicolumn{3}{|c|}{Strongly} \\
\hline
 Sure \large{\strut}	  
        	        & $\exists \alpha$ & $\exists n$                    & $\M^{\alpha}_n(T)=1$ 
			& $\exists \alpha$ & $\forall N \: \exists n \geq N$   & $\M^{\alpha}_n(T)=1$ 
			& $\exists \alpha$ & $\exists N \: \forall n \geq N$   & $\M^{\alpha}_n(T)=1$ 
 \\
\hline
 Almost-sure \  \large{\strut}	  
				& $\exists \alpha$& $\sup_{n}$                 & $\M^{\alpha}_n(T)=1$ 
				& $\exists \alpha$& $\limsup_{n\to\infty}$     & $\M^{\alpha}_n(T)=1$ 
				& $\exists \alpha$& $\liminf_{n\to\infty}$     & $\M^{\alpha}_n(T)=1$ 
 \\
\hline
 Limit-sure \large{\strut}      
		& $\sup_{\alpha}$& $\sup_{n}$                & $\M^{\alpha}_n(T)=1$  
		& $\sup_{\alpha}$& $\limsup_{n\to\infty}$    & $\M^{\alpha}_n(T)=1$  
		& $\sup_{\alpha}$& $\liminf_{n\to\infty}$    & $\M^{\alpha}_n(T)=1$  
\\
\hline
\end{tabular}  
}
\end{center}
\caption{Winning modes and synchronizing objectives 
(where $\M^{\alpha}_n(T)$ denotes the probability that under strategy $\alpha$,
after $n$ steps the MDP $\M$ is in a state of $T$). \label{tab:def-modes}}
\end{table}

We also provide optimal memory bounds for 
winning strategies:  
exponential memory is sufficient and may be necessary
for sure winning in weak synchronization,
infinite memory is necessary for almost-sure winning in weak synchronization,
and linear memory is sufficient for strong synchronization in all 
winning modes. We present a variant of strong synchronization
for which memoryless strategies are sufficient.

\paragraph{Related Works and Applications.}
Synchronization problems were first considered for deterministic finite automata (DFA)
where a \emph{synchronizing word} is a finite sequence of control actions that can be 
executed from any state of an automaton and leads to the same state 
(see~\cite{Volkov08} for a survey of results and applications). 
While the existence of a synchronizing word can be decided in polynomial 
time for DFA, extensive research efforts are devoted to establishing a tight bound
on the length of the shortest synchronizing word, which is
conjectured to be $(n-1)^2$ for automata with $n$ states~\cite{Cer64}.
Various extensions of the notion of synchronizing word have been proposed 
for non-deterministic and probabilistic automata~\cite{Burkhard76a,IS99,Kfo70,DMS11a},
leading to results of PSPACE-completeness~\cite{Martyugin14}, or even 
undecidability~\cite{Kfo70,DMS11Err}. 


For probabilistic systems, a natural extension of words
is the notion of strategy that reacts and chooses actions according to the sequence 
of states visited along the system execution. In this context, an input word 
corresponds to the special case of a blind strategy that chooses the control actions
in advance. 
In particular, almost-sure weak and strong synchronization 
with blind strategies has been studied~\cite{DMS11a} and the main result is the undecidability
of deciding the existence of a blind almost-sure winning strategy for
weak synchronization, and the PSPACE-completeness of the emptiness
problem for strong synchronization~\cite{DMS11b,DMS11Err}.
In contrast, for general strategies (which also correspond to input trees),
we establish the PSPACE-completeness and 
PTIME-completeness of deciding almost-sure weak and strong 
synchronization respectively. 

A typical application scenario is the design of a control program for 
a group of mobile robots running in a stochastic environment. 
The possible behaviors of the robots and the stochastic response of 
the environment (such as obstacle encounters) are represented by an MDP,
and a synchronizing strategy corresponds to a control program that
can be embedded in every robot to ensure that they meet (or synchronize)
eventually once, infinitely often, or eventually forever.

\section{Markov Decision Processes and Synchronization}\label{sec:def}

We closely follow the definitions of~\cite{DMS14}.
A \emph{probability distribution} over a finite set~$S$ is a
function $d : S \to [0, 1]$ such that $\sum_{s \in S} d(s)= 1$. 
The \emph{support} of~$d$ is the set $\Supp(d) = \{s \in S \mid d(s) > 0\}$. 
We denote by $\dist(S)$ the set of all probability distributions over~$S$. 
Given a set $T\subseteq S$, let $d(T)=\sum_{s\in T}d(s)$
and $\norm{d}_T = \fmax_{s\in T} d(s)$. 
For $T \neq \emptyset$, the \emph{uniform distribution} on $T$ assigns probability 
$\frac{1}{\abs{T}}$ to every state in $T$.
Given $s \in S$, the \emph{Dirac distribution} on~$s$ assigns probability~$1$
to~$s$, and by a slight abuse of notation, we denote it simply by $s$.


A \emph{Markov decision process} (MDP) is a tuple $\M = \tuple{Q,\Act,\delta}$ 
where $Q$ is a finite set of states, $\Act$ is a finite set of actions, 
and $\delta: Q \times \Act \to \dist(Q)$ is a probabilistic transition function.   
A state $q$ is \emph{absorbing} if $\delta(q,a)$ is the Dirac distribution on $q$
for all actions $a \in \Act$. 


Given state $q\in Q$ and action $a \in \Act$, the successor state of $q$
under action $a$ is $q'$ with probability $\delta(q,a)(q')$.
Denote by $\post(q,a)$ the set $\Supp(\delta(q,a))$, 
and given $T \subseteq Q$ let $\Pre(T)= \{q \in
Q \mid \exists a \in \Act: \post(q, a) \subseteq T\}$ be the set of
states from which there is an action to ensure that the
successor state is in $T$.  For $k>0$, let $\Pre^k(T) =
\Pre(\Pre^{k-1}(T))$ with $\Pre^0(T)=T$.

A \emph{path} in $\M$ is an infinite sequence $\pi = q_{0} a_{0} q_{1} a_1 \dots$
such that $q_{i+1} \in \post(q_{i},a_{i})$ 
for all $i \geq 0$. A finite prefix $\rho = q_{0} a_{0} q_{1} a_1 \dots q_n$
of a path (or simply
a finite path) has length $\abs{\rho}=n$ and last state $\Last(\rho)=q_{n}$.
We denote by $\Plays(\M)$ and $\Pref(\M)$ the set of all
paths and finite paths in~$\M$ respectively.


\paragraph{Strategies.}
A \textit{randomized strategy} for $\M$ (or simply a strategy) is a function 
$\alpha: \Pref(\M) \to \dist(\Act)$ that, given a finite path $\rho$,
returns a probability distribution $\alpha(\rho)$ over the action set,
used to select a successor state $q'$ of $\rho$ with probability 
$\sum_{a \in \Act} \alpha(\rho)(a) \cdot \delta(q,a)(q')$ where $q=\Last(\rho)$.

A strategy $\alpha$ is
\emph{pure} if for all $\rho \in \Pref(\M)$, 
there exists an action $a \in \Act$ such that $\alpha(\rho)(a)=1$; and 
\emph{memoryless} if $\alpha(\rho) = \alpha(\rho')$
for all $\rho, \rho'$ such that $\Last(\rho) = \Last(\rho')$.
We view pure strategies as functions $\alpha: \Pref(\M)\to \Act$, and 
memoryless strategies as functions $\alpha: Q \to \dist(\Act)$.


Finally, a strategy $\alpha$ uses \emph{finite-memory} if it can be represented 
by a finite-state transducer $T = \tuple{\mem,m_0, \alpha_u, \alpha_n}$ where
$\mem$ is a finite set of modes (the memory of the strategy), 
$m_0 \in \mem$ is the initial mode,
$\alpha_u: \mem \times (\Act \times Q) \to \mem$ is an update function, 
that given the current memory, last action and state updates the memory,
and $\alpha_n: \mem \times Q \to \dist(\Act)$ is a next-move function
that selects the probability distribution $\alpha_n(m,q)$ over
actions when the current mode is $m$ and the current state of $\M$ is $q$.
For pure strategies, we assume that $\alpha_n: \mem \times Q \to \Act$.
The \emph{memory size} of the strategy is the number $\abs{\mem}$ of modes.
For a finite-memory strategy $\alpha$, let $\M(\alpha)$ be the Markov chain
obtained as the product of $\M$ with the transducer defining $\alpha$.
We assume general knowledge of the reader about Markov chains, 
such as recurrent and transient states, periodicity, and 
stationary distributions~\cite{PK11}.


\paragraph{Outcomes and winning modes.}
Given an initial distribution $d_0 \in \dist(Q)$ and a strategy $\alpha$ in an MDP $\M$,
a \emph{path-outcome} is a path
$\pi = q_{0}  a_{0} q_{1} a_1 \dots$ in $\M$
such that $q_0 \in \Supp(d_0)$ and $a_{i} \in \Supp(\alpha(q_0 a_0 \dots q_{i}))$ 
for all $i \geq 0$. The probability of a finite prefix 
$\rho = q_{0} a_{0} q_{1} a_1 \dots q_n$ of $\pi$ is 
$$d_0(q_{0}) \cdot
\prod_{j=0}^{n-1} \alpha(q_{0} a_{0}\dots q_{j})(a_{j}) \cdot
\delta(q_{j},a_{j})(q_{j+1}).$$
We denote by $\Outcomes(d_0,\alpha)$ the set of all path-outcomes from $d_0$ under strategy $\alpha$.
An \emph{event} $\Omega \subseteq \Plays(\M)$ is a measurable set of paths, and given an initial distribution 
$d_0$ and a strategy $\alpha$, the probability $Pr^{\alpha}(\Omega)$ of
$\Omega$ is uniquely defined~\cite{Vardi-focs85}.
We consider the following classical winning modes. 
Given an initial distribution $d_0$ and an event $\Omega$, we say that $\M$ is:
\emph{sure winning} if there exists a strategy $\alpha$ such that $\Outcomes(d_0,\alpha) \subseteq \Omega$;
\emph{almost-sure winning} if there exists a strategy $\alpha$ such that $\Pr^{\alpha}(\Omega) = 1$;

For example, given a set $T \subseteq Q$ of target states, and $k \in \nat$, 
we denote by 
$\Box T = \{q_{0} a_{0} q_{1} \dots \in \Plays(\M) \mid \forall i: q_i \in T\}$
the safety event of always staying in~$T$, by
$\Diamond T = \{q_{0} a_{0} q_{1} \dots \in \Plays(\M) \mid \exists i: q_i \in T\}$
the event of reaching~$T$, and by
$\Diamond^{k}\, T = \{q_{0} a_{0} q_{1} \dots \in \Plays(\M) \mid q_k \in T \}$ the event
of reaching~$T$ after exactly $k$~steps. Let $\Diamond^{\leq k}\, T = \bigcup_{j \leq k} \Diamond^{j}\, T$.
Hence, if $\Pr^{\alpha}(\Diamond T)=1$ then almost-surely  a state in~$T$ is reached 
under strategy $\alpha$.

\begin{table}[t]
\begin{center}
\scalebox{0.90}{
\begin{tabular}{|l@{\ }|c|c|c|}
\hline                        
\large{\strut}           &  Eventually & Weakly   & Strongly \\
\hline                        
Sure \large{\strut}      &  \;PSPACE-C\;   & \;{\bf PSPACE-C}\; & \;{\bf PTIME-C}\;  \\
\hline                        
Almost-sure \large{\strut} & PSPACE-C   & \multirow{2}{*}{{\bf PSPACE-C}} & \multirow{2}{*}{{\bf PTIME-C}}  \\
\cline{1-2}
Limit-sure \large{\strut}  & PSPACE-C   &  &   \\
\hline
\end{tabular}  
}
\end{center}
\caption{Computational complexity of the membership problem (new results in boldface).\label{tab:complexity}}
\end{table}


We consider a symbolic outcome of MDPs viewed as generators of sequences of 
probability distributions over states~\cite{KVAK10}.
Given an initial distribution $d_0 \in \dist(Q)$ and a strategy $\alpha$ in $\M$,
the \emph{symbolic outcome} of $\M$ from $d_0$ is 
the sequence $(\M^{\alpha}_n)_{n\in \nat}$ of probability distributions defined by 
$\M^{\alpha}_k(q) = Pr^{\alpha}(\Diamond^{k}\, \{q\})$ for all $k \geq 0$ and $q \in Q$.
Hence, $\M^{\alpha}_k$ is the probability distribution over states after $k$ steps
under strategy $\alpha$.
Note that $\M^{\alpha}_0 = d_0$ and the symbolic outcome is a deterministic sequence 
of distributions: each distribution $\M^{\alpha}_k$ has a unique (determinisitic) 
successor.

Informally, synchronizing objectives require that the probability of some state 
(or some group of states) tends to $1$ in the sequence $(\M^{\alpha}_n)_{n\in \nat}$,
either once, infinitely often, or always after some point. 
Given a set $T \subseteq Q$, consider the functions $\fsum_T: \dist(Q) \to [0,1]$ 
and $\fmax_T: \dist(Q) \to [0,1]$ that compute $\fsum_T(X) = \sum_{q \in T} X(q)$
and $\fmax_T(X) = \max_{q \in T} X(q)$. 
For $f \in \{\fsum_T,\fmax_T\}$ and $p \in [0,1]$, 
we say that a probability distribution $X$ is $p$-synchronized according to $f$ if $f(X) \geq p$,
and that a sequence $\bar{X} = X_0 X_1 \dots$ of
probability distributions is~\cite{DMS11b,DMS14}:
\begin{itemize}
\item[$(a)$] \emph{event} (or \emph{eventually}) \emph{$p$-synchronizing} if $X_i$ is $p$-synchronized for some $i \geq 0$;
\item[$(b)$] \emph{weakly $p$-synchronizing} if $X_i$ is $p$-synchronized for infinitely many $i$'s;
\item[$(c)$] \emph{strongly $p$-synchronizing} if $X_i$ is $p$-synchronized for all but finitely many $i$'s.
\end{itemize}

For $p=1$, these definitions are analogous to the traditional reachability, 
B\"uchi, and coB\"uchi conditions~\cite{ConcOmRegGames}, and the following 
winning modes can be considered~\cite{DMS14}:
given an initial distribution $d_0$ and a function $f \in \{\fsum_T,\fmax_T\}$, 
we say that for the objective of \{eventually, weak, strong\} synchronization
from~$d_0$, $\M$ is:
\begin{itemize}
\item \emph{sure winning} if there exists a strategy $\alpha$ such that
the symbolic outcome of $\alpha$ from $d_0$ 
is \{eventually, weakly, strongly\} $1$-synchronizing according to~$f$;
\item \emph{almost-sure winning} if there exists a strategy $\alpha$ such that 
for all $\epsilon>0$ the symbolic outcome of $\alpha$ from $d_0$ is 
\{eventually, weakly, strongly\} $(1-\epsilon)$-synchronizing according to $f$;
\item \emph{limit-sure winning} if for all $\epsilon>0$, there exists a strategy $\alpha$
such that the symbolic outcome of $\alpha$ from $d_0$ is 
\{eventually, weakly, strongly\} $(1-\epsilon)$-synchronizing according to $f$;
\end{itemize}

Note that the winning modes for synchronization objectives differ
from the classical winning modes in MDPs: they can be viewed as a 
specification of the set of sequences of distributions that are 
winning in a non-stochastic system (since the symbolic outcome is
deterministic), while the traditional almost-sure and limit-sure winning modes 
for path-outcomes consider a probability measure over paths and specify 
the probability of a specific event (i.e., a set of paths).
Thus for instance a strategy is almost-sure synchronizing 
if the (single) symbolic outcome it produces belongs to the corresponding
winning set, whereas traditional almost-sure winning requires a certain event
to occur with probability $1$.

We often write $\norm{X}_T$ instead
of  $\fmax_T(X)$ (and we omit the subscript when $T=Q$) 
and $X(T)$ instead of $\fsum_T(X)$, as 
in Table~\ref{tab:def-modes} where 
the definitions of the various winning modes and synchronizing objectives
for $f=\fsum_T$ 
are summarized. 

\begin{table}[t]
\begin{center}
\scalebox{0.90}{
\begin{tabular}{|l@{\ }|c|c|C{22mm}|C{22mm}|}
\hline                        
\large{\strut}           &  \multirow{2}{*}{Eventually} & \multirow{2}{*}{Weakly}   & \multicolumn{2}{c|}{Strongly} \\
\cline{4-5}
\large{\strut}           &   &    & $\fsum_T$ & $\fmax_T$ \\
\hline                        
Sure \large{\strut}      &  \;exponential\;   & \multirow{1}{*}{\;{\bf exponential}\;} & \multirow{1}{*}{\;{\bf memoryless}\;} & \multirow{1}{*}{\;{\bf linear}\;}  \\
\hline                        
Almost-sure \large{\strut} & infinite   & \multirow{2}{*}{{\bf infinite}} & \multirow{2}{*}{{\bf memoryless}} & \multirow{2}{*}{{\bf linear}}  \\
\cline{1-2}
Limit-sure \large{\strut}  & unbounded   &  &  &  \\
\hline
\end{tabular}  
}
\end{center}
\caption{Memory requirement (new results in boldface).\label{tab:memory}}
\end{table}

\paragraph{Decision problems.}
For $f \in \{\fsum_T,\fmax_T\}$ and $\lambda \in \{event, weakly, strongly\}$, 
the \emph{winning region} $\winsure{\lambda}(f)$ is the set of initial distributions such that
$\M$ is sure winning for $\lambda$-synchronizing (we assume that $\M$ is clear from the context). 
We define analogously the sets $\winas{\lambda}(f)$ and $\winlim{\lambda}(f)$.
For a singleton $T = \{q\}$ we have $\fsum_{T} = \fmax_{T}$,
and we simply write $\win{\lambda}{\mu}(q)$ (where $\mu \in \{sure, almost, limit\}$).
It follows from the definitions that 
$\win{strongly}{\mu}(f) \subseteq \win{weakly}{\mu}(f) \subseteq \win{event}{\mu}(f)$
and thus strong and weak synchronization are conservative approximations of eventually synchronization.
It is easy to see that $\winsure{\lambda}(f) \subseteq \winas{\lambda}(f) \subseteq \winlim{\lambda}(f)$,
and for $\lambda = event$ the inclusions are strict~\cite{DMS14}.
In contrast, weak and strong synchronization are more robust
as we show in this paper that the almost-sure and limit-sure winning modes coincide.

\begin{lemma}\label{lem:dif-in-def}
There exists an MDP $\M$ and state $q$ such that
$\winsure{\lambda}(q) \subsetneq \winas{\lambda}(q)$
for $\lambda \in \{weakly, strongly\}$.
\end{lemma}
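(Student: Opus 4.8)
The plan is to exhibit a single, very small MDP that simultaneously witnesses both strict inclusions, relying on the elementary phenomenon ``the mass converges to~$1$ but never equals~$1$''. Consider $\M = \tuple{Q,\Act,\delta}$ with $Q = \{q_0,q\}$, a single action $\Act = \{a\}$, transitions $\delta(q_0,a)(q) = \delta(q_0,a)(q_0) = \frac12$, and $q$ absorbing (so $\delta(q,a)$ is the Dirac distribution on~$q$). Take the initial distribution to be the Dirac distribution $q_0$. Since $\Act$ is a singleton there is a unique strategy $\alpha$, and its symbolic outcome is fully determined: because $q$ is absorbing and each step transfers half of the remaining mass from $q_0$ into $q$, a one-line induction gives $\M^{\alpha}_n(\{q\}) = 1 - 2^{-n}$ (equivalently $\M^{\alpha}_n(\{q_0\}) = 2^{-n}$) for all $n \geq 0$.

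The key observation is that this sequence is strictly increasing and converges to $1$, yet $\M^{\alpha}_n(\{q\}) < 1$ for every~$n$. Hence no distribution in the symbolic outcome is $1$-synchronized in $q$, so $\M$ is \emph{not} sure winning for either weak or strong synchronization: sure winning requires some (respectively, all but finitely many) distributions to be \emph{exactly} $1$-synchronized, which never happens. On the other hand, since $\M^{\alpha}_n(\{q\}) \to 1$, we have both $\liminf_{n\to\infty} \M^{\alpha}_n(\{q\}) = 1$ and $\limsup_{n\to\infty} \M^{\alpha}_n(\{q\}) = 1$, so $\alpha$ is almost-sure winning for strong synchronization in~$q$, and \emph{a fortiori} for weak synchronization. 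Consequently $q_0 \in \winas{strongly}(q) \setminus \winsure{strongly}(q)$ and $q_0 \in \winas{weakly}(q) \setminus \winsure{weakly}(q)$, which is exactly $\winsure{\lambda}(q) \subsetneq \winas{\lambda}(q)$ for $\lambda \in \{weakly, strongly\}$.

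There is no genuine obstacle in this argument; the only point that deserves a moment of care is that a single fixed MDP must handle both values of $\lambda$ at once. This is automatic here, because the same convergence-but-never-equality behaviour of $(1 - 2^{-n})_{n}$ drives both the failure of sure winning (no distribution is ever $1$-synchronized) and the success of almost-sure winning in both the $\liminf$ (strong) and $\limsup$ (weak) formulations. Finally, since $T=\{q\}$ is a singleton we have $\fsum_T = \fmax_T$, so the same witness works regardless of which synchronizing function is used.
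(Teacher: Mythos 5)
Your proposal is correct and is essentially the same as the paper's proof: the paper uses exactly this two-state MDP with a single action, a $\frac12$ self-loop on the initial state, and an absorbing target $q$, and argues as you do that $\M^{\alpha}_n(q) = 1 - 2^{-n}$ is always strictly below $1$ (so sure winning fails) while converging to $1$ (so almost-sure winning holds for both the $\liminf$ and $\limsup$ formulations). No substantive difference in approach.
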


\begin{proof}
Consider the MDP $\M$ with initial state $q_{\init}$ and action set~$\{a\}$ as shown in 
\figurename~\ref{fig:almost-limit-strongly-differ}. 
On action~$a$ in $q_{\init}$, the successor is $q_{\init}$ or $q$ with probability $\frac{1}{2}$,
and~$q$ is an absorbing state.

We show  that 
$q_{\init} \in \winas{strongly}(q)$ and $q_{\init} \not \in \winsure{strongly}(q)$.
Since $\M$ has only a single action, so it is a Markov chain
with a unique possible strategy~$\alpha$: always playing~$a$.
The outcome under~$\alpha$ is such that the probability
to be in~$q$ after~$k$ steps is $1-\frac{1}{2^k}$ for all~$k$, showing that $\M$
is almost-sure winning for the strongly synchronizing objective in~$\{q\}$ (from~$q_{\init}$).
On the other hand, $q_{\init} \not \in \winsure{strongly}(q_1)$ because
under~$\alpha$, the probability in~$q_{\init}$ remains always positive,
and thus in~$q$ we have $\M^{\alpha}_n(q) < 1$ for all $n \geq 0$,
showing that~$\M$ is not sure winning for the strongly synchronizing 
objective in~$\{q\}$ (from~$q_{\init}$).
The same argument holds for weakly synchronizing objective.
\qed
\end{proof}

\begin{exclude}
\begin{figure}[t]
\begin{center}
    \begin{picture}(30,15)

\node[Nmarks=i,iangle=180](n0)(0,12){$q_{\init}$}
\node[Nmarks=r](n1)(20,12){$q$}

\drawedge(n0,n1){$a: \frac{1}{2}$}
\drawloop[ELside=l,loopCW=y, loopangle=-90, loopdiam=4](n0){$a:\frac{1}{2}$}
%
\drawloop[ELside=l,loopCW=y, loopangle=-90, loopdiam=4](n1){$a$}


\end{picture}
\end{center}
 \caption{An MDP~$\M$  such that $\winsure{\lambda}(q) \neq \winas{\lambda}(q)$ 
for $\lambda \in \{weakly, strongly\}$.
\label{fig:almost-limit-strongly-differ}}
\end{figure}
\end{exclude}

The \emph{membership problem} is to decide,
given an initial probability distribution $d_0$, whether $d_0 \in \win{\lambda}{\mu}(f)$.
It is sufficient to consider Dirac initial distributions 
(i.e., assuming that MDPs have a single initial state) because the answer to 
the general membership problem for an MDP $\M$ with initial distribution $d_0$ 
can be obtained by solving the membership problem for a copy of $\M$ with a
new initial state from which the successor distribution 
on all actions is $d_0$. 

For eventually synchronizing, the membership problem is PSPACE-complete for all winning modes~\cite{DMS14}. 
In this paper, we show that the complexity of the 
membership problem is PSPACE-complete for weak synchronization, and even 
PTIME-complete for strong synchronization. 
The complexity results are summarized in Table~\ref{tab:complexity},
and we present the memory requirement for winning strategies in Table~\ref{tab:memory}.
\section{Weak Synchronization}

We establish the complexity and memory requirement for weakly synchronizing 
objectives. We show that the membership problem is PSPACE-complete for 
sure and almost-sure winning, that exponential memory is necessary and sufficient
for sure winning while infinite memory is necessary for almost-sure winning,
and we show that limit-sure and almost-sure winning coincide. 


\subsection{Sure weak synchronization}
The PSPACE upper bound of the membership problem for sure weak synchronization is
obtained by the following characterization.


\begin{lemma}\label{lem: suWS}
Let $\M$ be an MDP and $T$ be a target set. For all states $q_{\init}$, 
we have $q_{\init} \in \winsure{weakly}(\fsum_T)$
if and only if there exists a set $S \subseteq T$ such that 
$q_{\init} \in \Pre^{m}(S)$ for some $m \geq 0$ and $S \subseteq \Pre^{n}(S)$
for some $n \geq 1$.
\end{lemma}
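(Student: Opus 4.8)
The plan is to reduce sure weak synchronization entirely to the dynamics of the \emph{support} of the symbolic outcome, and to recognize the controllable predecessor operator $\Pre$ as the right tool. The starting observation is that for $f = \fsum_T$, a distribution $X$ is $1$-synchronized exactly when $\fsum_T(X) = \sum_{q \in T} X(q) = 1$, i.e. when $\Supp(X) \subseteq T$. Hence $q_{\init} \in \winsure{weakly}(\fsum_T)$ means there is a strategy $\alpha$ with $\Supp(\M^{\alpha}_n) \subseteq T$ for infinitely many $n$. The second ingredient I will establish (a standard fact about the subset construction) is that $q \in \Pre^{k}(S)$ holds if and only if there is a strategy from $q$ that forces the support to be contained in $S$ after exactly $k$ steps. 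This is proved by induction on $k$ using $\Pre^{k}(S) = \Pre(\Pre^{k-1}(S))$: an action $a$ with $\post(q,a) \subseteq \Pre^{k-1}(S)$ lets the controller commit to the first step and then finish, via the (history-dependent) continuation, from each successor.

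For the $(\Leftarrow)$ direction I would build a strategy in two phases. From $q_{\init} \in \Pre^{m}(S)$ I play the forcing strategy that brings the support into $S \subseteq T$ at time $m$, yielding a first $1$-synchronized distribution. From then on, since every $s \in S$ lies in $\Pre^{n}(S)$, I repeatedly apply, once the support is inside $S$, the per-state forcing witnesses (routing on the current last state of the history) to return the support into $S$ after exactly $n$ steps. This produces $\Supp(\M^{\alpha}_{m + jn}) \subseteq S \subseteq T$ for all $j \geq 0$, hence infinitely many $1$-synchronized distributions, so $\alpha$ is sure weakly winning.

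For the $(\Rightarrow)$ direction, take a sure weakly winning $\alpha$ and let $t_1 < t_2 < \cdots$ be the infinitely many steps with $\Supp(\M^{\alpha}_{t_i}) \subseteq T$. Since there are only finitely many subsets of $T$, by the pigeonhole principle some fixed set $S \subseteq T$ equals $\Supp(\M^{\alpha}_{t})$ for infinitely many such $t$; fix two of them, $t < t'$, and set $m = t$ and $n = t' - t \geq 1$. Reading off the first witness gives $q_{\init} \in \Pre^{m}(S)$, since the initial support $\{q_{\init}\}$ is forced into $S$ after $m$ steps by $\alpha$. For the recurrence, the key step is that each state $s \in S = \Supp(\M^{\alpha}_{t})$ is reached by some positive-probability history $\rho_s$ of length $t$, and the continuation of $\alpha$ after $\rho_s$ forces, in exactly $n$ steps, every reachable state into $\Supp(\M^{\alpha}_{t'}) = S$; by the $\Pre$-characterization this yields $s \in \Pre^{n}(S)$, hence $S \subseteq \Pre^{n}(S)$.

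The main obstacle I anticipate is the $(\Rightarrow)$ direction: one must ensure that a single history-dependent strategy $\alpha$ realizes, simultaneously and synchronously in time, a forcing witness out of \emph{every} state of the recurrent support $S$. This is precisely why the fixed gap $n = t' - t$ (rather than varying gaps) matters, and why working at the level of supports—where distinct states of the support correspond to distinct histories on which $\alpha$ may act independently—lets the per-state witnesses be glued into the global $\Pre^{n}$ statement. Making precise the equivalence between sure forcing of the support and the iterated $\Pre$ operator is the one technical point that deserves the careful inductive argument sketched above.
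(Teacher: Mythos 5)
Your proof is correct and follows essentially the same route as the paper's: for the forward direction, pigeonhole on a recurring support $S \subseteq T$ and read off $m$ from its first occurrence and $n = t'-t$ from two occurrences, and for the converse, a two-phase strategy that forces the support into $S$ in $m$ steps and then recirculates it within $S$ every $n$ steps. The only difference is that the paper invokes a cited result (Lemma~4 of the eventually-synchronizing paper) for the equivalence between membership in $\Pre^{k}(S)$ and the existence of a strategy forcing the support into $S$ in exactly $k$ steps, whereas you prove that characterization inline by induction on $k$ — the same content, just made self-contained.
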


\begin{proof}

First, if $q_{\init} \in \winsure{weakly}(\fsum_T)$, then let $\alpha$ be
a sure winning weakly synchronizing strategy.
Then there are infinitely many positions $n$ such that
$\M^{\alpha}_n(T)=1$, and since the state space is finite, 
there is a set $S$ of states 
such that for infinitely many positions~$n$ we have 
$\Supp(\M^{\alpha}_n) = S$ and $\M^{\alpha}_n(T)=1$, and thus $S \subseteq T$.
By the result of~\cite[Lemma~4]{DMS14}, it follows that 
$q_{\init} \in \Pre^{m}(S)$ for some $m \geq 0$, and 
by considering two positions $n_1 < n_2$ 
where $\Supp(\M^{\alpha}_{n_1}) = \Supp(\M^{\alpha}_{n_2}) = S$,
it follows that $S \subseteq \Pre^{n}(S)$ for $n  = n_2 - n_1 \geq 1$. 
%
%

The reverse direction is straightforward by considering
a strategy $\alpha$ that ensures $\M^{\alpha}_m(S)=1$
for some $m \geq 0$, and then ensures that the probability mass 
from all states in $S$ remains in $S$ after every multiple of 
$n$ steps where $n>0$ is such that $S \subseteq \Pre^{n}(S)$,
showing that $\alpha$ is a sure winning weakly synchronizing strategy
in $S$ (and thus in $T$) from $q_{\init}$, thus $q_{\init} \in \winsure{weakly}(\fsum_T)$.
\qed
\end{proof}

The PSPACE upper bound follows from the characterization in 
Lemma~\ref{lem: suWS}. A (N)PSPACE algorithm is to guess the set $S \subseteq T$,
and the numbers $m,n$ (with $m,n \leq 2^{\abs{Q}}$ since the sequence $\Pre^{n}(S)$ of 
predecessors is ultimately periodic), and check that $q_{\init} \in \Pre^{m}(S)$ and
$S \subseteq \Pre^{n}(S)$.
%
%
The PSPACE lower bound  follows from the PSPACE-completeness of the
membership problem for sure eventually synchronization~\cite[Theorem~2]{DMS14}.

\begin{lemma}\label{lem: sure-weakly-psapce-hradness}
The membership problem for $\winsure{weakly}(\fsum_T)$ is PSPACE-hard
even if $T$ is a singleton.
\end{lemma}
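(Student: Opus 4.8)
The plan is to reduce from the membership problem for sure eventually synchronization, which is known to be PSPACE-complete by \cite[Theorem~2]{DMS14}. Given an MDP $\M$ with a single initial state $q_{\init}$ and a singleton target $\{q\}$, I want to build a new MDP $\M'$ (and designate a single target state) such that $\M$ is sure winning for eventually synchronizing in $\{q\}$ if and only if $\M'$ is sure winning for weakly synchronizing in the new target. The intuition is that weak synchronization requires the mass to synchronize \emph{infinitely often}, whereas eventual synchronization only requires it to synchronize \emph{once}; so the reduction must add a mechanism that, once the mass has fully synchronized in $q$, allows it to be re-synchronized periodically forever.

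The natural construction is the following. First I would make the target state $q$ absorbing (or route it so that from $q$ all mass can be sent back into a ``reset'' gadget that re-injects it at $q_{\init}$ in a single synchronized step). Concretely, I add a fresh action or a fresh absorbing-style loop so that from $q$, on some action, the entire probability mass is deterministically moved back to $q_{\init}$ after a fixed number of steps. Then any strategy that is sure eventually synchronizing in $\{q\}$ reaches the distribution $q$ (Dirac on $q$) at some step $m$; from there the reset gadget brings all the mass back to $q_{\init}$, from which the same eventual-synchronizing strategy can be replayed, producing $1$-synchronized distributions at $q$ at steps $m, 2m', 3m', \dots$ for the appropriate period, hence infinitely often. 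This establishes that sure eventual synchronization in the original MDP implies sure weak synchronization in $\M'$. For the converse, if $\M'$ is sure weakly synchronizing then in particular some distribution is $1$-synchronized at $q$, so the mass reaches $q$ with probability $1$ at some finite step, which witnesses sure eventual synchronization in the original $\M$ (restricted to the original states, the added gadget does not help reach $q$ earlier).

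The main care is to ensure the reduction preserves the \emph{single-initial-state} and \emph{singleton-target} restrictions, and that the gadget itself does not spuriously create $1$-synchronization at $q$ in the ``no'' instances. The cleanest way is to keep the gadget deterministic (Dirac transitions) so it moves mass without dispersing it, guaranteeing that the only way to achieve $\M'^{\alpha}_n(q)=1$ is to have genuinely synchronized in the original part of the MDP. I must also verify that the period is well-defined: after reset, the outcome distribution is again Dirac on $q_{\init}$, so the symbolic outcome is eventually periodic and the $1$-synchronized distributions recur at a fixed period, which is exactly what weak synchronization demands. The hardest part will be arguing the converse direction rigorously, namely that sure weak synchronization in $\M'$ cannot be achieved by exploiting the reset gadget in a way that bypasses genuine synchronization in the original MDP; this requires checking that the gadget contributes a $1$-synchronized distribution at $q$ only as a consequence of a prior full synchronization in $\M$, so that the existence of \emph{one} $1$-synchronized distribution already yields sure eventual synchronization in the original instance. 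Once this is established, PSPACE-hardness of sure eventual synchronization transfers directly, and since the target remains a singleton the statement follows.
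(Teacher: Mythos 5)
Your high-level route is the same as the paper's: reduce from the membership problem for sure \emph{eventually} synchronization with a singleton target (PSPACE-complete by~\cite[Theorem~2]{DMS14}) by adding a fresh ``reset'' action $\sharp$ that sends a fully synchronized mass from the target back to $q_{\init}$, so that one-shot synchronization can be repeated forever. But two essential ingredients are missing, and the choices you leave open include ones that make the reduction false. First, since the transition function is total, the fresh action $\sharp$ must be defined at \emph{every} state, and correctness hinges on what it does at non-target states. The paper sends $\sharp$ from every state except the target to a fresh absorbing state $\sink$, so that a strategy playing $\sharp$ anywhere except at the target loses that probability mass forever. Your guiding principle (``keep the gadget deterministic \dots so it moves mass without dispersing it'') is compatible with letting $\sharp$ self-loop at non-target states, and then the reduction fails: $\sharp$ becomes a ``wait'' move that strictly increases the controller's power. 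Concretely, take $\M$ with one action $a$, with $\delta(q_{\init},a)$ uniform on $\{s_1,s_2\}$, and Dirac transitions $s_1 \to q$, $s_2 \to s_3$, $s_3 \to q$, $q \to d$ with $d$ absorbing: $\M$ is not sure eventually synchronizing in $\{q\}$ (the two halves of the mass visit $q$ at different times), yet with a self-looping $\sharp$ the strategy that waits one step at $s_1$ gets all the mass into $q$ simultaneously, and with the reset repeats this forever --- a false positive. Your other suggestion, making $q$ absorbing, is wrong on the same example (and in general: it turns eventual synchronization, which requires simultaneity, into mere sure reachability).

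Second, the converse direction is asserted rather than proved, and the assertion as phrased is not sufficient. From $\M'^{\alpha}_n(q)=1$ you conclude that this ``witnesses sure eventual synchronization in the original $\M$'', but $\alpha$ is a strategy of $\M'$: it may have played $\sharp$ at earlier steps (partial or full resets), so it is not a strategy of $\M$, and one cannot simply restrict it ``to the original states''. The missing step --- which is exactly how the paper closes the argument --- is a splicing argument: let $m$ be the \emph{largest} position before the synchronization point $n$ at which $\sharp$ is in the support of $\alpha$ on some path-outcome. Because $\sharp$ played anywhere but the target leads to the absorbing sink (and the sink cannot carry mass at a step where all mass sits in the target), this last $\sharp$ must be played at the target, so the corresponding path-outcome is at $q_{\init}$ one step later; by maximality of $m$, from this prefix on, $\alpha$ uses only actions of $\Act$ until step $n$. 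The suffix of $\alpha$ after this prefix is therefore a valid strategy of $\M$ from $q_{\init}$ that is sure eventually synchronizing in $\{q\}$. (The paper additionally synchronizes in a fresh state $\p$ reachable only from $\q$ by $\sharp$, which makes this bookkeeping cleaner; keeping $q$ itself as the target can be made to work, but only once the sink is in place and the splicing argument is carried out.)
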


\begin{proof}
The proof is by a reduction from the membership problem for
$\winsure{event}(\fsum_T)$ with a singleton~$T$, which is 
PSPACE-complete~\cite[Theorem~2]{DMS14}.  
From an MDP~$\M=\tuple{Q,\Act,\delta}$ with initial state~$q_{\init}$
and target state~$\q$, 
we construct another MDP~$\N=\tuple{Q',\Act',\delta'}$ and a target state~$\p$ 
such that $q_{\init}\in \winsure{event}(\q)$ 
in~$\M$ if and only if $q_{\init}\in \winsure{weakly}(\p)$ in~$\N$.

The MDP~$\N$ is a copy of $\M$ with two new states~$\p$ and $\sink$ 
that are reachable only by a new action~$\sharp$ (see \figurename~\ref{fig:sure-ws-reduction}). 
Formally, $Q' = Q \cup \{\p, \sink\}$ and $\Act' = \Act \cup \{\sharp\}$. 
The transition function $\delta'$ is defined as follows:
$\delta'(q,a) = \delta(q,a)$ for all states $q \in Q$ and $a \in \Act$,
$\delta(q,\sharp)(\sink) = 1$ for all $q \in Q' \setminus \{\q\}$ and $\delta(\q,\sharp)(\p) = 1$.
The state $\sink$ is absorbing and from state~$\p$ all other transitions lead to 
the initial state, i.e. $\delta(\sink,a)(\sink)=1$ and 
$\delta(\p,a)(q_{\init})=1$ for all $a\in \Act$.

\begin{exclude}
\begin{figure}[t]
\begin{center}
    \begin{picture}(115,44)(0,2)

\node[Nmarks=n, Nw=40, Nh=22, dash={0.2 0.5}0](m1)(18,24){}
\node[Nframe=n](label)(6,31){MDP $\M$}

\node[Nmarks=i](n1)(8,22){$q_{\init}$}
\node[Nmarks=r](n1)(32,22){$\q$}
\node[Nmarks=n](n2)(20,22){$q$}
\node[Nframe=n](arrow)(48,24){{\Large $\Rightarrow$}}

\node[Nmarks=n, Nw=40, Nh=22, dash={0.2 0.5}0](nm1)(78,24){}
\node[Nmarks=n, Nw=57, Nh=40, dash={0.4 1}0](m2)(84,20){}
\node[Nframe=n](label)(65,37){MDP $\N$}

\node[Nframe=n](label)(66,31){MDP $\M$}
\node[Nmarks=i](n0)(68,22){$q_{\init}$}
\node[Nmarks=r](n1)(92,22){$\q$}
\node[Nmarks=n](n2)(80,22){$q$}

\node[Nmarks=n](end)(75,5){$\sink$}
\node[Nmarks=n](qq)(106,22){$\p$} 

\drawloop[ELside=l,loopCW=y, loopangle=180, loopdiam=4](end){$\Act'$}

\drawedge[ELpos=40, ELside=l, curvedepth=0](n0,end){$\sharp$}
\drawedge[ELpos=40, ELside=l, curvedepth=0](n2,end){$\sharp$}
\drawedge[ELpos=50, ELside=r, curvedepth=0](n1,qq){$\sharp$}
\drawedge[ELpos=50, ELside=r, curvedepth=-8](qq,n0){$\Act$}
\drawedge[ELpos=50, ELside=l, curvedepth=7](qq,end){$\sharp$}

\end{picture}
\end{center}
 \caption{The reduction sketch to show PSPACE-hardness of the emptiness problem 
for sure weak synchronization in MDPs.}
\label{fig:sure-ws-reduction}
\end{figure}
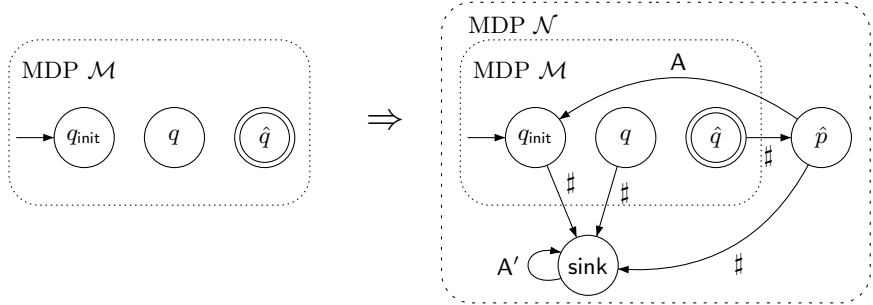
\end{exclude}

We establish the correctness of the reduction as follows.
First, if $q_{\init} \in \winsure{event}(\q)$ in~$\M$, then let $\alpha$
be a sure winning strategy in $\M$ for eventually synchronization in $\{\q\}$.
A sure winning strategy in $\N$ for weak synchronization in $\{\p\}$ is 
to play according to $\alpha$ until the whole probability mass is in $\q$,
then play~$\sharp$ followed by some $a \in \A$ to visit $\p$ and get back to the initial state $q_{\init}$,
and then repeat the same strategy from $q_{\init}$. Hence $q_{\init} \in \winsure{weakly}(\p)$
in $\N$.

Second, if $q_{\init} \in \winsure{weakly}(\p)$ in~$\N$, then consider 
a strategy $\alpha$ such that $\N^{\alpha}_n(\p) = 1$ for some $n \geq 0$.
By construction of $\N$, it follows that $\N^{\alpha}_{n-1}(\q)=1$,
that is all path-outcomes of $\alpha$ of length $n-1$ reach $\q$, and
$\alpha$ plays $\sharp$ in the next step. 
If $\alpha$ never plays $\sharp$ before position $n-1$, then $\alpha$
is a valid strategy in $\M$ up to step $n-1$ and it shows that 
$q_{\init} \in \winsure{event}(\q)$ is sure winning in $\M$ for eventually synchronization in $\{\q\}$.
Otherwise let $m$ be the largest number such that 
there is a finite path-outcome $\rho$ of $\alpha$ of length $m < n-1$ such that 
$\sharp \in \Supp(\alpha(\rho))$. 
Note that the action $\sharp$ can be played by $\alpha$ only in the state $\q$, 
and thus the initial state is reached again after one more step. 
It follows that in some path-outcome $\rho'$ of $\alpha$ 
of length $m+2$, we have $\Last(\rho') = q_{\init}$, and by the choice of $m$,
the action $\sharp$ is not played by $\alpha$ until position $n-1$ where
all the probability mass is in~$\q$. 
Hence the strategy that plays like $\alpha$ from~$\rho'$ in~$\N$ 
is a valid strategy from $q_{\init}$ in $\M$, and is a witness that 
$q_{\init} \in \winsure{event}(\q)$.
\qed
\end{proof}



\begin{exclude}
\begin{figure}[!t]
\begin{center}
\def\fsize{\normalsize}

\begin{picture}(75,72)(0,0)

{\fsize

\node[Nmarks=i, iangle=180](q0)(9,30){$q_{\init}$}
\node[Nmarks=n](q1)(29,51){$q^1_1$}
\node[Nmarks=n](q2)(49,51){$q^1_2$}

\node[Nmarks=n](q3)(29,13){$q^2_1$}
\node[Nmarks=n](q4)(49,23){$q^2_2$}
\node[Nmarks=n](q5)(49,3){$q^2_3$}

\node[Nmarks=r](safe)(69,30){$q_T$}

\node[Nmarks=n, Nw=30, Nh=18, dash={1.5}0, ExtNL=y, NLangle=22, NLdist=1](A1)(39,51){$H_1$}
\node[Nmarks=n, Nw=30, Nh=32, dash={1.5}0, ExtNL=y, NLangle=38, NLdist=1](A2)(39,13){$H_2$}


\drawedge[ELpos=43, ELside=l, ELdist=1, curvedepth=0](q0,q1){$a,b: \frac{1}{2}$}
\drawedge[ELpos=40, ELside=r, ELdist=1, curvedepth=0](q0,q3){$a,b: \frac{1}{2}$}

\drawedge[ELpos=50, ELside=l, ELdist=1, curvedepth=4](q1,q2){$a$}
\drawedge[ELpos=50, ELside=l, ELdist=1, curvedepth=4](q2,q1){$a$}

\drawedge[ELpos=50, ELside=l, ELdist=1, curvedepth=4](q3,q4){$a$}
\drawedge[ELpos=50, ELside=r, ELdist=1.5, curvedepth=4](q4,q5){$a$}
\drawedge[ELpos=40, ELside=l, ELdist=1, curvedepth=4](q5,q3){$a$}

\drawedge[ELpos=50, ELside=l, ELdist=1, curvedepth=0](q2,safe){$b$}
\drawedge[ELpos=50, ELside=r, ELdist=1, curvedepth=0, syo=-3](q5,safe){$b$}



\drawline[AHnb=1, arcradius=1.5](69,34)(69,64)(5,64)(5,34)(6.2,32.8)
\node[Nframe=n](label)(37,66.5){$a,b$}



}
\end{picture}
\caption{The MDP $\M_2$.}\label{fig:exp-mem-weakly}
\end{center}
\end{figure}
\end{exclude}

The proof of Lemma~\ref{lem: suWS} suggests an exponential-memory strategy
for sure weakly synchronization that in $q \in \Pre^{n}(S)$ plays
an action $a$ such that $\post(q,a)\subseteq \Pre^{n-1}(S)$, which  
can be realized with exponential memory since $n \leq 2^{\abs{Q}}$.
It can be shown that exponential memory is necessary in general. 
The argument is very similar to the proof
of exponential memory lower bound for sure eventually synchronization~\cite[Section 4.1]{DMS14}.
For the sake of completeness, we present 
a family of MDPs $\M_n$ ($n \in \nat$) over alphabet $\{a,b\}$ 
that are sure winning for weak synchronization, and where the sure winning strategies 
require exponential memory. The MDP $\M_2$ is shown in \figurename~\ref{fig:exp-mem-weakly}.
The structure of $\M_n$ is an initial uniform probabilistic transition
to $n$ components $H_1, \dots, H_n$ where $H_i$ is a cycle of length $p_i$
the $i$-th prime number. On action $a$, the next state in the cycle is reached,
and on action $b$ the target state $q_T$ is reached, only from the last
state in the cycles. From other states, 
the action $b$ leads to an absorbing  sink state (transitions not depicted).
A sure winning strategy from $q_{\init}$ for weak synchronization in $\{q_T\}$ is to 
play $a$ in the first $p^{\#}_n = \prod_{i=1}^{n} p_i$ steps, and then play $bb$
to reach $q_{\init}$ again, through  $q_T$.
This requires memory of size $p^{\#}_n > 2^n$ while the size of $\M_n$
is in $O(n^2 \log n)$~\cite{BS96}.
It can be proved that all winning strategies for weak synchronization
need to be, from $q_{\init}$, sure eventually synchronizing in $\{q_T\}$
(consider the last occurrence of $q_{\init}$ along a play before all
the probability mass is in $q_T$) and this requires memory of size at least
$p^{\#}_n$ by standard pumping arguments as in~\cite{DMS14}.

\begin{theorem} \label{theo:weakly-sure}
For sure weak synchronization in MDPs:
\begin{enumerate}
\item  (Complexity). The membership problem is PSPACE-complete.
\item  (Memory). Exponential memory is necessary and sufficient for both pure and
randomized strategies, and pure strategies are sufficient.
\end{enumerate}
\end{theorem}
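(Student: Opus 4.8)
The plan is to assemble the results already established into the two parts of the statement. For the complexity claim I would combine matching upper and lower bounds. The PSPACE membership upper bound follows directly from the characterization of Lemma~\ref{lem: suWS}: there is a nondeterministic procedure that guesses a set $S \subseteq T$ together with exponents $m$ and $n$, both bounded by $2^{\abs{Q}}$ since the sequence $(\Pre^{k}(S))_{k}$ of predecessor sets is ultimately periodic, and then verifies the two conditions $q_{\init} \in \Pre^{m}(S)$ and $S \subseteq \Pre^{n}(S)$ in polynomial space (iterating the $\Pre$ operator while keeping only the current set and a counter of polynomially many bits). This is an NPSPACE algorithm, and by Savitch's theorem $\text{NPSPACE} = \text{PSPACE}$. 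The matching lower bound is exactly Lemma~\ref{lem: sure-weakly-psapce-hradness}, which gives PSPACE-hardness already for a singleton target. Hence the membership problem is PSPACE-complete.

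For the memory part I would first settle sufficiency together with purity. The reverse direction of Lemma~\ref{lem: suWS} yields a concrete winning strategy: first drive the full mass into $S$ in $m$ steps, then repeatedly execute an $n$-step descent in which, at a state $q$ currently in $\Pre^{k}(S)$, one plays a deterministic action $a$ with $\post(q,a) \subseteq \Pre^{k-1}(S)$, so that after $n$ steps the whole mass is back in $\Pre^{0}(S) = S \subseteq \Pre^{n}(S)$ and the descent restarts. Since $m, n \leq 2^{\abs{Q}}$, this strategy is realized by a finite-state transducer whose memory records the current level index, i.e. exponential memory; and since each prescribed move is a Dirac choice of a single action, the strategy is pure. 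Thus exponential memory and pure strategies suffice.

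The main obstacle is the necessity of exponential memory, which I would establish via the family $\M_n$ sketched after Lemma~\ref{lem: suWS}. The MDP $\M_n$ splits uniformly into cycles $H_1,\dots,H_n$ of pairwise-coprime lengths $p_1,\dots,p_n$ (the first $n$ primes), and the only way to collect the whole probability mass in $q_T$ is to play $a$ for a number of steps divisible by every $p_i$, hence a multiple of $p^{\#}_n = \prod_{i=1}^{n} p_i$, before the synchronizing move. The key lemma to prove is that any winning strategy must, between the last visit of $q_{\init}$ and the moment the mass reaches $q_T$, behave like a sure \emph{eventually} synchronizing strategy in $\{q_T\}$; a standard pumping argument then shows that a strategy using fewer than $p^{\#}_n$ memory modes would repeat a configuration and thereby desynchronize one of the cycles. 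Since $p^{\#}_n > 2^{n}$ while $\abs{\M_n} \in O(n^{2}\log n)$, this forces memory exponential in the size of the MDP, matching the upper bound. The delicate point is making the pumping argument precise across the asynchronous cycles, but it follows the template of the eventually-synchronizing lower bound of~\cite{DMS14}.
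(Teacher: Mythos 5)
Your proposal is correct and follows essentially the same route as the paper: the PSPACE upper bound from the characterization of Lemma~\ref{lem: suWS} (guessing $S$, $m$, $n \leq 2^{\abs{Q}}$), the lower bound from Lemma~\ref{lem: sure-weakly-psapce-hradness}, the exponential-memory pure strategy read off from the reverse direction of Lemma~\ref{lem: suWS}, and the necessity argument via the family $\M_n$ of prime-length cycles with the observation that any winning strategy must be sure eventually synchronizing in $\{q_T\}$ from the last visit of $q_{\init}$, settled by the pumping argument of~\cite{DMS14}. No gaps to report.
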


\subsection{Almost-sure weak synchronization}\label{sec:almost-weakly}


We present a characterization of almost-sure weak synchronization that 
gives a PSPACE upper bound for the membership problem. Our characterization
uses the limit-sure eventually synchronizing objectives \emph{with exact support}~\cite{DMS14}.  
This objective requires that the probability mass tends to $1$ in a target
set $T$, and moreover that after the same number of
steps the support of the probability distribution is contained in a
given set $U$.  Formally, given an MDP $\M$, let
$\winlim{event}(\fsum_T,U)$ for $T \subseteq U$ be the set of all
initial distributions such that for all $\epsilon>0$ there exists a
strategy~$\alpha$ and $n \in \nat$ such that $\M^{\alpha}_n(T) \geq
1-\epsilon$ and $\M^{\alpha}_n(U)=1$.

We show that 
an MDP is almost-sure weakly synchronizing in target $T$ if (and only if),
for some set $U$, there is a sure eventually synchronizing strategy in target $U$,
and from the probability distributions with support $U$ there is a limit-sure
winning strategy for eventually synchronizing in $\Pre(T)$ with support in $\Pre(U)$.
This ensures that from the initial state we can have the whole probability mass in~$U$, and from~$U$ 
have probability $1-\epsilon$ in $\Pre(T)$ (and in $T$ in the next step), 
while the whole probability mass is back in $\Pre(U)$ (and in $U$ in the next step), 
allowing to repeat the strategy for $\epsilon \to 0$, thus ensuring infinitely
often probability at least $1-\epsilon$ in $T$ (for all $\epsilon > 0$).

\begin{lemma}\label{lem: almost-weak-reduce-limit-event}
Let $\M$ be an MDP and $T$ be a target set. For all states $q_{\init}$, 
we have $q_{\init}\in \winas{weakly}(\fsum_T)$
if and only if there exists a set~$U$ such that
\begin{itemize}
\item $q_{\init} \in \winsure{event}(\fsum_U)$, and \smallskip
\item $d_U \in \winlim{event}(\fsum_{\Pre(T)},\Pre(U))$
where $d_U$ is the uniform distribution over~$U$.
\end{itemize}
\end{lemma}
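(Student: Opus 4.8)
The plan is to prove both implications, treating the ``if'' direction as a direct construction following the intuition stated above and spending most effort on the ``only if'' direction, which is the real obstacle. A tool used in both directions is a \emph{support-invariance} observation for limit-sure eventually synchronizing with exact support: for a \emph{fixed} strategy $\beta$ the distribution $\M^{\beta}_j$ is a linear function of the initial distribution, so $\M^{\beta}_j = \sum_{q} d_0(q)\,\M^{\beta,q}_j$ where $\M^{\beta,q}_j$ is the outcome from the Dirac distribution on $q$. Consequently, if $d,d'$ have the same support $U$ then membership of $d$ and $d'$ in $\winlim{event}(\fsum_{V},W)$ coincides, and any $d$ with $\Supp(d)\subseteq U$ is winning whenever the uniform distribution $d_U$ is: from $d\le \abs{U}\,d_U$ (pointwise on $U$) we get $\M^{\beta,d}_j(Q\setminus W)\le \abs{U}\,\M^{\beta,d_U}_j(Q\setminus W)$ and likewise for the target, so the support constraint is preserved exactly while the target mass degrades only by the factor $\abs{U}$, which is absorbed as $\epsilon\to 0$.

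For the ``if'' direction I would build a single (infinite-memory) strategy in phases, assuming w.l.o.g.\ that $T\subseteq U$. A preliminary phase uses the sure eventually synchronizing strategy for $\fsum_U$ to reach, at some step, a distribution with all mass in $U$. Then for $k=1,2,\dots$ the strategy runs episode $k$: starting from the current distribution (support $\subseteq U$, hence winning for the limit-sure objective by support-invariance), it applies a witness strategy for $\winlim{event}(\fsum_{\Pre(T)},\Pre(U))$ with parameter $\epsilon_k\to 0$ to reach a step with mass $\ge 1-\epsilon_k$ in $\Pre(T)$ and all mass in $\Pre(U)$, and then plays one deterministic step routing each support state of $\Pre(T)$ into $T\subseteq U$ and each remaining support state of $\Pre(U)$ into $U$. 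This yields a distribution with mass $\ge 1-\epsilon_k$ in $T$ and all mass back in $U$, from which episode $k+1$ starts. The produced symbolic outcome then has $\M^{\alpha}_{n_k}(T)\ge 1-\epsilon_k$ for a sequence $n_k\to\infty$, so $\limsup_n \M^{\alpha}_n(T)=1$, i.e.\ $q_{\init}\in\winas{weakly}(\fsum_T)$.

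For the ``only if'' direction, let $\alpha$ be almost-sure weakly synchronizing, so $\limsup_n \M^{\alpha}_n(T)=1$; fix $\epsilon_i\to 0$ and positions $m_i$ with $\M^{\alpha}_{m_i}(T)\ge 1-\epsilon_i$. Since $Q$ is finite there are only finitely many possible supports, so some set $U$ satisfies $\Supp(\M^{\alpha}_{m_i})=U$ for infinitely many $i$; I restrict to this subsequence and take this $U$ as the witness. The condition $q_{\init}\in\winsure{event}(\fsum_U)$ is immediate, since $\Supp(\M^{\alpha}_{m_i})=U$ means $\M^{\alpha}_{m_i}(U)=1$. For the limit-sure condition the crucial quantitative step is that a single transition cannot push much mass into $T$ from states that do not surely enter $T$: writing $\eta$ for the least positive transition probability, any support state $q$ at step $m_i-1$ with $q\notin\Pre(T)$ leaks at least $\eta$ mass outside $T$, so $\M^{\alpha}_{m_i}(T)\ge 1-\epsilon_i$ forces $\M^{\alpha}_{m_i-1}(\Pre(T))\ge 1-\epsilon_i/\eta$. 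Moreover every support state at step $m_i-1$ has all its successors in $\Supp(\M^{\alpha}_{m_i})=U$, hence lies in $\Pre(U)$, giving $\M^{\alpha}_{m_i-1}(\Pre(U))=1$. Thus from $\M^{\alpha}_{m_*}$ (a fixed position with support $U$) the residual of $\alpha$ reaches, for each later $m_i$, a step with mass $\ge 1-\epsilon_i/\eta$ in $\Pre(T)$ and full mass in $\Pre(U)$; since $\M^{\alpha}_{m_*}$ has support $U$, support-invariance transfers this to $d_U$, establishing $d_U\in\winlim{event}(\fsum_{\Pre(T)},\Pre(U))$.

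I expect the ``only if'' direction to be the main obstacle, and specifically two points within it: isolating a single recurrent support $U$ that simultaneously carries a high-$T$-mass distribution and is surely reachable, and the quantitative one-step argument converting the given high $T$-mass into high $\Pre(T)$-mass one step earlier, which is exactly what makes the \emph{predecessor} formulation of the condition correct and is where finiteness of the minimum transition probability $\eta$ is essential. The support-invariance lemma, although routine via linearity, is the glue that lets us pass between the actual outcome distributions of $\alpha$ and the uniform distribution $d_U$ of the statement; I also need the minor bookkeeping that $T\subseteq U$ may be assumed, since states of $T$ carrying no mass in the recurrent configuration are irrelevant to the value of $\fsum_T$ along outcomes with support in $U$.
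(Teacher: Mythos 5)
Your proposal mirrors the paper's own proof almost exactly: the same extraction of a recurring support $U$ at positions of high $T$-mass, the same one-step-back passage to $\Pre(T)$ and $\Pre(U)$ (where you supply the $\eta$-argument the paper leaves implicit), the same transfer to $d_U$ via support-invariance (which the paper gets by citing \cite[Corollary~1]{DMS14} and you reprove by linearity), and the same episode construction for the converse direction.

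However, the step you dismiss as ``minor bookkeeping'' is a genuine gap: in the ``if'' direction you may \emph{not} assume $T\subseteq U$. There $U$ is an arbitrary given witness, and the hypothesis speaks of $\Pre(T)$, not of $\Pre(T\cap U)$; your justification (states of $T$ carrying no mass in the recurrent configuration are irrelevant) is an argument about the $U$ \emph{constructed} in the ``only if'' direction and says nothing about a given one. Concretely, take $Q=\{q_{\init},p,t,s\}$ with actions $a,b$ and Dirac transitions $q_{\init}\xrightarrow{a,b}p$, $p\xrightarrow{a}p$, $p\xrightarrow{b}t$, $t\xrightarrow{a,b}s$, $s\xrightarrow{a,b}s$, and let $T=\{t\}$, $U=\{p\}$. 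Then $\Pre(T)=\{p\}\subseteq\Pre(U)=\{q_{\init},p\}$, the first condition holds, and the second condition holds as well (staying in $p$ keeps mass $1$ in $\Pre(T)$ with support in $\Pre(U)$ at every step), yet $q_{\init}\notin\winas{weakly}(\fsum_T)$: every path visits $t$ at most once, so $\sum_n \M^{\alpha}_n(t)\le 1$ and hence $\M^{\alpha}_n(t)\to 0$ for every strategy $\alpha$. In your construction this is exactly the point where the routing step sends the $\Pre(T)$-mass into $T\setminus U$, after which no further episode can be started. You are in good company: the paper's proof makes the same silent leap (it routes $\Pre(T)$ into $T$ and resumes the next episode as if all mass were back in $U$), and the statement is really only correct with $\Pre(T\cap U)$ in place of $\Pre(T)$ in the second condition. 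The good news is that your ``only if'' argument already establishes this stronger condition at no extra cost---at the good positions the support equals $U$, so the $T$-mass lies in $T\cap U$---and with the corrected second condition your routing step lands in $T\cap U\subseteq U$, the loop closes, and the rest of your proof goes through.
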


\begin{proof}
First, if $q_{\init} \in \winas{weakly}(\fsum_T)$, then 
there exists a strategy $\alpha$ such that for all $i \geq 0$ 
there exists $n_i \in \nat$ such that $\M^{\alpha}_{n_i}(T) \geq 1-2^{-i}$,
and moreover $n_{i+1} > n_i$ for all $i \geq 0$.
Let $s_i = \Supp(\M^{\alpha}_{n_i})$ be the support of~$\M^{\alpha}_{n_i}$.
Since the state space is finite, there is a set~$U$ that 
occurs infinitely often in the sequence~$s_0 s_1 \dots$,
thus for all $k>0$ there exists $m_k \in \nat$ such that 
$\M^{\alpha}_{m_k}(T) \geq 1-2^{-k}$ and 
$\M^{\alpha}_{m_k}(U) = 1$.
It follows that $\alpha$ is sure eventually synchronizing in $U$ 
from $q_{\init}$, i.e. $q_{\init} \in \winsure{event}(\fsum_U)$.
Moreover, we can assume that $m_{k+1} > m_k$ for all $k>0$
and thus $\M$ is also limit-sure eventually synchronizing in $\Pre(T)$
with exact support in $\Pre(U)$ from the initial distribution $d_1 = \M^{\alpha}_{m_1}$.
Since $\Supp(d_1) = U = \Supp(d_U)$ and since 
only the support of the initial probability distributions is relevant for the limit-sure
eventually synchronizing objective~\cite[Corollary 1]{DMS14},
it follows that $d_U \in \winlim{event}(\fsum_{\Pre(T)},\Pre(U))$.

To establish the converse, note that 
since $d_U \in \winlim{event}(\fsum_{\Pre(T)},\Pre(U))$, it follows
from \cite[Corollary 1]{DMS14} that
from all initial
distributions with support in~$U$, for all $\epsilon > 0$ there exists
a strategy $\alpha_{\epsilon}$ and a position $n_{\epsilon}$ such that
$\M^{\alpha_{\epsilon}}_{n_{\epsilon}}(T) \geq 1-\epsilon$ and
$\M^{\alpha_{\epsilon}}_{n_{\epsilon}}(U) = 1$.  We construct an
almost-sure weakly synchronizing strategy $\alpha$ as
follows.  Since $q_{\init} \in \winsure{event}(\fsum_U)$, play according to
a sure eventually synchronizing strategy from $q_{\init}$ until all the
probability mass is in~$U$.  Then for $i=1,2, \dots$ and $\epsilon_i =
2^{-i}$, repeat the following procedure: given the current probability distribution, 
select the corresponding strategy $\alpha_{\epsilon_i}$
and play according to $\alpha_{\epsilon_i}$ for $n_{\epsilon_i}$
steps, ensuring probability mass at least $1-2^{-i}$ in $\Pre(T)$ and 
support of the probability mass in $\Pre(U)$. Then from states in $\Pre(T)$,
play an action to ensure reaching $T$ in the next step, and from states
in $\Pre(U)$ ensure reaching $U$. Continue playing according to 
$\alpha_{\epsilon_{i+1}}$ for $n_{\epsilon_{i+1}}$ steps, etc.  
Since $n_{\epsilon_i} + 1 > 0$ for all $i \geq 0$, this strategy ensures that $\limsup_{n \to \infty}
\M^{\alpha}_n(T) = 1$ from $q_{\init}$, hence $q_{\init} \in \winas{weak}(\fsum_T)$.

\qed
\end{proof}

Since the membership problems for sure eventually synchronizing
and for limit-sure eventually synchronizing with exact support are PSPACE-complete 
(\cite[Theorem~2 and~4]{DMS14}),
%
%
the membership problem for almost-sure weak synchronization
is in PSPACE by  
guessing the set $U$, and checking that $q_{\init} \in \winsure{event}(\fsum_U)$, and that 
$d_U \in \winlim{event}(\fsum_{\Pre(T)},\Pre(U))$.
We establish a matching PSPACE lower bound.

\begin{lemma}\label{lem:almost-weakly-pspace-hard}
The membership problem for $\winas{weakly}(\fsum_T)$ is PSPACE-hard
even if $T$ is a singleton.
\end{lemma}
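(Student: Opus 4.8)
The plan is to prove PSPACE-hardness by reduction from the membership problem for \emph{limit-sure eventually synchronization} with a singleton target, which is PSPACE-complete by~\cite{DMS14}. Starting from an MDP $\M$ with initial state $q_{\init}$ and singleton target $\{\q\}$, I would build $\N$ by adding a fresh target state $\p$ (the singleton for the new instance) together with a fresh action $\sharp$ that, from $\q$, moves the mass accumulated there into $\p$, from every other state routes the mass back to $q_{\init}$, and from $\p$ returns to $q_{\init}$. The intended equivalence is $q_{\init}\in\winlim{event}(\q)$ in $\M$ if and only if $q_{\init}\in\winas{weakly}(\p)$ in $\N$, so that PSPACE-hardness of the former transfers to the latter. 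Note that one cannot reduce from \emph{sure} eventual synchronization as in Lemma~\ref{lem: sure-weakly-psapce-hradness}, since $\winas{weakly}$ is strictly larger than $\winsure{weakly}$; the almost-sure mode forces the matching source mode to be limit-sure eventual synchronization.

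For the forward direction, assume $q_{\init}\in\winlim{event}(\q)$ in $\M$. For each $i$ there is a strategy and a step count after which at least $1-2^{-i}$ of the mass sits in $\q$; I would run such a strategy from $q_{\init}$, then play $\sharp$ to push this $\geq 1-2^{-i}$ mass into $\p$ (producing a $(1-2^{-i})$-synchronized distribution in $\p$), then bring \emph{all} the mass back to $q_{\init}$ in one synchronized step and repeat with $i\to\infty$, so that $\limsup_n\M^{\alpha}_n$ assigns probability $1$ to $\p$ along the outcome. The essential design choice is that harvesting is \emph{loss-free}: the residual fraction that is not in $\q$ is recycled to $q_{\init}$ rather than discarded, which is exactly what lets the synchronizing level in $\p$ tend to $1$ (as required for almost-sure winning) instead of being capped below $1$ by the geometric loss that an absorbing sink would incur.

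The main obstacle is the backward direction, and precisely in guaranteeing that this recycling does \emph{not} grant extra synchronizing power. A gadget that merely returns all non-$\q$ mass to $q_{\init}$ is incorrect: if the target can retain probability (e.g.\ $\q$ absorbing in $\M$), then mass harvested from successive fresh runs can pile up at a common time and drive the probability in $\q$, hence in $\p$, to $1$ even when no single run of $\M$ reaches $1-\epsilon_0$ in $\q$; so the construction must additionally ensure that no mass can linger in the target across rounds, i.e.\ each high-probability event in $\p$ is chargeable to one fresh synchronizing run from $q_{\init}$. I would establish this by a cohort analysis: trace every unit of mass present in $\q$ at a synchronizing time back to its last fresh injection at $q_{\init}$; between injections the mass evolves purely under $\M$, so each cohort contributes at most $\sup_{\beta,t}\M^{\beta}_t(\q)$ of itself to $\q$, and since the recycling loses no mass the cohort weights sum to $1$, bounding the total probability in $\q$ by that same supremum. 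The delicate point — the technical heart of the reduction — is making this per-cohort bound valid despite the conditioning on ``no further reset'', which is what forces the construction to prevent the target from holding mass for more than a single step; reconciling this accumulation-free requirement of the backward direction with the loss-free requirement of the forward direction is exactly what the careful gadget design must achieve.
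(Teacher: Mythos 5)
There is a genuine gap, and you have located it yourself without closing it: your entire backward direction rests on a ``careful gadget design'' that is never exhibited. As you note, the gadget you actually describe (faithful copy of $\M$, with $\sharp$ harvesting $\q$ into $\p$ and recycling everything else to $q_{\init}$) makes the claimed equivalence false when $\q$ can retain mass — take $\M$ where $q_{\init}$ moves with probability $\frac{1}{2}$ each to an absorbing $\q$ and an absorbing sink: then $q_{\init}\notin\winlim{event}(\q)$ (the value is $\frac{1}{2}$), yet cross-round accumulation in $\q$ makes $q_{\init}\in\winas{weakly}(\p)$ in $\N$. But the repair you promise — ``prevent the target from holding mass for more than a single step'' — destroys the forward direction: if in $\N$ every $\Act$-action flushes $\q$ back to $q_{\init}$, consider $\M$ with $q_{\init}\xrightarrow{a}\{\frac{1}{2}:\q,\ \frac{1}{2}:q_1\}$, $q_1\xrightarrow{a}\q$ and $\q$ absorbing; this $\M$ is \emph{sure} (hence limit-sure) eventually synchronizing in $\q$ precisely because mass waits in $\q$ one step, yet in the flushed gadget a cohort argument of exactly your own kind caps the mass simultaneously in $\q$ (hence ever in $\p$) at $\frac{1}{2}$. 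The two requirements you want to reconcile pull in opposite directions for a structural reason: faithful within-round simulation of $\M$ requires $\q$ to hold mass exactly as $\M$ does, and that is the very capability that enables cross-round accumulation; no local gadget can distinguish mass that entered $\q$ during the current round from mass left over from earlier rounds. So the ``technical heart'' you defer is not a routine detail — it is the whole proof, and for this choice of source problem it appears unattainable.

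The paper avoids this trap by reducing from a \emph{different} PSPACE-complete problem: deciding whether $\Pre^n_{\M}(T)\neq\emptyset$ for all $n\geq 0$ (\cite[Lemma~3]{DMS14}). Its gadget is superficially similar (a fresh initial state, a reset action $\sharp$ recycling to it, uniform re-injection into $Q$, target kept as $T$ itself), but for this source problem the accumulation you fight against is exactly the mechanism of the forward direction: mass is injected into the ultimately periodic predecessor sequence, where it can be \emph{held} (since $R=\Pre^r(R)$ for $R=\Pre^{k_0}(T)$), so the unsynchronized remainder decays geometrically and the MDP is almost-sure weakly synchronizing. The backward direction then needs no cohort analysis at all: almost-sure weakly synchronizing trivially implies limit-sure eventually synchronizing, and the characterization of \cite[Lemma~8]{DMS14} yields either sure eventual synchronization or limit-sure synchronization in every $\Pre^n_{\N}(T)$, both of which force $\Pre^n_{\M}(T)\neq\emptyset$ for all $n$. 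Note that in your absorbing counterexample $\Pre^n(\{\q\})=\{\q\}\neq\emptyset$ for all $n$, so that MDP is a \emph{yes}-instance of the paper's source problem and a \emph{no}-instance of yours — the two source problems genuinely differ, and recycling-style gadgets inherently decide the former, not the latter. If you want to salvage your write-up, switch the source problem to the $\Pre^n$-nonemptiness problem and replace the cohort analysis by the characterization lemmas of~\cite{DMS14}.
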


\begin{proof}
%
The problem of deciding, given an MDP $\M$ and a singleton $T$, 
whether $\Pre^n_{\M}(T) \neq \emptyset$ for all $n \geq 0$ is 
PSPACE-complete~\cite[Lemma~3]{DMS14}. We present a reduction 
of this problem to the membership problem for almost-sure weak synchronization,
very similar to the proof of PSPACE-hardness for limit-sure eventually synchronizing~\cite[Lemma~11]{DMS14}.

The reduction is as follows (see also \figurename~\ref{fig:lim-sure-reduction}). 
Given an MDP $\M=\tuple{Q, \Act,\delta}$ and a singleton $T \subseteq Q$,
we construct an MDP $\N =\tuple{Q', \Act',\delta'}$ with state space $Q' = Q \uplus \{q_{\init}\}$ such that 
$\Pre^n_{\M}(T) \neq \emptyset$ for all $n \geq 0$ if and only if 
$q_{\init}$ is almost-sure weakly synchronizing in $T$.
The MDP $\N$ is essentially a copy of $\M$ with alphabet $\Act \uplus \{\sharp\}$ and 
the transition function on action $\sharp$ is the uniform 
distribution on $Q$ from $q_{\init}$, and the Dirac distribution on 
$q_{\init}$ from the other states $q \in Q$. There are self-loops on $q_{\init}$ 
for all other actions $a \in \Act$. Formally, the
transition function $\delta'$ is defined as follows, for all $q \in Q$:

\begin{itemize} 
\item $\delta'(q,a) = \delta(q,a)$ for all $a \in \Act$ (copy of $\M$),
      and $\delta'(q,\sharp)(q_{\init}) = 1$;
\item $\delta'(q_{\init},a)(q_{\init}) = 1$ for all $a \in \Act$,
      and $\delta'(q_{\init}, \sharp)(q) = \frac{1}{\abs{Q}}$.
\end{itemize}

We establish the correctness of the reduction as follows.
For the first direction, assume that 
$\Pre^n_{\M}(T) \neq \emptyset$ for all~$n \geq 0$. 
It follows that 
there exist numbers $k_0,r \leq 2^{\abs{Q}}$ 
such that $\Pre^{k_0 + r}_{\M}(T) = \Pre^{k_0}_{\M}(T) \neq \emptyset$.

By Lemma~\ref{lem: almost-weak-reduce-limit-event} with $U = Q$, 
we need to show that $(i)$ $q_{\init} \in \winsure{event}(\fsum_Q)$, and 
$(ii)$ $d_Q \in \winlim{event}(\fsum_{\Pre(T)},\Pre(Q))$
where $d_Q$ is the uniform distribution over~$Q$.
To show $(i)$, we can play $\sharp$ from $q_{\init}$ to get
the probability mass synchronized in $Q$. To show $(ii)$, since playing
$\sharp$ from $d_Q$ ensures to reach $q_{\init}$, it suffices to prove 
that $q_{\init} \in \winlim{event}(\fsum_{T},Q)$,
and it is sufficient to prove this in $\M$ since $\N$ embeds a copy of $\M$
(note that the requirement that the exact support is in $Q$ becomes trivial then).
Using~\cite[Lemma~8]{DMS14}
with $k = k_0$ and $R = \Pre^{k_0}_{\M}(T)$ 
(and $U = Z = Q$ is the trivial support), it is sufficient to
prove that $q_{\init} \in \winlim{event}(\fsum_{R})$ to get $q_{\init} \in \winlim{event}(\fsum_{T})$.
We show the stronger statement that $q_{\init}$ is actually almost-sure 
eventually synchronizing in $R$ with the pure strategy $\alpha$ defined as follows,
for all play prefixes $\rho$ (let $m = \abs{\rho} \!\!\mod r$):

\begin{itemize} 
\item if $\Last(\rho) = q_{\init}$, then $\alpha(\rho) = \sharp$;
\item if $\Last(\rho) = q \in Q$, then 
	\begin{itemize} 
	\item if $q \in \Pre^{r-m}_{\M}(R)$ for $0 \leq m < r$, 
	then $\alpha(\rho) = a$ such that $\post(q,a) \subseteq \Pre^{r-m-1}_{\M}(R)$;
	\item otherwise, $\alpha(\rho) = \sharp$.
	\end{itemize}
\end{itemize}

\begin{exclude}
\begin{figure}[t]
\begin{center}
\begin{picture}(115,46)(0,2)

\node[Nmarks=n, Nw=40, Nh=22, dash={0.2 0.5}0](m1)(20,20){}
\node[Nframe=n](label)(8,13){MDP $\M$}
\drawpolygon[dash={0.8 0.5}0](38,30)(23,30)(23,10)(38,10)
\node[Nframe=n](label)(30,13){$T\subseteq Q$}
\node[Nmarks=r](n1)(30,20){$q_2$}

\node[Nframe=n](label)(19,20){$\dots$}
\node[Nmarks=n](n2)(10,20){$q_1$}
\node[Nframe=n](arrow)(45,20){{\Large $\Rightarrow$}}

\node[Nmarks=n, Nw=40, Nh=22, dash={0.2 0.5}0](nm1)(80,20){}
\node[Nmarks=n, Nw=50, Nh=48, dash={0.4 1}0](m2)(80,32){}
\node[Nframe=n](label)(65,52){MDP $\N$}

\node[Nframe=n](label)(68,13){MDP $\M$}
\drawpolygon[dash={0.8 0.5}0](98,30)(83,30)(83,10)(98,10)
\node[Nframe=n](label)(90,13){$T\subseteq Q$}
\node[Nmarks=r](nn1)(90,20){$q_2$}
\node[Nframe=n](label)(79,20){$\dots$}
\node[Nmarks=n](nn2)(70,20){$q_1$}

\node[Nmarks=i](qq)(80,43){$q_{\init}$} 

\drawloop[ELside=l,loopCW=y, loopangle=90, loopdiam=4](qq){$\Act$}

\drawedge[ELpos=50, ELside=l](qq,nn1){$\sharp$}
\node[Nframe=n](label)(80.2,36){$\cdots$}
\drawedge[ELpos=50, ELside=r](qq,nn2){$\sharp$}

\drawedge[ELpos=50, ELside=r, curvedepth=-8](nn1,qq){$\sharp$}

\drawedge[ELpos=50, ELside=l, curvedepth=+8](nn2,qq){$\sharp$}

\end{picture}
\caption{Sketch of reduction to show PSPACE-hardness of  the membership problem
for almost-sure weak synchronization.}\label{fig:lim-sure-reduction}
\end{center}
\end{figure}
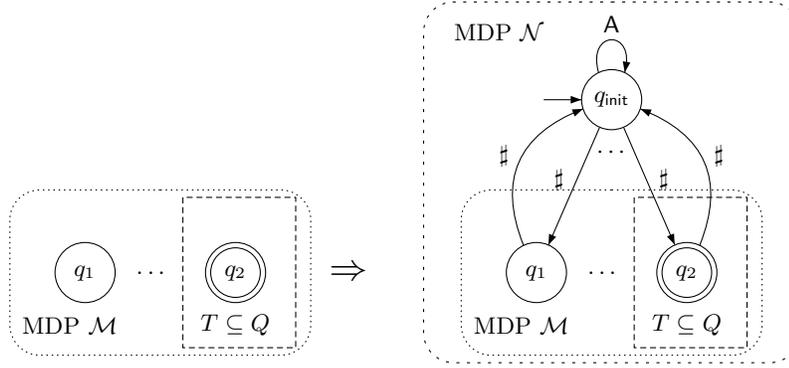
\end{exclude}

Note that if $q \in R$, then $q \in \Pre^{r-m}_{\M}(R)$ for $m=0$ since $\Pre^{r}_{\M}(R) = R$.
The strategy $\alpha$ ensures that the probability mass that is not (yet) in 
the sequence of predecessors $\Pre^n_{\M}(R)$ goes to $q_{\init}$, where by playing $\sharp$
at least a fraction $\frac{1}{\abs{Q}}$ of it would reach the sequence of predecessors
(at a synchronized position). It follows that after $2i$ steps, the probability
mass in $q_{\init}$ is at most $(1 - \frac{1}{\abs{Q}})^i$ and the probability mass
in the sequence of predecessors is at least $1 - (1 - \frac{1}{\abs{Q}})^i$. 
For $i \to \infty$,
the probability in the sequence of predecessors tends to $1$ and since 
$\Pre^n_{\M}(R) = R$ for all positions $n$ that are a multiple of $r$,
we get $\sup_{n}\M^{\alpha}_{n}(R) = 1$ and $q_{\init} \in \winas{event}(\fsum_{R})$.

For the converse direction, assume that $q_{\init}$ is almost-sure weakly synchronizing in~$T$,
then $q_{\init}$ is also limit-sure eventually synchronizing in~$T$.
By~\cite[Lemma~8]{DMS14},
%
%
either $(1)$ $q_{\init}$ is limit-sure eventually synchronizing in $\Pre^n_{\N}(T)$
for all $n \geq 0$, and then it follows that $\Pre^n_{\N}(T) \neq \emptyset$ 
for all $n \geq 0$, or $(2)$ $q_{\init}$ is sure eventually synchronizing in $T$,
and then since only the action $\sharp$ leaves the state $q_{\init}$ (and $\post(q_{\init},\sharp) = Q$), 
and since  $q_{\init} \in \winsure{event}(\fsum_T)$ 
if and only if there exists $k \geq 0$ such that $q_{\init}\in \Pre_{\N}^{k}(T)$~\cite[Lemma~4]{DMS14}, 
we have $Q \subseteq \Pre^{k-1}_{\N}(T)$.
Moreover, since 
$Q \subseteq \Pre_{\N}(Q)$ and $\Pre_{\N}(\cdot)$ is a monotone operator, 
it follows that $Q \subseteq \Pre^n_{\N}(T)$ for all $n \geq k-1$ and thus 
$\Pre^n_{\N}(T) \neq \emptyset$ for all $n \geq 0$.
We conclude the proof by noting that $\Pre^n_{\M}(T) = \Pre^n_{\N}(T) \cap Q$
and therefore $\Pre^n_{\M}(T) \neq \emptyset$ for all $n \geq 0$.
\qed 


\end{proof}

Simple examples show that winning strategies require infinite memory
for almost-sure weak synchronization.


\begin{exclude}
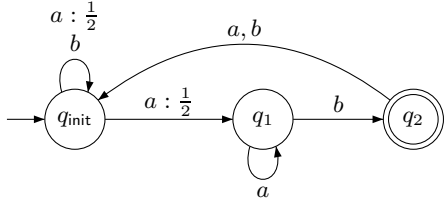
\begin{figure}[t]
\begin{center}
\begin{picture}(60,26)

\node[Nmarks=i,iangle=180](n0)(10,10){$q_{\init}$}
\node[Nmarks=n](n1)(35,10){$q_1$}
\node[Nmarks=r](n2)(55,10){$q_2$}

\drawedge[ELdist=.5](n0,n1){$a: \frac{1}{2}$}
\drawloop[ELside=l,ELdist=0, loopCW=y, loopangle=90, loopdiam=4](n0){$\begin{array}{c}a:\frac{1}{2}\\ b\end{array}$}

\drawedge(n1,n2){$b$}
\drawloop[ELside=r,loopCW=n, loopangle=-90, loopdiam=4](n1){$a$}

\drawedge[ELpos=50, ELdist=.5, ELside=r, curvedepth=-10](n2,n0){$a,b$}

\end{picture}
\caption{An MDP where infinite memory is necessary for
almost-sure weakly synchronizing strategies.}\label{fig:inf-mem}
\end{center}
\end{figure}
\end{exclude}



\begin{theorem}\label{theo:weakly-almost}
For almost-sure weak synchronization in MDPs:

\begin{enumerate}
\item (Complexity). The membership problem is PSPACE-complete.

\item (Memory). Infinite memory is necessary in general for both pure and 
randomized strategies, and pure strategies are sufficient.
\end{enumerate}
\end{theorem}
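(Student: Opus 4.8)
The plan is to assemble the theorem from the two lemmas just established, adding only the memory analysis. For the complexity claim, the PSPACE upper bound follows immediately from the characterization of Lemma~\ref{lem: almost-weak-reduce-limit-event}: a nondeterministic PSPACE procedure guesses the set $U$ and verifies the two conditions $q_{\init} \in \winsure{event}(\fsum_U)$ and $d_U \in \winlim{event}(\fsum_{\Pre(T)}, \Pre(U))$, each decidable in PSPACE by the results of~\cite{DMS14} on sure and on limit-sure eventually synchronizing with exact support; since $\mathrm{NPSPACE} = \mathrm{PSPACE}$, membership is in PSPACE. The matching lower bound is exactly Lemma~\ref{lem:almost-weakly-pspace-hard}, already proved for singleton targets, so PSPACE-completeness follows.

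For the memory lower bound, I would exhibit the three-state MDP of \figurename~\ref{fig:inf-mem} with target $\{q_2\}$: from $q_{\init}$, action $a$ moves to $q_{\init}$ and $q_1$ each with probability $\frac12$ while $b$ self-loops; from $q_1$, action $a$ self-loops and $b$ moves to $q_2$; from $q_2$ both actions return to $q_{\init}$. A winning strategy plays $a^{k}b$ for successively larger $k$: after $a^{k}$ the mass is $\frac{1}{2^k}$ in $q_{\init}$ and $1-\frac{1}{2^k}$ in $q_1$, so playing $b$ yields $\M^{\alpha}_n(q_2) = 1-\frac{1}{2^k}$; letting $k \to \infty$ across the rounds gives $\limsup_n \M^{\alpha}_n(q_2) = 1$, hence $q_{\init} \in \winas{weakly}(q_2)$. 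To see that no finite-memory strategy wins, note that for finite-memory $\alpha$ the product $\M(\alpha)$ is a finite Markov chain, so the distribution sequence $(\M^{\alpha}_n)_n$ is asymptotically periodic and $\limsup_n \M^{\alpha}_n(q_2)$ equals the maximum, over the finitely many periodic limit distributions, of the mass on $q_2$. A limit distribution with all mass on $q_2$ would, tracing the deterministic symbolic dynamics one step back, require a Dirac distribution on $q_1$ (its unique predecessor via $b$); but a Dirac on $q_1$ arises only from a Dirac on $q_1$ (self-loop on $a$) and never from a distribution placing positive mass on $q_{\init}$, since $\post(q_{\init},a)$ always splits mass. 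As $q_{\init}$ is revisited along every recurrent cycle and retains positive probability, every periodic limit has $q_2$-mass strictly below $1$, so $\limsup_n \M^{\alpha}_n(q_2) < 1$, contradicting almost-sure weak synchronization.

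For sufficiency of pure strategies, I would inspect the converse construction in the proof of Lemma~\ref{lem: almost-weak-reduce-limit-event}: the witnessing strategy concatenates a sure eventually synchronizing strategy driving the mass into $U$, followed by the limit-sure eventually synchronizing strategies $\alpha_{\epsilon_i}$ with exact support, interleaved with single deterministic actions forcing $\Pre(T)\to T$ and $\Pre(U)\to U$. By~\cite{DMS14} both the sure and the limit-sure (exact-support) eventually synchronizing strategies can be taken pure, and the switch between phases is dictated by the deterministic symbolic outcome (equivalently, by the length of the history); hence the composite selects a single action after every prefix and is pure.

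The main obstacle is the finite-memory impossibility in the example, i.e.\ turning the informal pumping observation into a rigorous statement that the asymptotically periodic distributions of $\M(\alpha)$ all assign mass strictly below $1$ to $q_2$, so that $\limsup_n \M^{\alpha}_n(q_2)$ is bounded away from $1$; the complexity bounds and the purity argument are then routine consequences of the preceding lemmas and of~\cite{DMS14}.
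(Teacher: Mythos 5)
Your proposal is correct, and its skeleton matches the paper's proof exactly: PSPACE-completeness is assembled from Lemma~\ref{lem: almost-weak-reduce-limit-event} (guess $U$, check both conditions in PSPACE, use NPSPACE${}={}$PSPACE) together with the lower bound of Lemma~\ref{lem:almost-weakly-pspace-hard}, and the memory lower bound uses the very same three-state MDP of \figurename~\ref{fig:inf-mem} with the same $a^k b$ family of plays. Where you genuinely diverge is the finite-memory impossibility argument. The paper reasons \emph{forward} on the structure of the product chain $\M(\alpha)$: a recurrent $q_2$-state must exist, all successors of $q_2$ are $q_{\init}$ and $q_{\init}$ is a successor of itself, so every periodic class of recurrent states contains a $q_{\init}$-state; hence every stationary distribution puts positive mass on $q_{\init}$ and $\limsup_n \M^{\alpha}_n(q_2)$ is bounded away from $1$. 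You reason \emph{backward} on the limit cycle of symbolic distributions: a periodic limit that is Dirac on $q_2$ forces its predecessor to be Dirac on $q_1$, which itself can only arise from a Dirac on $q_1$. Both are valid; yours trades stationary-distribution arguments for the (standard) asymptotic periodicity of $(\M^{\alpha}_n)_n$. One caveat: your closing appeal to ``$q_{\init}$ is revisited along every recurrent cycle and retains positive probability'' is precisely the paper's claim and is unproven in your text---but it is also unnecessary. Since the periodic limits form a finite \emph{closed} cycle under the one-step dynamics, tracing your backward induction around the whole cycle shows the Dirac-on-$q_2$ limit would itself have to be Dirac on $q_1$, an immediate contradiction; state it this way and your argument is self-contained. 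Finally, you make the purity claim explicit via composition of pure strategies from~\cite{DMS14}, which the paper leaves implicit; that is a small but welcome addition.
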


\begin{proof}
The result on memory requirement is established by following example.
The example and argument are analogous to the proof that infinite memory 
is necessary for almost-sure eventually synchronizing~\cite[Section~4.2]{DMS14}.
Consider the MDP $\M$  shown  in~\figurename~\ref{fig:inf-mem}
with three states $q_{\init},q_1,q_2$ and two actions $a,b$. 
The only probabilistic transition is in $q_{\init}$ on action $a$ that has 
successors $q_{\init}$ and $q_1$ with probability $\frac{1}{2}$. 
The other transitions are deterministic. Let $q_{\init}$ be the initial state. 
We construct a strategy that is almost-sure
weakly synchronizing in $\{q_2\}$, showing that $q_{\init} \in \winas{weakly}(q_2)$.  
First, observe that for all $\epsilon > 0$  we
can have probability at least $1 - \epsilon$ in $q_2$ after finitely
many steps from $q_{\init}$: playing 
$n$ times $a$ and then
$b$ leads to probability $1- \frac{1}{2^n}$ in $q_2$.
Note that after that, the current probability distribution has support
$\{q_{\init},q_2\}$ and that from such a distribution, we can as well
ensure probability at least $1 - \epsilon$ in $q_2$.
 Thus for a fixed~$\epsilon$, the MDP
 is $(1-\epsilon)$-synchronizing in $\{q_2\}$ (after finitely many steps),
 and by taking a smaller value of $\epsilon$, we can continue to play a strategy 
 to have probability at least $1 - \epsilon$ in $q_2$, and repeat this for $\epsilon \to 0$. 
 This strategy ensures almost-sure weak synchronization in $\{q_2\}$.  Below,
 we show that infinite memory is necessary for almost-sure
 winning in this MDP. 

Assume towards contradiction that there exists a finite-memory
strategy $\alpha$ that is almost-sure weakly synchronizing in
$\{q_2\}$. Consider the Markov chain $\M(\alpha)$ (the product of the MDP
$\M$ with the finite-state transducer defining $\alpha$). A state
$(q,m)$ in $\M(\alpha)$ is called a \emph{$q$-state}. Since $\alpha$
is almost-sure weakly synchronizing 
in $\{q_2\}$, there is a $q_2$-state in the recurrent states
of $\M(\alpha)$.  Since on all actions $q_{\init}$ is a successor of $q_2$,
and $q_{\init}$ is a successor of itself, it follows that there is a
recurrent $q_{\init}$-state in $\M(\alpha)$, and that all periodic classes
of recurrent states in $\M(\alpha)$ contain a $q_{\init}$-state. Hence, in
each stationary distribution there is a $q_{\init}$-state with a positive
probability, and therefore the probability mass in $q_{\init}$ is bounded
away from zero. It follows that the probability mass in $q_2$ is
bounded away from $1$ thus $\alpha$ is not almost-sure weakly
synchronizing in $\{q_2\}$, a contradiction.  \qed
\end{proof}

\subsection{Limit-sure weak synchronization}

We show that the winning regions for almost-sure
and limit-sure weak synchronization coincide. The result
is not intuitively obvious (recall that it does not
hold for eventually synchronizing) and requires a careful
analysis of the structure of limit-sure winning strategies
to show that they always induce the existence of an almost-sure 
winning strategy. The construction of an almost-sure
winning strategy from a family of limit-sure winning strategies
is illustrated in the following example.

Consider the MDP $\M$ in~\figurename~\ref{fig:weak-limit}
with initial state $q_{\init}$ and target set $T = \{q_4\}$.
Note that there is a relevant strategic choice only in $q_3$,
and that $q_{\init}$ is limit-sure winning for eventually synchronization
in $\{q_4\}$ since we can inject a probability mass arbitrarily
close to $1$ in $q_3$ (by always playing $a$ in $q_3$),
and then switching to playing $b$ in $q_3$ gets probability $1 - \epsilon$ 
in $T$ (for arbitrarily small $\epsilon$). Moreover, the same 
holds from state~$q_4$. These two facts are  sufficient to show that $q_{\init}$
is limit-sure winning for weak synchronization in $\{q_4\}$: given
$\epsilon > 0$, play from $q_{\init}$ a strategy to ensure probability
at least $p_1 = 1-\frac{\epsilon}{2}$ in $q_4$ (in finitely many steps),
and then play according to a strategy that ensures from $q_4$ 
probability $p_2  = p_1 - \frac{\epsilon}{4}$ in $q_4$ (in finitely many, and at least one step), 
and repeat this process using strategies that ensure, if the probability mass in $q_4$ 
is at least $p_i$, that the probability in $q_4$ is at least
$p_{i+1}  = p_i - \frac{\epsilon}{2^{i+1}}$ (in at least one step).
It follows that $p_i = 1 - \frac{\epsilon}{2} - \frac{\epsilon}{4} - \dots - \frac{\epsilon}{2^i} > 1 - \epsilon$ for all $i \geq 1$, and thus 
$\limsup_{i \to \infty} p_i \geq 1 - \epsilon$ showing that
$q_{\init}$ is limit-sure weakly synchronizing in target~$\{q_4\}$.

\begin{figure}[t]

\begin{center}
    \begin{picture}(80,40)

\node[Nmarks=i](n0)(10,30){$q_{\init}$}
\node[Nmarks=n](n1)(10,5){$q_1$}
\node[Nmarks=n](n2)(35,30){$q_2$}
\node[Nmarks=n](n3)(55,30){$q_3$}
\node[Nmarks=r](n4)(75,15){$q_4$}
\node[Nmarks=n](n5)(55,5){$q_5$}
\node[Nmarks=n](n6)(35,5){$q_6$}

\drawedge[ELpos=40, ELside=l, curvedepth=4](n0,n1){$a,b:\frac{1}{2}$}
\drawedge[ELpos=50, ELside=l](n0,n2){$a,b:\frac{1}{2}$}
\drawedge[ELpos=50, ELside=l, curvedepth=4](n1,n0){$a,b$}

\drawedge[ELpos=50, ELside=l, curvedepth=4](n2,n3){$a,b$}
\drawedge[ELpos=50, ELside=l, curvedepth=4](n3,n2){$a$}

\drawedge[ELpos=55](n3,n4){$b$}
\drawedge[ELpos=55](n4,n5){$a,b$}
\drawedge[ELpos=55](n5,n3){$a,b:\frac{1}{2}$}

\drawedge[ELpos=50, ELside=r, curvedepth=-4](n5,n6){$a,b:\frac{1}{2}$}
\drawedge[ELpos=50, ELside=r, curvedepth=-4](n6,n5){$a,b$}


\end{picture}
\end{center}
 \caption{An example to show $q_{\init}\in \winlim{weakly}(q_4)$ implies $q_{\init}\in \winas{weakly}(q_4)$.\label{fig:weak-limit}}

\end{figure}

It follows from the result that we establish in this section (Theorem~\ref{theo:weakly-ls-is-as})
that $q_{\init}$ is actually almost-sure weakly synchronizing in target $\{q_4\}$.
To see this, consider the sequence $\Pre^i(T)$ for $i \geq 0$: 
$\{q_4\}, \{q_3\}, \{q_2\}, \{q_3\}, \dots$ is ultimately periodic with period $r=2$
and $R = \{q_3\} = \Pre(T)$ is such that $R = \Pre^2(R)$. The period corresponds
to the loop $q_2 q_3$ in the MDP. It turns out that \emph{limit-sure} eventually
synchronizing in $T$ implies \emph{almost-sure} eventually
synchronizing in $R$ (by the proof of~\cite[Lemma~9]{DMS14}), thus from $q_{\init}$ a 
\emph{single} strategy ensures that the probability mass in $R$ is $1$, either
in the limit or after finitely many steps.
Note that in both cases since $R = \Pre^r(R)$ this even implies almost-sure weakly
synchronizing in $R$. The same holds from state $q_4$.

Moreover, note that all distributions produced by an almost-sure weakly synchronizing 
strategy are themselves almost-sure weakly synchronizing.
An almost-sure winning strategy for 
weak synchronization in $\{q_4\}$ consists in playing from $q_{\init}$ an 
\emph{almost-sure} eventually synchronizing strategy in target $R = \{q_3\}$, 
and considering a decreasing sequence $\epsilon_i$ such that 
$\lim_{i \to \infty} \epsilon_i = 0$, when the probability mass in $R$
is at least $1-\epsilon_i$, inject it in $T = \{q_4\}$. Then the remaining
probability mass defines a distribution (with support $\{q_1,q_2\}$ in the example)
that is still almost-sure eventually
synchronizing in $R$, as well as the states in $T$. 
Note that in the example, (almost all) the probability mass in $T = \{q_4\}$ can move
to $q_3$ in an even number of steps, while from $\{q_1,q_2\}$ an odd number
of steps is required, resulting in a \emph{shift} of the probability mass. 
However, by repeating the strategy two times from $q_4$ 
(injecting large probability mass in $q_3$, moving to $q_4$, and injecting
in $q_3$ again), we can make up for 
the shift and reach $q_3$ from $q_4$ in an even number of steps, thus in 
synchronization with the probability mass from~$\{q_1,q_2\}$. 
This idea is formalized in the rest of this section, and we prove that 
we can always make up for the shifts,
which requires a carefully analysis of the allowed
amounts of shifting.  

The result is easier to prove when the target $T$ 
is a singleton, as in the example. For an arbitrary target set $T$, we need
to get rid of the states in $T$ that do not contribute a significant (i.e., bounded away from $0$)
probability mass in the limit, that we call the `vanishing' states.
We show that they  can be removed from $T$ without changing the winning 
region for limit-sure winning. When the target set has no vanishing state, we can 
construct an almost-sure winning strategy as in the case of a singleton target set.

Given an MDP $\M$ with initial state $q_{\init} \in \winlim{weakly}(\fsum_T)$ that is 
limit-sure winning for the weakly synchronizing objective in target set $T$,
let~$(\alpha_i)_{i\in\nat}$ be a family of limit-sure winning strategies
such that $\limsup_{n \to \infty}\M^{\alpha_i}_n(T)\geq 1-\epsilon_i$
where $\lim_{i \to \infty} \epsilon_i = 0$. Hence by definition of $\limsup$,
for all $i \geq 0$ there exists a strictly increasing sequence
$k_{i,0} < k_{i,1} < \cdots$ of positions such that $\M^{\alpha_i}_{k_{i,j}}(T) \geq 1-2 \epsilon_i$ 
for all $j \geq 0$.
A state $q \in T$ is \emph{vanishing} if $\liminf_{i\to \infty} \liminf_{j\to \infty} \M^{\alpha_i}_{k_{i,j}}(q)=0$
for some family of limit-sure weakly synchronizing strategies~$(\alpha_i)_{i\in\nat}$.
Intuitively, the contribution of a vanishing state $q$ to the probability in $T$
tends to $0$ and therefore $\M$ is also limit-sure winning for the weakly 
synchronizing objective in target set $T \setminus \{q\}$.

\begin{lemma}\label{lem:weak-without-vanishing}
If an MDP $\M$ is limit-sure weakly synchronizing in target set~$T$, 
then there exists a set $T' \subseteq T$ such that 
$\M$ is limit-sure weakly synchronizing in~$T'$
without vanishing states.
\end{lemma}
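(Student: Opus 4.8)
The plan is to proceed by iteratively removing vanishing states from the target set until none remain. Starting from the target set $T$ for which $\M$ is limit-sure weakly synchronizing, I would show that whenever a vanishing state $q$ exists, the MDP remains limit-sure weakly synchronizing in the smaller set $T \setminus \{q\}$. Since $T$ is finite, repeatedly removing vanishing states must terminate after at most $\abs{T}$ steps, yielding a set $T' \subseteq T$ in which $\M$ is limit-sure weakly synchronizing and which contains no vanishing state. I would need to be slightly careful about the definition: a state is vanishing if its $\liminf\liminf$ contribution is $0$ \emph{for some} family of limit-sure winning strategies, so after removing one state I must re-examine the remaining states with respect to families witnessing limit-sure winning in the new target $T \setminus \{q\}$.

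The key step is the single removal: given a vanishing state $q \in T$, construct from a family $(\alpha_i)_{i\in\nat}$ witnessing that $q$ vanishes a new family witnessing limit-sure weak synchronization in $T \setminus \{q\}$. Since $q$ is vanishing, there is a family $(\alpha_i)$ with $\liminf_{i}\liminf_{j} \M^{\alpha_i}_{k_{i,j}}(q) = 0$, so along a suitable subsequence of indices $i$ (and positions $j$) the contribution $\M^{\alpha_i}_{k_{i,j}}(q)$ becomes arbitrarily small, say bounded by some $\eta_i \to 0$. Meanwhile $\M^{\alpha_i}_{k_{i,j}}(T) \geq 1 - 2\epsilon_i$ by the choice of the synchronizing positions $k_{i,j}$. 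Therefore $\M^{\alpha_i}_{k_{i,j}}(T \setminus \{q\}) = \M^{\alpha_i}_{k_{i,j}}(T) - \M^{\alpha_i}_{k_{i,j}}(q) \geq 1 - 2\epsilon_i - \eta_i$, and taking $\limsup$ over $j$ (the positions $k_{i,j}$ witness the $\limsup$) gives $\limsup_{n \to \infty} \M^{\alpha_i}_n(T \setminus \{q\}) \geq 1 - 2\epsilon_i - \eta_i$. Along the chosen subsequence both $\epsilon_i \to 0$ and $\eta_i \to 0$, so after reindexing this subfamily witnesses limit-sure weak synchronization in $T \setminus \{q\}$.

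The subtlety I expect to be the main obstacle is the order of quantifiers in the two $\liminf$'s in the definition of vanishing, and matching them against the $\limsup$ positions $k_{i,j}$. The vanishing condition only guarantees that the \emph{inferior} limit over $i$ and $j$ is $0$, so I cannot claim that $\M^{\alpha_i}_{k_{i,j}}(q)$ is small for \emph{every} $i$ and $j$; I can only extract a subsequence of indices $i_\ell$ and, for each, a cofinal set of positions $j$ along which the contribution is below a threshold $\eta_{i_\ell} \to 0$. I must verify that this cofinal set of positions still witnesses the required $\limsup$ in the target $T \setminus \{q\}$, i.e., that there remain infinitely many positions at which the probability mass in $T \setminus \{q\}$ is at least $1 - 2\epsilon_{i_\ell} - \eta_{i_\ell}$. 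This follows because the subset of positions $k_{i,j}$ for which $\M^{\alpha_i}_{k_{i,j}}(q)$ is small is infinite (being cofinal in $j$), and on this subset the bound on the mass in $T \setminus \{q\}$ holds directly; re-indexing the extracted subfamily $(\alpha_{i_\ell})$ as a new family completes the argument. Once the single removal is established, the finite induction on $\abs{T}$ delivers the set $T'$ without vanishing states claimed in the lemma.
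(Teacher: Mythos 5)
Your proposal is correct and follows essentially the same route as the paper's proof: a single-removal step that unpacks the double $\liminf$ of the vanishing condition to find, for a subsequence of strategies $\alpha_i$, infinitely many synchronizing positions $k_{i,j}$ where the mass on $q$ is small, giving $\M^{\alpha_i}_{k_{i,j}}(T \setminus \{q\}) \geq 1 - 2\epsilon_i - \eta_i$ and hence the required $\limsup$ bound, followed by finite iteration over $\abs{T}$. The quantifier subtlety you flag is exactly the point the paper addresses with its ``for all $\epsilon>0$ and all $x>0$ there exists $i>x$ such that for all $y>0$ there exists $j>y$'' unpacking, so no gap remains.
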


\begin{proof}
If there is no vanishing state for~$(\alpha_i)_{i\in\nat}$, then take $T' = T$
and the proof is complete. 
Otherwise, let $(\alpha_i)_{i\in\nat}$ be a 
family of limit-sure winning strategies such that $\limsup_{n \to \infty}\M^{\alpha_i}_n(T)\geq 1-\epsilon_i$
where $\lim_{i \to \infty} \epsilon_i = 0$ and let $q$ be a vanishing state 
for $(\alpha_i)_{i\in\nat}$.
We show that $(\alpha_i)_{i\in\nat}$ is limit-sure weakly synchronizing in~$T \setminus \{q\}$.
For every $i \geq 0$ let $k_{i,0} < k_{i,1} < \cdots$ be a strictly increasing sequence 
such that $(a)$ $\M^{\alpha_i}_{k_{i,j}}(T) \geq 1-2\epsilon_i$ for all $i,j \geq 0$, and
$(b)$ $\liminf_{i\to \infty} \liminf_{j\to \infty} \M^{\alpha_i}_{k_{i,j}}(q)=0$.

It follows from $(b)$ that for all $\epsilon > 0$ and all~$x > 0$ there exists~$i>x$ such that
for all~$y > 0$ there exists~$j>y$ such that~$\M^{\alpha_i}_{k_{i,j}}(q)<\epsilon$,
and thus 
$$\M^{\alpha_i}_{k_{i,j}}(T \setminus \{q\}) \geq 1-2\epsilon_i - \epsilon$$
by $(a)$. Since this holds for infinitely many $i$'s, we can choose $i$ such that
$\epsilon_i < \epsilon$ and we have 
$$\limsup_{j \to \infty} \M^{\alpha_i}_{k_{i,j}}(T \setminus \{q\}) \geq 1- 3 \epsilon$$
and thus 
$$\limsup_{n \to \infty} \M^{\alpha_i}_{n}(T \setminus \{q\}) \geq 1- 3 \epsilon$$
since the sequence $(k_{i,j})_{j\in\nat}$ is strictly increasing.
This shows that $(\alpha_i)_{i\in\nat}$ is limit-sure weakly synchronizing in~$T \setminus \{q\}$.

By repeating this argument as long as there is a vanishing state (thus at most $\abs{T}-1$ times), 
we can construct the desired set $T' \subseteq T$ without vanishing state.
\qed 
\end{proof}

For a limit-sure weakly synchronizing MDP in target set $T$ (without vanishing 
states), we show that from a probability distribution with support $T$,
a probability mass arbitrarily close to $1$ can be injected synchronously
back in $T$ (in at least one step), that is $d_T \in \winlim{event}(\fsum_{\Pre(T)})$.
The same holds from the initial state $q_{\init}$ of the MDP. This property
is the key to construct an almost-sure weakly synchronizing strategy.

\begin{lemma}\label{lem:weak-event-uniform}
If an MDP $\M$ with initial state $q_{\init}$ is limit-sure weakly synchronizing in a target set~$T$ without 
vanishing states, then $q_{\init} \in \winlim{event}(\fsum_{\Pre(T)})$ and
$d_T \in \winlim{event}(\fsum_{\Pre(T)})$
where $d_T$ is the uniform distribution over~$T$.
\end{lemma}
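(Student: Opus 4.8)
The plan is to prove both statements by a single \emph{rewind-by-one-step} argument, whose crux is that the transition probabilities take only finitely many values. First I would extract from the no-vanishing hypothesis a uniform constant: since $T$ is finite and $\liminf_{i\to\infty}\liminf_{j\to\infty}\M^{\alpha_i}_{k_{i,j}}(q)>0$ for every $q\in T$, there is a single $\eta>0$ so that for all large enough $i$ and then all large enough $j$ we have $\M^{\alpha_i}_{k_{i,j}}(q)\geq \eta$ simultaneously for all $q\in T$ (together with $\M^{\alpha_i}_{k_{i,j}}(T)\geq 1-2\epsilon_i$). Second, I would fix the \emph{value gap} of the MDP, $\theta=\min\{\,1-\delta(q,a)(T)\mid q\in Q,\ a\in\Act,\ \delta(q,a)(T)<1\,\}>0$, which exists because $\{\delta(q,a)(T)\}$ is a finite set of reals.

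The key lemma I would isolate is: if $\M^{\alpha}_{n}(T)\geq 1-\gamma$ with $\gamma<\theta/2$, then $\M^{\alpha}_{n-1}(\Pre(T))\geq 1-\tfrac{2\gamma}{\theta}$. To prove it I would work at the level of finite paths $\rho$ of length $n-1$, writing $\M^{\alpha}_{n}(T)=\sum_{\rho}\Pr(\rho)\,c_\rho$ with $c_\rho=\sum_a\alpha(\rho)(a)\,\delta(\Last(\rho),a)(T)$. Call $\rho$ \emph{good} if $c_\rho\geq 1-\theta/2$. The total probability of non-good paths is at most $\tfrac{2\gamma}{\theta}$, since each contributes at least $\theta/2$ to the mass $\gamma$ that misses $T$. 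For a good path, $c_\rho\geq 1-\theta/2$ forces some action $a\in\Supp(\alpha(\rho))$ with $\delta(\Last(\rho),a)(T)=1$ (otherwise every support action would have value $\leq 1-\theta$, pushing the average below $1-\theta/2$); hence $\Last(\rho)\in\Pre(T)$. Therefore $\M^{\alpha}_{n-1}(\Pre(T))\geq 1-\tfrac{2\gamma}{\theta}$. Applying this to each $\alpha_i$ at a synchronized position $k_{i,j}\geq 1$ gives $\M^{\alpha_i}_{k_{i,j}-1}(\Pre(T))\geq 1-\tfrac{4\epsilon_i}{\theta}\to 1$, which is exactly $q_{\init}\in\winlim{event}(\fsum_{\Pre(T)})$.

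For the second statement I would start the rewind from a synchronized distribution rather than from $q_{\init}$. Fix large $i,j$ so that $\mu_i:=\M^{\alpha_i}_{k_{i,j}}$ satisfies $\mu_i(q)\geq\eta$ for all $q\in T$, and let $\sigma_i$ be the residual strategy of $\alpha_i$ after that prefix; running $\sigma_i$ up to one step before the next synchronized position $k_{i,j+1}$ yields, by the lemma above, mass at least $1-\tfrac{4\epsilon_i}{\theta}$ in $\Pre(T)$ after $\tau_i:=k_{i,j+1}-1-k_{i,j}\geq 0$ steps from $\mu_i$ (choosing $j$ with $k_{i,j+1}>k_{i,j}+1$ makes $\tau_i\geq 1$). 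Now I would transfer this to the uniform distribution $d_T$ by linearity of $\M^{\sigma_i}_{\tau_i}$ in the starting distribution: writing $m_q$ for the $\Pre(T)$-mass reached from the Dirac distribution on $q$ under $\sigma_i$, the bound on $\mu_i$ gives $\sum_{q\in T}\mu_i(q)(1-m_q)\leq \tfrac{4\epsilon_i}{\theta}+2\epsilon_i$, and since $\mu_i(q)\geq\eta$ this forces $\sum_{q\in T}(1-m_q)\leq \tfrac{1}{\eta}\bigl(\tfrac{4\epsilon_i}{\theta}+2\epsilon_i\bigr)$; hence from $d_T$ the mass reaching $\Pre(T)$ is $\tfrac{1}{\abs{T}}\sum_{q\in T}m_q\geq 1-\tfrac{1}{\eta\abs{T}}\bigl(\tfrac{4\epsilon_i}{\theta}+2\epsilon_i\bigr)\to 1$, giving $d_T\in\winlim{event}(\fsum_{\Pre(T)})$ via the family $(\sigma_i)_{i\in\nat}$.

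The main obstacle is the upgrade from ``almost forces into $T$'' to ``lies in the exact predecessor set $\Pre(T)$''. A priori the rewind only shows the mass sits in states admitting an action that sends $\geq 1-O(\epsilon_i)$ of it into $T$, which is weaker than $\Pre(T)$; the argument succeeds only because the finitely many transition values leave a uniform gap $\theta$ below $1$, so for $\epsilon_i$ small enough ``almost forcing'' collapses to genuine forcing. The second delicate point, and the only place the no-vanishing hypothesis is essential, is the transfer to the \emph{uniform} $d_T$: without a uniform floor $\eta$ on each coordinate of the synchronized distributions, a single $T$-state carrying vanishing mass could route to $\Pre(T)$ arbitrarily badly and break the domination estimate.
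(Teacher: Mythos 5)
Your proof is correct and follows essentially the same route as the paper's: both rest on a one-step rewind from $T$-synchronized distributions to $\Pre(T)$ (exploiting that the finitely many transition probabilities leave a uniform gap below $1$) combined with a conditioning/domination argument that uses the no-vanishing floor to transfer the bound at synchronized positions from the aggregate distribution to each Dirac distribution on a state of $T$, hence to $d_T$. The only differences are organizational: you prove $q_{\init}\in\winlim{event}(\fsum_{\Pre(T)})$ directly by rewinding the strategies $\alpha_i$, whereas the paper deduces it from the $d_T$ statement via the fact that only the support of the initial distribution matters for limit-sure eventual synchronization, and you perform the rewind before the conditioning step (with the value gap $\theta$) where the paper conditions first on staying in $T$ and then rewinds (with the minimum transition probability $\eta$).
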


\begin{proof}
Since $q_{\init} \in \winlim{weakly}(\fsum_T)$ and 
$\winlim{weakly}(\fsum_T) \subseteq \winlim{event}(\fsum_T)$,
we have $q_{\init} \in \winlim{event}(\fsum_T)$ and thus it suffices
to prove that $d_T \in \winlim{event}(\fsum_{\Pre(T)})$. This is because
then from $q_{\init}$, probability arbitrarily close to $1$ can be injected
in $\Pre(T)$ through a distribution with support in $T$ (since 
by~\cite[Corollary~1]{DMS14} only the support of the 
initial probability distribution is important for limit-sure 
eventually synchronizing).

Let~$(\alpha_i)_{i\in\nat}$ be a family of limit-sure winning strategies
such that $\limsup_{n \to \infty}\M^{\alpha_i}_n(T)\geq 1-\epsilon_i$
where $\lim_{i \to \infty} \epsilon_i = 0$, and such that
there is no vanishing state. 
For every $i \geq 0$ let $k_{i,0} < k_{i,1} < \cdots$ be a strictly increasing sequence 
such that $\M^{\alpha_i}_{k_{i,j}}(T) \geq 1 - 2\epsilon_i$ for all $i,j \geq 0$, and
let $B = \min_{q \in T} \liminf_{i\to \infty}\liminf_{j\to \infty} \M^{\alpha_i}_{k_{i,j}}(q)$.
Note that $B > 0$ since there is no vanishing state.
It follows that there  exists~$x > 0$ such that for all~$i > x$ 
there exists~$y_i > 0$ such that for all~$j > y_i$ and all $q \in T$
we have~$\M^{\alpha_i}_{k_{i,j}}(q) \geq \frac{B}{2}$.

Given~$\nu > 0$, let $i>x$ such that $\epsilon_i <  \frac{\nu B}{4}$,
and for $j > y_i$, consider the positions $n_1 = k_{i,j}$ and $n_2 = k_{i,j+1}$.
We have $n_1 < n_2$ and $\M^{\alpha_i}_{n_1}(T) \geq 1 - 2\epsilon_i$
and $\M^{\alpha_i}_{n_2}(T) \geq 1 - 2\epsilon_i$, and 
$\M^{\alpha_i}_{n_1}(q) \geq \frac{B}{2}$ for all $q \in T$. 
Consider the strategy $\beta$ that plays 
like $\alpha_i$ plays from position $n_1$ and thus transforms the
distribution $\M^{\alpha_i}_{n_1}$ into $\M^{\alpha_i}_{n_2}$. 
For all states $q \in T$, from the Dirac distribution on $q$ under strategy $\beta$, the probability 
to reach $Q \setminus T$ in $n_2 - n_1$ steps is thus at most 
$\frac{\M^{\alpha_i}_{n_2}(Q \setminus T)}{\M^{\alpha_i}_{n_1}(q)} \leq \frac{2\epsilon_i}{B/2} < \nu$.

Therefore, from an arbitrary probability distribution with support $T$
we have $\M^{\beta}_{n_2-n_1}(T) > 1 - \nu$, showing that $d_T$ is
limit-sure eventually synchronizing in $T$ and thus in $\Pre(T)$ 
since $n_2 - n_1 > 0$ (it is easy to show that if the mass of probability in
$T$ is at least $1-\nu$, then the mass of probability in $\Pre(T)$ one step
before is at least $1-\frac{\nu}{\eta}$ where $\eta$ is the smallest positive
probability in $\M$).
\qed
\end{proof}

To show that limit-sure and almost-sure winning coincide for weakly synchronizing
objectives, from a family of limit-sure winning strategies we construct 
an almost-sure winning strategy that uses the eventually synchronizing 
strategies of Lemma~\ref{lem:weak-event-uniform}. The construction consists
in using successively strategies that ensure probability mass $1-\epsilon_i$
in the target $T$, for a decreasing sequence $\epsilon_i \to 0$. Such strategies
exist by Lemma~\ref{lem:weak-event-uniform}, both from the initial state and
from the set $T$. However, the mass of probability that can be guaranteed to
be synchronized in~$T$ by the successive strategies is always smaller than $1$, 
and therefore we need to argue that the remaining masses of probability 
(of size $\epsilon_i$) can also get synchronized in $T$, and despite their possible 
shift with the main mass of probability.

Two main
key arguments are needed to establish the correctness of the construction:
$(1)$ eventually synchronizing implies that a finite number of steps is sufficient
to obtain a probability mass of $1-\epsilon_i$ in $T$, and thus the construction
of the strategy is well defined, and $(2)$ by the finiteness of the period $r$
(such that $R = \Pre^r(R)$ where $R = \Pre^k(T)$ for some $k$) we can ensure 
to eventually make up for the shifts, 
and every piece of the probability mass can contribute
(synchronously) to the target infinitely often.

\begin{theorem}\label{theo:weakly-ls-is-as}
$\winlim{weakly}(\fsum_T) = \winas{weakly}(\fsum_T)$ for all MDPs and target sets~$T$.
\end{theorem}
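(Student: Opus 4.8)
The inclusion $\winas{weakly}(\fsum_T) \subseteq \winlim{weakly}(\fsum_T)$ is immediate, since a single almost-sure weakly synchronizing strategy witnesses $(1-\epsilon)$-synchronization for every $\epsilon>0$ at once. The plan is therefore to prove the converse, $\winlim{weakly}(\fsum_T) \subseteq \winas{weakly}(\fsum_T)$. Fix $q_{\init} \in \winlim{weakly}(\fsum_T)$. First I would apply Lemma~\ref{lem:weak-without-vanishing} to pass to a subset $T' \subseteq T$ in which $\M$ is still limit-sure weakly synchronizing but has \emph{no vanishing states}; because $\fsum_{T'} \leq \fsum_T$ pointwise, an almost-sure weakly synchronizing strategy for $T'$ is also one for $T$, so it suffices to treat $T'$. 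From now on I assume $T$ itself has no vanishing states.

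Next I would extract the injection property from Lemma~\ref{lem:weak-event-uniform}: both from $q_{\init}$ and from the uniform distribution $d_T$ on $T$, probability arbitrarily close to $1$ can be driven into $\Pre(T)$, hence (one step later) into $T$. To make these limit-sure eventually-synchronizing guarantees usable infinitely often, I would exploit the ultimate periodicity of the predecessor sequence $\Pre^i(T)$: choose $k \geq 0$ and $r \geq 1$ with $R := \Pre^k(T)$ satisfying $R = \Pre^r(R)$. Following the proof technique of \cite[Lemma~9]{DMS14}, limit-sure eventually synchronizing in $T$ upgrades to \emph{almost-sure} eventually synchronizing in $R$; applying this from $q_{\init}$ (using $\winlim{weakly}(\fsum_T)\subseteq\winlim{event}(\fsum_T)$) and from $d_T$ yields $q_{\init} \in \winas{event}(\fsum_R)$ and $d_T \in \winas{event}(\fsum_R)$. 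Since $R = \Pre^r(R)$, mass that has reached $R$ can be kept there along the positions $\equiv 0 \pmod r$, which is exactly what lets a single strategy resynchronize in $R$ infinitely often.

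With these ingredients I would build the almost-sure weakly synchronizing strategy $\alpha$ in phases governed by a sequence $\epsilon_i \to 0$. Starting from $q_{\init}$, play an almost-sure eventually $R$-synchronizing strategy until at least $1-\epsilon_1$ of the mass sits in $R$; since $R = \Pre^k(T)$, routing this mass through the $k$-step predecessor window loses none of it and produces a distribution with mass at least $1-\epsilon_1$ in $T$ — one synchronizing position. From the resulting distribution, whose dominant $(1-\epsilon_1)$-part has support in $T$ and is thus handled by the strategy witnessing $d_T \in \winas{event}(\fsum_R)$ (the uncontrolled $\epsilon_1$-fraction costing only $\epsilon_1$), I again accumulate at least $1-\epsilon_2$ of the mass in $R$, inject it into $T$, and repeat with $\epsilon_i \to 0$. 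Two observations make this well defined and correct: (1) each eventually-synchronizing phase needs only finitely many steps to reach mass $1-\epsilon_i$, so the phases are genuine finite blocks; and (2) because $T$ has no vanishing state, no bounded-away-from-zero fraction of mass can be permanently lost, so the successively injected masses drive $\limsup_n \M^{\alpha}_n(T) = 1$, giving $q_{\init} \in \winas{weakly}(\fsum_T)$.

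The step I expect to be the main obstacle is the phase alignment implicit in (2) — the \emph{shift} problem. After the bulk of the mass is injected into $T$ and routed back toward $R$, it generally returns to $R$ at a residue mod $r$ that differs from that of the leftover mass, so naively the two pieces would never be $p$-synchronized in $T$ simultaneously for large $p$. The heart of the argument is to show that the available slack — cycling within $R$ via $R = \Pre^r(R)$, together with the freedom in when each phase stops — suffices to cancel every such shift: by parking a piece in $R$ for a suitable number of $r$-cycles one can match its phase mod $r$ to that of the bulk, so that after finitely many extra steps all pieces reach $T$ together. Making this quantitative (bounding the shifts that can arise and exhibiting, for each, a parking schedule that absorbs it) is the delicate part, and it is precisely this forced periodic re-alignment that is absent for eventually synchronizing, explaining why limit-sure and almost-sure do not coincide there.
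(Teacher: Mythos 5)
Your skeleton matches the paper's proof step for step: removal of vanishing states via Lemma~\ref{lem:weak-without-vanishing}, the injection property of Lemma~\ref{lem:weak-event-uniform}, the periodic set $R = \Pre^k(T)$ with $R = \Pre^r(R)$, the upgrade from limit-sure to almost-sure eventually synchronizing in $R$ (for both $q_{\init}$ and $d_T$) via~\cite[Lemma~9]{DMS14}, and the phased construction driven by $\epsilon_i \to 0$. However, the one step you defer as ``the delicate part'' --- phase alignment --- is precisely the heart of the paper's argument, and the mechanism you sketch for it cannot work. Cycling within $R$ using $R = \Pre^r(R)$ only guarantees return to $R$ after multiples of $r$ steps, so ``parking'' a piece of mass in $R$ for any number of $r$-cycles leaves its residue modulo $r$ \emph{unchanged}; no parking schedule can ever bring two pieces whose phases differ into alignment. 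The shift problem is unsolvable inside $R$; it has to be solved by leaving $R$.

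The paper changes a piece's phase by routing it out of $R$ and back: since $R = \Pre^k(T)$, mass in $R$ can be surely pushed into $T$ in $k$ steps, and from any distribution supported in $T$ the strategy witnessing $d_T \in \winas{event}(\fsum_R)$ returns the mass to $R$ with some fixed shift $t$; the net effect of this round trip is to add $k+t$ to the phase (this is $(\star\star)$ in the paper's Claim~2). Note that this operation does \emph{not} let you absorb arbitrary shifts --- iterating it only adds multiples of $k+t$ modulo $r$, which need not generate all residues --- so a proof that ``every shift can be cancelled'' would fail. Instead, the paper's bookkeeping makes the shifts cancel exactly: stopping the phase-$h$ strategy at a time $\equiv h \pmod r$ and then routing the bulk into $T$ (taking $k$ more steps) leaves the injected bulk with shift $t$ by definition of $t$, while the leftover mass has shift $h-(h+k) = -k \pmod r$ by the paper's Claim~2 $(\star)$; a single application of the $+(k+t)$ round-trip operation to the leftover yields $-k+k+t = t$, matching the bulk. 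This exact cancellation --- not a waiting argument, and not unrestricted shift absorption --- is what guarantees that all pieces of mass re-synchronize in $T$ together infinitely often, and it is the content your proposal still needs to supply.
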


\begin{proof}
Since $\winas{weakly}(\fsum_T) \subseteq \winlim{weakly}(\fsum_T)$ holds
by the definition, it is sufficient to prove that $\winlim{weakly}(\fsum_T) \subseteq \winas{weakly}(\fsum_T)$
and by Lemma~\ref{lem:weak-without-vanishing} it is sufficient to prove
that if $q_{\init} \in \winlim{weakly}(\fsum_T)$ is limit-sure weakly synchronizing 
in $T$ without vanishing state, then $q_{\init}$ is almost-sure weakly synchronizing 
in $T$. If $T$ has vanishing states, then consider $T' \subseteq T$ as in 
Lemma~\ref{lem:weak-without-vanishing} and it will follows that 
$q_{\init}$ is almost-sure weakly synchronizing in $T'$ and thus also in $T$.
We proceed with the proof that $q_{\init} \in \winlim{weakly}(\fsum_T)$ implies 
$q_{\init} \in \winas{weakly}(\fsum_T)$.

For $i=1,2,\dots$ consider the sequence of predecessors~$\Pre^{i}(T)$,
which is ultimately periodic: let $1 \leq k, r \leq 2^{\abs{Q}}$ such
that $\Pre^{k}(T) = \Pre^{k+r}(T)$, and let $R = \Pre^{k}(T)$.
Thus $R = \Pre^{k+r}(T) = \Pre^{r}(R)$.

\paragraph{Claim 1.} {\it We have $q_{\init} \in \winas{event}(\fsum_R)$ and 
$d_T \in \winas{event}(\fsum_R)$.}

\paragraph{Proof of Claim 1.}
By Lemma~\ref{lem:weak-event-uniform}, since there is no vanishing state in $T$
we have $q_{\init} \in \winlim{event}(\fsum_{\Pre(T)})$
and $d_T \in \winlim{event}(\fsum_{\Pre(T)})$, 
and it follows from the characterization of~\cite[Lemma~8]{DMS14} and the
proof of~\cite[Lemma~9]{DMS14} that: \medskip


\hspace{-4mm}
\begin{tabular}{ll}
either $(1)$ $q_{\init} \in \winsure{event}(\fsum_{\Pre(T)})$ or & $(2)$ $q_{\init} \in \winas{event}(\fsum_R)$, and  \\
either $(a)$ $d_T \in \winsure{event}(\fsum_{\Pre(T)})$ or & $(b)$ $d_T \in \winas{event}(\fsum_R)$.
\end{tabular} \medskip

Note that $(a)$ implies $(b)$ (and thus $(b)$ holds) since $(a)$ implies $T \subseteq \Pre^{i}(T)$
for some $i \geq 1$ (by~\cite[Lemma~4]{DMS14}) and thus $T \subseteq \Pre^{n \cdot i}(T)$
for all $n \geq 0$ by monotonicity of $\Pre^{i}(\cdot)$, which entails for 
$n \cdot i \geq k$ that $T \subseteq \Pre^m(R)$ where $m = (n \cdot i - k) \mod r$
and thus $d_T$ is sure (and almost-sure) winning for the eventually synchronizing 
objective in target~$R$.

Note also that $(1)$ implies $(2)$ since by $(1)$ we can play a sure-winning
strategy from $q_{\init}$ to ensure in finitely many steps probability~$1$ in $\Pre(T)$ 
and in the next step probability~$1$ in $T$, and by $(b)$ play an almost-sure 
winning strategy for eventually synchronizing in $R$. Hence $q_{\init} \in \winas{event}(\fsum_R)$
and thus $(2b)$ holds, which concludes the proof of Claim~1.

\paragraph{ } 
We now show that 
there exists an almost-sure winning strategy for the weakly synchronizing 
objective in target~$T$. 

Recall that $\Pre^r(R) = R$ and thus once some probability mass $p$ is in $R$,
it is possible to ensure that the probability mass in $R$ after $r$ steps
is at least $p$, and thus that (with period $r$) the probability in $R$ does not
decrease. 
By the result of~\cite[Lemma~9]{DMS14}, almost-sure winning for eventually 
synchronizing in $R$ implies that there exists a strategy $\alpha$ such that the 
probability in $R$ tends to $1$ at periodic positions: 
for some $0 \leq h < r$ the strategy $\alpha$ is \emph{almost-sure 
eventually synchronizing in $R$ with shift $h$}, that is 
$\forall \epsilon > 0 \cdot \exists N \cdot \forall n \geq N: n \equiv h \mod r \implies 
\M^{\alpha}_{n}(R) \geq 1 - \epsilon$. We also say that the initial distribution
$d_0 = \M^{\alpha}_{0}$ is almost-sure eventually synchronizing in $R$ with shift $h$.

\paragraph{Claim 2.} 
{\it
\begin{itemize}
\item[($\star$)] If $\M^{\alpha}_{0}$ is almost-sure eventually synchronizing 
in $R$ with some shift $h$, then $\M^{\alpha}_{i}$ 
is almost-sure eventually synchronizing in $R$ with shift $h-i \mod r$.
\item[($\star\star$)] Let $t$ such that $d_T$ is almost-sure eventually synchronizing in $R$ with shift $t$.
If a distribution is almost-sure eventually synchronizing in $R$ with some shift $h$, then 
it is also almost-sure eventually synchronizing in $R$ with shift $h + k + t \mod r$
(where we chose $k$ such that $R = \Pre^{k}(T)$).
\end{itemize}
}

\paragraph{Proof of Claim~2.}
The result ($\star$) immediately follows from the definition of shift, 
and we prove ($\star\star$) as follows.
We show that almost-sure eventually synchronizing in $R$ with shift $h$
implies almost-sure eventually synchronizing in $R$ with shift $h + k + t \mod r$. 
Intuitively, the probability mass
that is in $R$ with shift $h$ can be injected in $T$ in $k$ steps, and then from $T$
we can play an almost-sure eventually synchronizing strategy in target $R$ with shift $t$,
thus a total shift of $h + k + t \mod r$.
Precisely, an almost-sure winning strategy $\alpha$ is constructed as follows: 
given a finite prefix of play~$\rho$, if there is no state $q \in R$ that 
occurs in $\rho$ at a position $n \equiv h \mod r$, then play in $\rho$ according to 
the almost-sure winning strategy $\alpha_h$ for eventually synchronizing in $R$ with shift $h$. 
Otherwise, if there is no
$q \in T$ that occurs in $\rho$ at a position $n \equiv h+k \mod r$, then
we play according to a sure winning strategy $\alpha_{sure}$ for eventually synchronizing in $T$,
and otherwise we play according to an almost-sure winning strategy $\alpha_t$ from $T$ for 
eventually synchronizing in $R$ with shift $t$.
To show that $\alpha$ is almost-sure eventually synchronizing in $R$ with shift $h + k + t$,
note that $\alpha_h$ ensures with probability~$1$ that $R$ is reached at positions
$n \equiv h \mod r$, and thus $T$ is reached at positions $h+k \mod r$ by $\alpha_{sure}$, 
and from the states in $T$ the strategy $\alpha_t$ ensures with probability~$1$ 
that $R$ is reached at positions $h+k+t \mod r$. This concludes the proof of Claim~2.
\smallskip

\paragraph{Construction of an almost-sure winning strategy.}
We construct strategies $\alpha_\epsilon$ for $\epsilon > 0$ that ensure,
from a distribution that is almost-sure eventually 
synchronizing in $R$ (with some shift $h$), that after finitely many steps,
a distribution $d'$ is reached such that $d'(T) \geq 1-\epsilon$ and 
$d'$ is almost-sure eventually synchronizing in $R$ (with some shift $h'$).
Since $q_{\init}$ is almost-sure eventually synchronizing in $R$ (with some shift $h$),
it follows that the strategy $\alpha_{as}$ that plays successively 
the strategies (each for finitely many steps) $\alpha_{\frac{1}{2}}$, 
$\alpha_{\frac{1}{4}}$, $\alpha_{\frac{1}{8}}, \dots$ is 
almost-sure winning from $q_{\init}$ for the weakly synchronizing objective in target~$T$. 

We define the strategies $\alpha_\epsilon$ as follows.
Given an initial
distribution that is almost-sure eventually synchronizing in $R$ with a shift 
$h$ and given $\epsilon > 0$, let $\alpha_\epsilon$ be the strategy that plays
according to the almost-sure winning strategy $\alpha_h$ for eventually 
synchronizing in $R$ with shift $h$ for a number of steps $n \equiv h \mod r$ until a distribution 
$d$ is reached such that $d(R) \geq 1-\epsilon$, and then from $d$ it plays
according to a sure winning strategy $\alpha_{sure}$ for eventually synchronizing in $T$
from the states in $R$ (for $k$ steps), and keeps playing according to $\alpha_h$ from 
the states in $Q \setminus R$ (for $k$ steps). The distribution
$d'$ reached from $d$ after $k$ steps is such that $d'(T) \geq 1-\epsilon$ and we claim
that it is almost-sure eventually synchronizing in $R$ with shift $t$. 
This holds by definition from the states in $\Supp(d') \cap T$,
and by ($\star$) the states in $\Supp(d') \setminus T$ are almost-sure eventually synchronizing 
in $R$ with shift $h-(h+k) \mod r$, and by ($\star\star$) with shift $h-(h+k)+k+t = t$.

It follows that the strategy $\alpha_{as}$ is well-defined and ensures,
for all $\epsilon >0$, that the probability mass in $T$ is infinitely
often at least $1-\epsilon$, thus is almost-sure weakly synchronizing in $T$. 
This concludes the proof of Theorem~\ref{theo:weakly-ls-is-as}.
\qed
\end{proof}

The  complexity results of Theorem~\ref{theo:weakly-sure} and Theorem~\ref{theo:weakly-almost} 
hold for the membership problem with function $\fmax_T$ by the following lemma.

\begin{lemma}\label{lem:weakly-max-sum}
For weak synchronization and each winning mode, 
the membership problems with functions $\fmax$ and $\fmax_T$
are polynomial-time equivalent to the membership problem 
with function $\fsum_{T'}$ with a singleton ${T'}$.
\end{lemma}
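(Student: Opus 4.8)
The plan is to establish three polynomial-time reductions that together give the claimed equivalences. The observation to record first is that for a singleton $T'=\{q\}$ we have $\fsum_{T'}=\fmax_{T'}$, so the membership problem for $\fsum_{\{q\}}$ is literally an instance of the $\fmax_T$ problem (take $T=\{q\}$); this gives one direction for free. It therefore remains to reduce $\fmax_T$ (and its special case $\fmax=\fmax_Q$) to singleton $\fsum$, and to reduce singleton $\fsum$ to $\fmax$.

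For the reduction of $\fmax_T$ to singleton $\fsum$ the key idea is a pigeonhole argument on the finite set $T$. I claim that for every winning mode, $q_{\init}\in\win{weakly}{\mu}(\fmax_T)$ iff $q_{\init}\in\win{weakly}{\mu}(\fsum_{\{q\}})$ for some $q\in T$. The direction ``$\Leftarrow$'' is immediate since $\fsum_{\{q\}}(X)=X(q)\le\fmax_T(X)$ for $q\in T$. For ``$\Rightarrow$'' I would fix a witnessing strategy (sure, almost-sure) or family of strategies (limit-sure): at each of the infinitely many positions where $\fmax_T$ of the produced distribution is $\ge 1-\epsilon$, some single state of $T$ already carries mass $\ge 1-\epsilon$, and since $T$ is finite one fixed state $q^\star\in T$ works at infinitely many of these positions. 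Pushing $\epsilon\to 0$ (and, for limit-sure, passing once more to a state that recurs for infinitely many strategies of the family) yields the required singleton witness. As this is a disjunction over the $\abs{T}\le\abs{Q}$ singleton instances on the same MDP $\M$, it is a polynomial reduction placing $\fmax_T$ (hence $\fmax$) in the same class as singleton $\fsum$; since $\fsum_T$ is already shown PSPACE-complete for singletons (Lemmas~\ref{lem: sure-weakly-psapce-hradness} and~\ref{lem:almost-weakly-pspace-hard}), PSPACE-membership transfers.

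The hard part is the reverse reduction from singleton $\fsum$ to $\fmax=\fmax_Q$, whose difficulty is that $\fmax_Q$ rewards concentration in \emph{any} state, so the constructed MDP must be prevented from concentrating mass at non-target states. My plan is a \emph{forced dispersal} gadget: from $(\M,q)$ build $\N$ by keeping $q$ as a single state but replacing every other state $q'$ by $N=2$ copies, and defining $\delta'$ so that a transition of $\M$ reaching a non-target $q'$ with probability $p$ is replaced by reaching each copy $(q',j)$ with probability $p/N$, while transitions into $q$ are unchanged. Because this dispersal is built into $\delta'$ rather than offered as a choice, no strategy can escape it: at every step each non-target copy carries at most $\tfrac1N=\tfrac12$ of the mass, whereas summing the copies of each $q'$ recovers exactly the $\M$-distribution produced by a corresponding randomized strategy, with $q$ carrying the same mass in $\N$ as in $\M$. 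Hence $\fmax_Q(\N^{\alpha}_n)=\max(\M^{\alpha'}_n(q),c_n)$ with $c_n\le\tfrac12$, so for every $\epsilon<\tfrac12$ the condition $\fmax_Q(\N^{\alpha}_n)\ge 1-\epsilon$ is equivalent to $\M^{\alpha'}_n(q)\ge 1-\epsilon$; since the three winning modes only concern $\epsilon\to0$, this matches weak $\fmax_Q$-synchronization in $\N$ with weak $\fsum_{\{q\}}$-synchronization in $\M$ mode-by-mode, and $\N$ has size $\le 2\abs{Q}$.

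The main obstacle, and the step needing the most care, is precisely the strategy-proofness of the dispersal gadget: I must show that the distributions reachable in $\N$, after summing the copies of each non-target state, are exactly the symbolic outcomes reachable in $\M$, so neither model can do what the other cannot. I would verify this by checking that for an arbitrary $\N$-strategy the copy-summed one-step kernel is a convex combination of $\{\delta(q',a)\}_{a\in\Act}$ (hence realizable by a randomized $\M$-strategy), and conversely that a copy-blind $\N$-strategy reproduces any $\M$-outcome while keeping the two copies balanced; the uniform routing guarantees the $\le\tfrac1N$ bound against \emph{every} strategy, which is what makes the $\epsilon<\tfrac12$ comparison valid in all three modes. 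Assembling the three reductions, and using that PSPACE is closed under the polynomially many oracle calls of the $\fmax_T$ reduction, yields the polynomial-time equivalence of all three membership problems.
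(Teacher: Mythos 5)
Your proposal is correct and takes essentially the same route as the paper's own proof: your pigeonhole reduction is exactly the paper's identity $\win{weakly}{\mu}(\fmax_T)=\bigcup_{q\in T}\win{weakly}{\mu}(q)$, and your ``forced dispersal'' gadget with two equally weighted copies of every state except $q$ is precisely the paper's twin construction, which ensures that no state other than $q$ can ever carry more than half of the probability mass. The only difference is expository: you spell out the double pigeonhole for the almost-sure/limit-sure modes and the strategy-transfer argument across the gadget, which the paper asserts tersely.
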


\begin{proof}
First, for $\mu \in \{sure, almost, limit\}$,
we have $\win{weakly}{\mu}(\fmax_T) = \bigcup_{q \in T} \win{weakly}{\mu}(q)$,
showing that the membership problems for $\fmax$ 
are polynomial-time reducible to the corresponding membership problem 
for $\fsum_T$ with singleton~$T$. 
The reverse reduction is as follows. 
Given an MDP $\M$, a state $q$ and an initial distribution $d_0$,
we can construct an MDP $\M'$ and initial distribution $d'_0$
such that $d_0 \in \win{weakly}{\mu}(q)$ iff $d'_0 \in \win{weakly}{\mu}(\fmax_{Q'})$
where $Q'$ is the state space of $\M'$ (thus $\fmax_{Q'}$ is simply the function $\max$). 
The idea is to construct $\M'$ and $d'_0$
as a copy of~$\M$ and~$d_0$ where all states except $q$ are duplicated, and the 
initial and transition probabilities are equally distributed between 
the copies (see \figurename~\ref{fig:twin}). 
Therefore if the probability tends to~$1$ in some state,
it has to be in~$q$.

\begin{exclude}
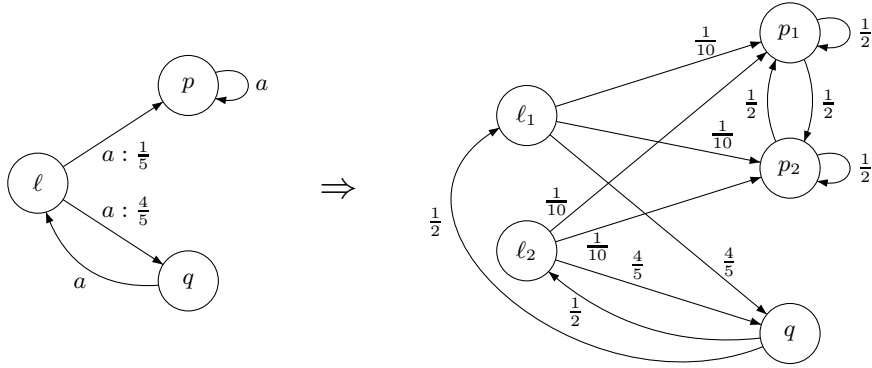
\begin{figure}[t]
\begin{center}
    \begin{picture}(120,50)(0,2)

\node[Nmarks=n](l)(5,25){$\ell$}
\node[Nmarks=n](p)(25,38){$p$}
\node[Nmarks=n](q)(25,12){$q$}

\drawedge[ELside=r,ELdist=0](l,p){$a:\frac{1}{5}$}
\drawedge[ELside=l,ELdist=0](l,q){$a:\frac{4}{5}$}

\drawedge[ELside=l, curvedepth=6](q,l){$a$}
\drawloop[ELside=l,loopCW=y, loopangle=0, loopdiam=4](p){$a$}

\node[Nframe=n](arrow)(45,24){{\Large $\Rightarrow$}}

\node[Nmarks=n](l1)(70,34){$\ell_1$}
\node[Nmarks=n](l2)(70,16){$\ell_2$}
\node[Nmarks=n](p1)(105,45){$p_1$}
\node[Nmarks=n](p2)(105,27){$p_2$}
\node[Nmarks=n](q)(105,5){$q$}

\drawedge[ELside=l,ELpos=70, ELdist=0](l1,p1){$\frac{1}{10}$}
\drawedge[ELside=l,ELpos=73, ELdist=.5](l1,p2){$\frac{1}{10}$}
\drawedge[ELside=l,ELpos=73, ELdist=0](l1,q){$\frac{4}{5}$}

\drawedge[ELside=l,ELpos=16, ELdist=0](l2,p1){$\frac{1}{10}$}
\drawedge[ELside=r,ELpos=25, ELdist=0](l2,p2){$\frac{1}{10}$}
\drawedge[ELside=l,ELpos=40, ELdist=.5](l2,q){$\frac{4}{5}$}

\drawbpedge[ELpos=70,ELside=l](q,210,30,l1,200,30){$\frac{1}{2}$} 
\drawedge[ELside=l, ELpos=75, ELdist=0, curvedepth=5](q,l2){$\frac{1}{2}$}

\drawloop[ELside=l,loopCW=y, loopangle=0, loopdiam=4](p1){$\frac{1}{2}$}
\drawedge[ELside=l,ELpos=52, curvedepth=3](p1,p2){$\frac{1}{2}$}
\drawedge[ELside=l,ELpos=48, curvedepth=3](p2,p1){$\frac{1}{2}$}
\drawloop[ELside=l,loopCW=y, loopangle=0, loopdiam=4](p2){$\frac{1}{2}$}

\end{picture}
\end{center}
 \caption{
State duplication ensures that the  probability mass can never 
be accumulated in a single state except in~$q$ (we omit action $a$ for readability).
\label{fig:twin}}
\end{figure}
\end{exclude}

\qed
\end{proof}

\section{Strong Synchronization}\label{sec:strongly-synch}

In this section, we show that the membership problem
for strongly synchronizing objectives can be solved in polynomial time,
for all winning modes, and both with function  $\fmax_T$ and
function $\fsum_T$. We show that linear-size memory is necessary in general
for $\fmax_T$, and memoryless strategies are sufficient for $\fsum_T$.
It follows from our results that the limit-sure and almost-sure winning
modes coincide for strong synchronization.

\subsection{Strong synchronization with function $\fmax$}

First, note
that for strong synchronization the membership problem with function $\fmax_T$ 
reduces to the membership problem with function $\fmax_Q$ where $Q$
is the entire state space, by a construction similar to the proof of 
Lemma~\ref{lem:weakly-max-sum}: states in $Q \setminus T$ are duplicated,
ensuring that only states in $T$ are used to accumulate probability.

The strongly synchronizing objective with function $\fmax$ requires that from 
some point on, almost all the probability mass is at every step in a single state.
The sequence of states that contain almost all the probability 
corresponds to a sequence of deterministic transitions in the MDP, and thus 
eventually to a cycle of deterministic transitions. 

The \emph{graph of deterministic transitions} of an MDP $\M = \tuple{Q,\Act,\delta}$
is the directed graph $G = \tuple{Q,E}$ where 
$E = \{ \tuple{q_1,q_2} \mid \exists a \in \Act: \delta(q_1,a)(q_2) = 1 \}$. 
For $\ell \geq 1$, a \emph{deterministic cycle in $\M$} of length $\ell$ is a finite path 
$\q_{\ell} \q_{\ell-1} \cdots \q_0$ in $G$ (that is, $\tuple{\q_{i},\q_{i-1}} \in E$
for all $1 \leq i \leq \ell$) such that $\q_0 = \q_{\ell}$. 
The cycle is \emph{simple} if $\q_i \neq \q_j$ for all $1 \leq i<j \leq \ell$.


We show that sure (resp., almost-sure and limit-sure) strong synchronization 
is equivalent to sure (resp., almost-sure and limit-sure) reachability to a 
state in such a cycle, with the requirement that it can be reached in a 
synchronized way, that is by finite paths whose lengths are congruent modulo 
the length $\ell$ of the cycle. To check this, we keep track of a modulo-$\ell$ counter
along the play.

Define the MDP $\M \times [\ell] = \tuple{Q',\Act,\delta'}$
where $Q' = Q \times \{0,1,\cdots,\ell-1\}$ and 
$\delta'(\tuple{q,i},a)(\tuple{q',i-1}) = \delta(q,a)(q')$ 
(where $i-1$ is $\ell-1$ for $i=0$)
for all states $q,q'\in Q$, actions $a \in \Act$, and $0 \leq i \leq \ell-1$.

\begin{lemma}\label{lem: p-strong}
Let $\eta$ be the smallest positive probability in the transitions of $\M$,
and let $\frac{1}{1+\eta}< p \leq 1$. 
There exists a strategy~$\alpha$ such that $\liminf_{n \to \infty} \norm{\M^{\alpha}_n}\geq p$
from an initial state~$q_{\init}$
if and only if 
there exists a simple deterministic cycle~$\q_{\ell} \q_{\ell-1} \cdots \q_0$ in $\M$ 
and a strategy~$\beta$ in~$\M\times[\ell]$ such that $\Pr^{\beta}(\Diamond \{\tuple{\q_0,0}\}) \geq p$
from~$\tuple{q_{\init},0}$.
\end{lemma}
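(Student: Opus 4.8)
The plan is to prove both implications, with the threshold $p > \frac{1}{1+\eta}$ doing all the work in the ``only if'' direction (the ``if'' direction needs no threshold at all). The single observation that organizes everything is that the counter component of $\M\times[\ell]$ is a deterministic function of the number of elapsed steps (it just decrements), so a strategy in $\M\times[\ell]$ is nothing more than a strategy in $\M$, and reaching $\tuple{\q_0,0}$ from $\tuple{q_{\init},0}$ means exactly reaching $\q_0$ at a step whose index is a multiple of $\ell$.

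For the ``if'' direction, given the simple cycle $C=\q_\ell\cdots\q_0$ and a strategy $\beta$ with $\Pr^{\beta}(\Diamond\{\tuple{\q_0,0}\})\geq p$, I would build $\alpha$ in $\M$ that mimics $\beta$ until a path-outcome reaches $\q_0$ at a step $\equiv 0\bmod\ell$, and from then on plays the deterministic actions of $C$ forever. The point is that mass reaching $\q_0$ at a step $t\equiv 0\bmod\ell$ sits, at every later step $s$, on the single cycle state $\q_{(-s)\bmod\ell}$, a position independent of $t$. Hence all captured mass is concentrated in one state at each step, so $\norm{\M^{\alpha}_s}$ is at least the mass captured up to step $s$; since this tends to $\Pr^{\beta}(\Diamond\{\tuple{\q_0,0}\})\geq p$, we get $\liminf_s\norm{\M^{\alpha}_s}\geq p$.

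The ``only if'' direction is the substantial one. First I would prove a \emph{deterministic-walk lemma}: if $\M^{\alpha}_n$ and $\M^{\alpha}_{n+1}$ both put mass strictly above $\frac{1}{1+\eta}$ on their maximizing states $q_n,q_{n+1}$ (necessarily unique, since $\frac{1}{1+\eta}\geq\frac12$), then $\tuple{q_n,q_{n+1}}$ is an edge of the deterministic-transition graph $G$. This is the one real computation: if no deterministic action led from $q_n$ to $q_{n+1}$, every action would send at most a $(1-\eta)$ fraction of $q_n$'s mass $x_n$ to $q_{n+1}$, so the mass at $q_{n+1}$ would be at most $x_n(1-\eta)+(1-x_n)=1-\eta x_n<1-\frac{\eta}{1+\eta}=\frac{1}{1+\eta}$, a contradiction. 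Choosing $\epsilon$ with $p-\epsilon>\frac{1}{1+\eta}$ and $N$ with $\norm{\M^{\alpha}_n}\geq p-\epsilon$ for $n\geq N$, the maximizing states $(q_n)_{n\geq N}$ then trace an infinite walk in $G$, which eventually stays inside the set $S$ of states it visits infinitely often; $S$ is strongly connected in $G$ with some period $d$, and the class of $q_n$ in the periodic partition of $S$ is an affine function of $n\bmod d$.

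Finally I would synchronize, which is where I expect the main obstacle. Restricting to steps $n\equiv 0\bmod d$, all maximizers lie in one class, so by pigeonhole some fixed state $\q_0$ is the maximizer at infinitely many such steps, carrying mass $\geq p-\epsilon$ for every $\epsilon$. I pick any simple cycle $C$ through $\q_0$; its length $\ell$ is a multiple of $d$. For each $\epsilon$ I follow $\alpha$ until such a step $n$ (so mass $\geq p-\epsilon$ sits on $\q_0$), then play a \emph{deterministic} closed walk at $\q_0$ of a length $L$ chosen with $n+L\equiv 0\bmod\ell$. Such $L$ exists because the closed-walk lengths at $\q_0$ form a sub-semigroup of the multiples of $d$ with gcd $d$, hence contain every sufficiently large multiple of $d$ (and $-n$ is a multiple of $d$). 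As this walk uses only deterministic edges, it transports the whole $\geq p-\epsilon$ mass losslessly back to $\q_0$ at a step $\equiv 0\bmod\ell$, so it reaches $\tuple{\q_0,0}$ with probability $\geq p-\epsilon$; since reachability in an MDP admits an optimal strategy attaining the supremum, a single $\beta$ achieves $\Pr^{\beta}(\Diamond\{\tuple{\q_0,0}\})\geq p$. The delicate point is precisely this realignment: the maximizing walk need not be eventually periodic and the visits to $\q_0$ may be stuck at a nonzero residue modulo $\ell$, so the period-$d$ structure together with the lossless deterministic detour is essential to match the mass to the counter.
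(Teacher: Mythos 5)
Your proposal is correct, and most of its skeleton coincides with the paper's proof: your ``if'' direction (mimic $\beta$, capture mass when a path hits the cycle at a synchronized step, then follow the deterministic cycle forever) is the same construction the paper uses, and your deterministic-walk lemma is exactly the paper's central claim, proved by the same $\eta$-threshold computation showing that consecutive unique maximizers carrying mass above $\frac{1}{1+\eta}$ must be linked by an edge of the deterministic-transition graph. Where you genuinely diverge is the final phase-alignment step of the ``only if'' direction. The paper never modifies $\alpha$: it extracts a simple cycle that the sequence of maximizers $\p_k\p_{k+1}\cdots$ traverses infinitely often, pigeonholes on the residue modulo $\ell$ of the positions at which $\q_0$ occurs, and then \emph{rotates the labelling of the cycle} (setting $r_{(i+t)\bmod\ell}=\q_i$) so that the designated state $r_0$ is the maximizer at infinitely many positions $\equiv 0 \pmod{\ell}$; the witness $\beta$ is then simply the copy of $\alpha$ in $\M\times[\ell]$, and probability $\geq p$ follows by letting the tolerance tend to $0$ along that single strategy. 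You instead fix a state $\q_0$ maximizing at infinitely many steps $\equiv 0 \pmod d$ (with $d$ the period of the infinitely-visited SCC), pick any simple cycle through it, and realign by appending a deterministic closed walk whose length is tuned via the numerical-semigroup property of closed-walk lengths, producing strategies $\beta_\epsilon$ achieving $p-\epsilon$ and closing the gap with the existence of optimal reachability strategies in finite MDPs. Both routes are sound. The paper's rotation trick is more elementary: it needs neither graph-period theory, nor the Frobenius-type argument, nor the optimal-strategy existence theorem, and it yields one witness strategy directly. Your route buys a clearer structural explanation of \emph{why} alignment is always possible (the period $d$ divides both the step index and every cycle length), at the cost of that extra machinery; one small caution is that you should take the simple cycle and the closed walks inside the induced subgraph on the infinitely-visited set $S$, since for cycles leaving $S$ the claims ``$\ell$ is a multiple of $d$'' and ``the walk lengths have gcd $d$'' need not hold as stated (the argument survives, because the relevant gcd then divides $d$, but the justification changes).
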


\begin{proof}
First, assume that there exists a simple deterministic cycle~$\q_{\ell} \q_{\ell-1} \cdots \q_0$
with length~$\ell$ and a strategy~$\beta$ in $\M \times [\ell]$ that ensures the target set  
$\Diamond \{\tuple{\q_0,0}\}$ is reached with probability at least $p$  
from the state~$\tuple{q_{\init},0}$. Since randomization is not necessary for 
reachability objectives, we can assume that $\beta$ is a pure strategy.
We show that 
there exists a strategy~$\alpha$ such that $\liminf_{n \to \infty} \norm{\M^{\alpha}_n}\geq p$
from~$q_{\init}$.
From $\beta$, we construct  a pure strategy $\alpha$ in $\M$. 
Given a finite path $\rho = q_0 a_0 q_1 a_1\dots q_n$ in $\M$ (with $q_0 = q_{\init}$), there 
is a corresponding path $\rho' = \tuple{q_0,k_0}a_0 \tuple{q_1,k_1} a_1 \dots \tuple{q_n, k_n}$ in 
$\M \times [\ell]$ where $k_i = -i~mod~\ell$.
Since the sequence $k_0 k_1 \dots$ is uniquely determined from $\rho$, 
there is a clear bijection between the paths in $\M$ and the paths in $\M \times [\ell]$
that we often omit to apply and mention.
For $\rho$, we define $\alpha$ as follows: 
if $q_n = \q_{k_n}$, then there exists an action $a$ such that 
$\post(\q_{k_n},a)=\{\q_{k_{n+1}}\}=\{\q_{n+1}\}$ and we define $\alpha(\rho) = a$, 
otherwise let $\alpha(\rho) = \beta(\rho')$.	
Thus $\alpha$ mimics $\beta$  unless a 
state $q$ is reached at step $n$ such that 
$q =\q_{k}$ where $k=-n~mod~\ell$, 
and then $\alpha$ switches to always playing actions
that keeps $\M$ in the simple deterministic cycle~$\q_{\ell} \q_{\ell-1} \cdots \q_0$.
Below, we prove that 
given $\epsilon>0$ there exists $k$ such that 
for all $n\geq k$, we have $\norm{\M^{\alpha}_{n}} \geq p-\epsilon$.
It follows that $\liminf_{n \to \infty} \norm{\M^{\alpha}_n}\geq p$ from $q_{\init}$.
Since $\Pr^{\beta}(\Diamond \{\tuple{\q_0,0}\})\geq p$,
there exists $k$ such that 
$\Pr^{\beta}(\Diamond^{\leq k}  \{\tuple{\q_0,0}\}) \geq p-\epsilon$.
We assume w.l.o.g. that $k~mod~\ell = 0$.
For $i = 0, 1, \dots, \ell-1$,
let $R_i = \{\tuple{\q_{i},i}\}$. 
Then trivially $\Pr^{\beta}(\Diamond^{\leq k} \bigcup_{i=0}^{\ell-1} R_i) \geq p-\epsilon$
and since $\alpha$ agrees with $\beta$ on all finite paths  
that do not (yet) visit $\bigcup_{i=0}^{\ell-1} R_i$, given a path $\rho$ that
visits $\bigcup_{i=0}^{\ell-1} R_i$ (for the first time), only  actions that keep $\M$
 in the simple cycle $\q_{\ell} \q_{\ell-1} \cdots \q_{0}$
are played by $\alpha$ and thus all continuations of $\rho$ in the outcome of $\alpha$ 
will visit $\q_0$ after $k$ steps (in total).  
It follows that $\Pr^{\beta}(\Diamond^{k} \{\tuple{\q_0,0}\}) \geq p-\epsilon$, that is 
$\M^{\alpha}_k(\q_0) \geq p-\epsilon$. Thus, $\norm{\M^{\alpha}_k}\geq p-\epsilon$. Since
next, $\alpha$ will always play actions that keeps  $\M$ looping through the  cycle 
$\q_{\ell} \q_{\ell-1} \cdots \q_0$, we have $\norm{\M^{\alpha}_n}\geq p-\epsilon$ for all $n\geq k$. 

Second, assume that there exists a strategy~$\alpha$ such that 
$\liminf_{n \to \infty} \norm{\M^{\alpha}_n}\geq p$
from~$q_{\init}$.
Thus, for all $\epsilon>0$ there exists $k\in \nat$ such that  for all $n \geq k$ 
we have $\norm{\M^{\alpha}_{n}} \geq p-\epsilon$.
Fix $\epsilon<p-\frac{1}{1+\eta}$. 
Let $k$ be such that for all $n \geq k$, there 
exists a unique state $\p_n$ such that  $\M^{\alpha}_n(\p_n) \geq p-\epsilon$.
Below, we prove that  for all $n \geq k$, there exists some action $a\in \Act$ 
such that $\post(\p_{n},a)=\{\p_{n+1}\}$.  
Assume towards contradiction that there exists $j> k$ such that 
for all~$a$ there exists~$q\neq \p_{j+1}$ such that $\{q,\p_{j+1}\} \subseteq \post(\p_{j},a)$.  
Therefore, $\M^{\alpha}_{j+1}(q)\geq \M^{\alpha}_{j}(\p_j) \cdot \eta \geq (p- \epsilon) \cdot \eta$. 
Hence,  $$\M^{\alpha}_{j+1}(\p_{j+1})\leq 1- \M^{\alpha}_{j+1}(q)\leq 1-(p- \epsilon)\cdot \eta.$$ 
 Thus,
$p- \epsilon\leq \norm{\M^{\alpha}_{j+1}} \leq  1-(p- \epsilon)\cdot \eta$ that 
 gives $\epsilon\geq p-\frac{1}{1+\eta}$,  a contradiction.
This argument proves that for all $n\geq k$, there exists an action $a\in \Act$ 
such that $\post(\p_{n},a)=\{\p_{n+1}\}$. 
The finiteness of the state space~$Q$ entails that 
in the sequence $\p_k \p_{k+1} \cdots$, 
some state  and thus some simple deterministic cycle
occur infinitely often.
Let $\q_{\ell} \q_{\ell-1} \cdots \q_0$ be a cycle that occurs  infinitely often 
in the sequence $\p_k\p_{k+1} \cdots$ (in the right order).
For all $j$, let $i_j$ be the position of $\q_0$ in all occurrences of the cycle 
$\q_{\ell} \q_{\ell-1} \cdots \q_0$ in the sequence $\p_k \p_{k+1} \cdots$;
 and let $t_j=i_j~mod~\ell$. In the sequence $t_0 t_1 \cdots$,
there exists $0\leq t<\ell$ that appears infinitely often.
Let the cycle $r_{\ell} r_{\ell-1} \cdots r_0$ be such that
 $r_{(i+t)~mod~\ell}=\q_i$. Then, the cycle $r_{\ell} r_{\ell-1} \cdots r_0$
happens infinitely often in the sequence $\p_k\p_{k+1} \cdots$
such that the positions of $r_0$ are infinitely often $0$ (modulo~$\ell)$.
Therefore, the probability of $\M$ to be in $r_0$ in positions (modulo~$\ell$) equals to~$0$,
is infinitely often equal or greater than $p$.
Hence, for a  
strategy $\beta$ in $\M\times[\ell]$ that copies
all the plays of the strategy $\alpha$, we have $\Pr^{\beta}(\Diamond\{\tuple{r_0,0}\})\geq p$ from~$\tuple{q_{\init},0}$. \qed
\end{proof}

It follows directly from Lemma~\ref{lem: p-strong} with $p=1$ that 
almost-sure strong synchronization is equivalent to almost-sure
reachability to a deterministic cycle in~$\M\times[\ell]$. The same equivalence
holds for the sure and limit-sure winning modes.

\begin{lemma}\label{lem: strong}
A state $q_{\init}$ is sure (resp., almost-sure or limit-sure) winning
for the strongly synchronizing objective (according to $\fmax_Q$)
if and only if there exists a simple deterministic cycle~$\q_{\ell} \q_{\ell-1} \cdots \q_0$ 
such that $\tuple{q_{\init},0}$ is sure (resp., almost-sure or limit-sure) 
winning for the reachability objective~$\Diamond \{\tuple{\q_0,0}\}$
in $\M\times[\ell]$.
\end{lemma}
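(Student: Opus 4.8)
The plan is to derive all three equivalences from Lemma~\ref{lem: p-strong}, handling the almost-sure and limit-sure modes as near-immediate corollaries and treating the sure mode by a separate determinacy argument, since that mode is the one not literally captured by the $\liminf \geq p$ statement of Lemma~\ref{lem: p-strong}.

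For the almost-sure mode, I would first unfold the definition (as summarised in Table~\ref{tab:def-modes}): $q_{\init}$ is almost-sure winning for strong synchronization according to $\fmax_Q$ exactly when there is a strategy $\alpha$ with $\liminf_{n\to\infty}\norm{\M^\alpha_n}=1$. Instantiating Lemma~\ref{lem: p-strong} at $p=1$ then gives immediately that such a strategy exists iff there is a simple deterministic cycle $\q_\ell \cdots \q_0$ and a strategy $\beta$ in $\M\times[\ell]$ with $\Pr^\beta(\Diamond\{\tuple{\q_0,0}\})=1$, that is, iff $\tuple{q_{\init},0}$ is almost-sure winning for reachability to $\{\tuple{\q_0,0}\}$. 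No further work is required here.

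For the limit-sure mode, I would unfold the definition to: for every $\epsilon>0$ there is $\alpha_\epsilon$ with $\liminf_{n\to\infty}\norm{\M^{\alpha_\epsilon}_n}\geq 1-\epsilon$. Fixing $\epsilon$ small enough that $1-\epsilon > \frac{1}{1+\eta}$ and applying Lemma~\ref{lem: p-strong} with $p=1-\epsilon$ yields a simple deterministic cycle $C_\epsilon$ and a strategy $\beta_\epsilon$ reaching the corresponding cycle-start state with probability at least $1-\epsilon$ in $\M\times[\ell_\epsilon]$. The key extra step is that $\M$ has only finitely many simple deterministic cycles, so along a sequence $\epsilon_i \to 0$ one cycle $C$ must recur infinitely often; the associated strategies witness limit-sure reachability to $\{\tuple{\q_0,0}\}$ for that fixed $C$. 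The converse direction applies Lemma~\ref{lem: p-strong} the other way, for each $p=1-\epsilon$, turning limit-sure reachability of the fixed cycle into strategies with $\liminf_n\norm{\M^\alpha_n}\geq 1-\epsilon$, hence limit-sure strong synchronization.

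The sure mode is the one I expect to require the most care, since Lemma~\ref{lem: p-strong} only speaks of $\liminf \geq p$. Here I would use that $\norm{\M^\alpha_n}=1$ forces $\M^\alpha_n$ to be a Dirac distribution, so that a sure strongly synchronizing strategy drives, from some step $N$ on, all the probability mass onto a single state at every step; by determinism of the transitions taken and finiteness of $Q$, this sequence of states is eventually periodic and contains a simple deterministic cycle $\q_\ell \cdots \q_0$. Since path-outcomes are precisely the positive-probability paths, all of them coincide with this single-state sequence from step $N$ on, which (after rotating the cycle so that $\q_0$ is visited at positions $\equiv 0 \bmod \ell$) yields a strategy in $\M\times[\ell]$ that surely reaches $\tuple{\q_0,0}$. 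Conversely, a sure reachability strategy to $\tuple{\q_0,0}$ reaches it within a bounded number of steps $N$ by the attractor characterisation, and the modulo-$\ell$ counter guarantees every arrival occurs at a position $\equiv 0 \bmod \ell$; mimicking this strategy and then looping in the cycle (exactly as in the proof of Lemma~\ref{lem: p-strong}) places every path-outcome on the common state $\q_{-n \bmod \ell}$ at each step $n\geq N$, so $\norm{\M^\alpha_n}=1$ for all $n\geq N$ and $\alpha$ is sure strongly synchronizing.
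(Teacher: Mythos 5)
Your almost-sure case is exactly the paper's: unfold the definition and instantiate Lemma~\ref{lem: p-strong} at $p=1$. Your limit-sure case also matches the paper in its key step (there are finitely many simple deterministic cycles, so along $\epsilon_i \to 0$ one cycle recurs and witnesses limit-sure reachability for a fixed cycle); for the converse the paper instead invokes the coincidence of limit-sure and almost-sure reachability in MDPs and then reuses the $p=1$ case, whereas you re-apply the forward direction of Lemma~\ref{lem: p-strong} at each $p=1-\epsilon$. Both are sound; your variant is slightly more self-contained, the paper's is shorter. Your converse for the sure mode (bounded attractor, mimic-then-loop as in Lemma~\ref{lem: p-strong}, all paths landing on the common state $\q_{-n \bmod \ell}$) is also correct and is essentially the paper's construction.

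The gap is in the forward direction of the sure mode. From $\norm{\M^{\alpha}_n}=1$ for all $n\geq N$ you correctly obtain a sequence of Dirac states $\p_N \p_{N+1}\cdots$ linked by deterministic transitions, but the claim that ``by determinism of the transitions taken and finiteness of $Q$, this sequence of states is eventually periodic'' is false: the sure winning strategy may use memory and choose \emph{different} deterministic edges at successive visits to the same state. Concretely, take states $q,s_1,s_2$ with deterministic transitions $q\to s_1$ (action $a$), $q \to s_2$ (action $b$), and $s_1 \to q$, $s_2\to q$ on all actions; every strategy is sure strongly synchronizing according to $\fmax_Q$, but a pure strategy alternating $a$ and $b$ at visits to $q$ following a non-periodic pattern produces a Dirac sequence $q\, s_1\, q\, s_2\, q\, s_1\, q\, s_1\,\cdots$ that is not eventually periodic. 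Your rotation step (``so that $\q_0$ is visited at positions $\equiv 0 \bmod \ell$'') is justified only by this periodicity, so as written the argument does not go through.

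The repair is the double pigeonhole that the paper uses. First, some simple deterministic cycle occurs infinitely often as a \emph{contiguous block} of the Dirac sequence: every window $\p_n\cdots\p_{n+\abs{Q}}$ contains a repeated state, and a closest repeated pair yields a simple cycle of length at most $\abs{Q}$; since there are only finitely many simple cycles, one of them, say of length $\ell$, recurs in infinitely many windows. Second, among these occurrences the position of $\q_0$ modulo $\ell$ takes some value $t$ infinitely often; rotating the cycle by $t$ gives a simple deterministic cycle whose start state is visited infinitely often at positions $\equiv 0 \pmod{\ell}$, and then the strategy in $\M\times[\ell]$ that copies $\alpha$ surely reaches the target $\tuple{r_0,0}$, since at such a position all path-outcomes sit on the same state. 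With this substitution for the periodicity claim, the rest of your sure-mode argument stands.
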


\begin{proof}
The proof is organized in three sections:

{\bf (1) sure winning mode:}
First, assume that  there exists a simple deterministic  
cycle~$\q_{\ell} \q_{\ell-1} \cdots \q_0$ with length~$\ell$ 
 such that $\tuple{q_{\init},0}$ 
is  sure winning for the reachability objective~$\Diamond \{\tuple{\q_0,0}\}$.
Thus, there exists a pure memoryless strategy $\beta$
such that $\Outcomes(\tuple{q_{\init},0}, \beta) \subseteq \Diamond \{\tuple{\q_0,0}\}$.
Since  $\beta$ is memoryless,  there must be $k\leq \abs{Q}\times \ell$ such that 
$\Outcomes(\tuple{q_{\init},0}, \beta)\subseteq \Diamond^{\leq k} \{\tuple{\q_0,0}\}$
meaning that all infinite paths starting in $\tuple{q_{\init},0}$ and following~$\beta$
reach  $\tuple{\q_0,0}$ within~$k$ steps.
From $\beta$, we construct  a pure finite-memory strategy $\alpha$ in $\M$ that is 
represented by $T=\tuple{\mem,i, \alpha_u, \alpha_n}$ 
where $\mem=\{0,\cdots, \ell-1\}$ is the set of modes.
The idea is that $\alpha$ simulates what $\beta$ plays in the state~$\tuple{q,i}$,
in the state $q$ of~$\M$ and mode~$i$ of~$T$ (there is only one exception).
Thus, the  initial mode is~$0$. The update function 
 only decreases modes by~$1$ ($\alpha_u(i,a, q)=(i-1)~mod~\ell$ 
for all states~$q$ and actions~$a$) since by 
taking any transition the 
mode is decreased by~$1$.  
The next-move function $\alpha_n(i,q)$ is defined 
as follows: $\alpha_n(i,q)=\beta(\tuple{q,i})$ for all states~$q$ and modes $0\leq i<\ell$, 
except when  $q= \q_i$, in this case let $\alpha_n(i,q)=a$ where
$\post(\q_i,a) =\{q_{i-1}\}$.
Thus $\beta$ mimics $\alpha$  unless a 
state $q$ is reached at step $n$ such that $q =\q_{-n~mod~\ell}$, 
and then $\alpha$ switches to always playing actions
that keeps $\M$ in the simple deterministic cycle~$\q_{\ell} \q_{\ell-1} \cdots \q_0$.
Now we prove that 
 $q_{\init}$ is sure winning for the strongly synchronizing objective according to $\fmax_Q$.
Let $j\geq k$ be such that $j~mod~\ell = 0$.
Let $R = \{\tuple{\q_i,i}\mid 0\leq i<\ell\}$. 
Thus obviously  $\Outcomes(\tuple{q_{\init},0}, \beta) \subseteq \Diamond  R$. 
and since $\alpha$ agrees with $\beta$ on all finite paths  
that do not (yet) visit $R$, given a path $\rho$ that
visits $R$ (for the first time), only  actions that keep $\M$
 in the simple cycle $\q_{\ell} \q_{\ell-1} \cdots \q_0$
are played by $\alpha$ and thus all continuations of $\rho$ in the outcome of $\alpha$ 
will visit $\q_0$ after $j$ steps.  
It follows that $\Pr^{\beta}(\Diamond^{j} \{\tuple{\q_0,0}\}) =1$, that is 
$\M^{\alpha}_j(q_0)=1$. Thus, $\norm{\M^{\alpha}_j}=1$. Since
next, $\alpha$ will always play actions that keeps  $\M$ looping through the  cycle 
$\q_{\ell} \q_{\ell-1} \cdots \q_0$, we have $\norm{\M^{\alpha}_n}=1$ for all $n\geq j$.

Second, assume that there exists a strategy~$\alpha$ and $k$ 
such that for all $n\geq k$ we have $\norm{\M^{\alpha}_n}=1$ 
from the initial state~$q_{\init}$.
For all~$n\geq k$, let $\p_n$ be a state such that  $\M^{\alpha}_n(\p_n)=1$. 
The finiteness of the state space~$Q$ entails that 
in the sequence $\p_k \p_{k+1} \cdots$, 
some state  and thus some simple deterministic cycle
occur infinitely often.
Let $\q_{\ell} \q_{\ell-1} \cdots \q_0$ be a cycle that occurs  infinitely often 
in the sequence $\p_k\p_{k+1} \cdots$ (in the right order).
For all $j$, let $i_j$ be the position of $\q_0$ in all occurrences of the cycle 
$\q_{\ell} \q_{\ell-1} \cdots \q_0$ in the sequence $\p_k \p_{k+1} \cdots$;
 and let $t_j=i_j~mod~\ell$. In the sequence $t_0 t_1 \cdots$,
there exists $0\leq t<\ell$ that appears infinitely often.
Let the cycle $r_{\ell} r_{\ell-1} \cdots r_0$ be such that
 $r_{(i+t)~mod~\ell}=\q_i$. Then, the cycle $r_{\ell} r_{\ell-1} \cdots r_0$
happens infinitely often in the sequence $\p_k\p_{k+1} \cdots$
such that the positions of $r_0$ are infinitely often $0$ (modulo~$\ell)$.
Hence, for a  
strategy $\beta$ in $\M\times[\ell]$ that copies
all the plays of the strategy $\alpha$, 
we have $\Outcomes(\tuple{q_{\init},0}, \beta) \subseteq \Diamond \{\tuple{r_0,0}\}$ from
the initial state~$\tuple{q_{\init},0}$.

{\bf (2) almost-sure winning mode:} 
This case is an immediate result from Lemma~\ref{lem: p-strong}, by 
taking $p=1$.

{\bf (3) limit-sure winning mode:} First, assume that there exists a simple deterministic  
cycle~$\q_{\ell} \q_{\ell-1} \cdots \q_0$ with length~$\ell$ 
 such that $\tuple{q_{\init},0}$ is  limit-sure (and thus almost-sure) winning for the 
reachability objective~$\Diamond \{\tuple{\q_0,0}\})$. 
Since $\tuple{\q_{\init}, 0}$ is almost-sure for reachability objective,
then $q_{\init}$ is almost-sure (and thus limit-sure) for strongly synchronizing objective.
Second, assume that   $q_{\init}$ is limit-sure
winning for the strongly synchronizing objective (according to $\fmax_Q$).
It means that for all $i$ there exists a strategy $\alpha_i$ such that 
$\liminf_{n\to\infty} \norm{\M^{\alpha_i}_n}\geq 1-2^{-i}$.
Let $k$ be such that $1-2^{-k}\geq \frac{1}{1+\eta}$.
By Lemma~\ref{lem: p-strong},
for all~$i\geq k$
 there exists a simple deterministic  cycle~$c_i = \p_{\ell_i} \p_{\ell_i-1} \cdots \p_0$ with length~$\ell_i$ 
and a strategy~$\beta_i$ in~$\M\times[\ell_i]$ such that 
$\Pr^{\beta_i}(\Diamond \{\tuple{\q_0,0}\}) \geq 1-2^{-i}$
 from~$\tuple{q_{\init},0}$.
Since the number of simple deterministic cycle is finite,
there exists some simple cycle $c$ that occurs infinitely often in the
sequence $c_k c_{k+1} c_{k+2}\cdots$. We see that 
for the cycle $c = \q_{\ell} \q_{\ell-1} \cdots \q_0$, the states 
$\tuple{\q_{\init},0}$ is limit-sure winning for
the reachability objective~$\Diamond \{\tuple{\q_0,0}\})$.
\qed
\end{proof}

Since the winning regions of almost-sure and limit-sure winning
coincide for reachability objectives in MDPs~\cite{AHK07},
the next corollary follows from  Lemma~\ref{lem: strong}. 

\begin{corollary}\label{col: almost-limit-strong}
$\winlim{strongly}(\fmax_T)= \winas{strongly}(\fmax_T)$
for all target sets~$T$.
\end{corollary}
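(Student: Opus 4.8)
The plan is to deduce the corollary directly from Lemma~\ref{lem: strong} together with the classical fact that the almost-sure and limit-sure winning regions coincide for reachability objectives in MDPs~\cite{AHK07}. Since by the duplication construction (analogous to the one in Lemma~\ref{lem:weakly-max-sum}, and already noted at the start of this section) the membership problem with $\fmax_T$ reduces to the one with $\fmax_Q$, it suffices to prove the statement for $\fmax_Q$. As $\winas{strongly}(\fmax_Q) \subseteq \winlim{strongly}(\fmax_Q)$ holds by definition, only the reverse inclusion needs an argument.

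First I would fix a state $q_{\init} \in \winlim{strongly}(\fmax_Q)$. By the limit-sure case of Lemma~\ref{lem: strong}, there exists a simple deterministic cycle $\q_{\ell} \q_{\ell-1} \cdots \q_0$ such that $\tuple{q_{\init},0}$ is limit-sure winning for the reachability objective $\Diamond \{\tuple{\q_0,0}\}$ in $\M \times [\ell]$. Next I would invoke the coincidence of almost-sure and limit-sure reachability in MDPs~\cite{AHK07}, applied to the product $\M \times [\ell]$ and the target $\{\tuple{\q_0,0}\}$: this upgrades $\tuple{q_{\init},0}$ to an almost-sure winning state for $\Diamond \{\tuple{\q_0,0}\}$. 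Finally, the almost-sure case of Lemma~\ref{lem: strong} (equivalently, Lemma~\ref{lem: p-strong} with $p=1$), applied to the \emph{same} cycle, yields $q_{\init} \in \winas{strongly}(\fmax_Q)$, closing the inclusion and giving the equality for $\fmax_Q$, and hence for $\fmax_T$ via the reduction.

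The whole argument is a short chain of equivalences mediated by the product MDP $\M \times [\ell]$, so there is no genuine obstacle; the only point deserving care is that the existential quantification over the witnessing cycle is identical in the almost-sure and limit-sure characterizations of Lemma~\ref{lem: strong}. This lets the reachability equivalence be applied \emph{cycle by cycle}, so that the cycle witnessing limit-sure strong synchronization is exactly the one witnessing almost-sure strong synchronization, and no reindexing or reselection of cycles is required when passing between the two modes.
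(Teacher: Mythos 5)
Your proposal is correct and follows essentially the same route as the paper: the paper also derives the corollary by combining Lemma~\ref{lem: strong} with the coincidence of almost-sure and limit-sure winning regions for reachability objectives in MDPs~\cite{AHK07}, with the reduction from $\fmax_T$ to $\fmax_Q$ via state duplication handled as noted at the start of the section. Your explicit remark that the same witnessing cycle serves in both modes is exactly the (tacit) content of the paper's one-line argument.
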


If there exists a cycle $c$ satisfying the condition in Lemma~\ref{lem: strong},
then all cycles reachable from $c$ in the graph $G$ of deterministic transitions
also satisfies the condition. Hence it is sufficient to check the condition
for an arbitrary simple cycle in each strongly connected component (SCC)
of $G$. It follows that strong synchronization can be 
decided in polynomial time (SCC decomposition can be computed in polynomial 
time, as well as sure, limit-sure, and almost-sure reachability in MDPs).
The length of the cycle gives a linear bound on the memory needed to win, and 
the bound is tight.

\begin{theorem}\label{theo:strongly-max}
For the three winning modes of strong synchronization 
according to $\fmax_T$ in MDPs:

\begin{enumerate}
\item (Complexity). The membership problem is PTIME-complete.

\item (Memory). Linear memory is necessary and sufficient for both pure 
and randomized strategies, and pure strategies are sufficient.
\end{enumerate}
\end{theorem}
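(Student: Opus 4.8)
The plan is to prove both items by combining the reachability characterization already set up in Lemma~\ref{lem: p-strong} and Lemma~\ref{lem: strong} with a monotonicity observation for the upper bounds, and then to supply matching lower bounds. For the PTIME upper bound I would first reduce the $\fmax_T$ membership problem to the $\fmax_Q$ one by the state-duplication construction sketched before this subsection (as for Lemma~\ref{lem:weakly-max-sum}), so that only states of $T$ can ever carry the whole probability mass. By Lemma~\ref{lem: strong}, $q_{\init}$ is winning, in any of the three modes, iff some simple deterministic cycle $\q_{\ell}\cdots\q_0$ admits a strategy in $\M\times[\ell]$ reaching $\tuple{\q_0,0}$ in the corresponding reachability mode; equivalently (as in the proofs of Lemma~\ref{lem: p-strong} and Lemma~\ref{lem: strong}) reaching the diagonal set $\{\tuple{\q_i,i}\mid 0\le i<\ell\}$, which frees us from fixing the phase. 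The key structural fact is monotonicity: once the whole mass (resp.\ a mass tending to~$1$) is synchronized on a cycle, following any deterministic path in the graph $G$ of deterministic transitions to another reachable cycle and looping there keeps $\norm{\M^{\alpha}_n}$ equal to (resp.\ tending to)~$1$. Hence the family of winning cycles is closed under reachability in $G$, and it suffices to test one arbitrary simple cycle per SCC of $G$. The algorithm computes the SCC decomposition of $G$ (polynomial) and, for each SCC, takes a simple cycle of length $\ell\le\abs{Q}$, builds $\M\times[\ell]$ (of size at most $\abs{Q}^2$), and decides sure, almost-sure, or limit-sure reachability of the diagonal set; all three are polynomial for MDPs, and almost-sure and limit-sure coincide, in line with Corollary~\ref{col: almost-limit-strong}. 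This gives membership in PTIME.

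For PTIME-hardness I would reduce from the monotone Boolean circuit value problem. Encode the circuit as an MDP whose states are the gates: at a $\vee$-gate the controller chooses which child to move to (two actions $L,R$), at a $\wedge$-gate the play branches to both children with probability $\frac{1}{2}$, every $1$-leaf moves deterministically to a single absorbing state $\sync$, and every $0$-leaf splits its mass into two distinct states that can never be resynchronized. A strategy corresponds to a subtree selecting one child at each $\vee$-gate and both at each $\wedge$-gate, and the root is sure (hence almost-sure and limit-sure) strongly synchronizing in $\{\sync\}$ exactly when the selected subtree has only $1$-leaves, i.e.\ when the circuit evaluates to~$1$. The same instance separates the ``yes'' case (sure winning) from the ``no'' case (not even limit-sure winning), giving hardness uniformly for the three modes, hence PTIME-completeness with a singleton target.

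For the memory bound, sufficiency is immediate from the synthesis in Lemma~\ref{lem: p-strong} and Lemma~\ref{lem: strong}: the winning strategy lifts a \emph{pure} reachability strategy in $\M\times[\ell]$ through a transducer whose modes are exactly the modulo-$\ell$ counter, and $\ell\le\abs{Q}$, so pure strategies with linear memory suffice. For necessity I would exhibit a family $\M_n$ of size $O(n)$ that is sure (thus almost-sure and limit-sure) strongly synchronizing but whose winning strategies must keep the probability mass aligned with a deterministic cycle of length $\Theta(n)$, generalizing the two-mode ``play $b$ then $a$'' gadget to a control state that recurs at all $n$ phases and must be played with an action pattern of period~$n$; a standard pumping argument then shows that any strategy with fewer than $\ell$ memory modes (even randomized) repeats a mode at two positions of different phase and thereby desynchronizes the mass, so $\Omega(n)$ memory is required. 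The main obstacle is precisely this memory lower bound: one must design $\M_n$ so that no strategy can avoid tracking the phase modulo a linear-length cycle — in particular so that reuniting the masses created by the probabilistic branching is possible only in phase — and carry out the pumping argument carefully. By contrast, the modulo/shift bookkeeping underlying the upper bound is already delivered by Lemma~\ref{lem: p-strong} and Lemma~\ref{lem: strong}.
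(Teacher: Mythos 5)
Your proposal is correct and follows essentially the same route as the paper: the PTIME upper bound via Lemma~\ref{lem: strong}, closure of winning cycles under reachability in the deterministic-transition graph $G$, and one reachability check per SCC (the paper restricts to bottom SCCs, an immaterial difference); the same monotone-circuit reduction for PTIME-hardness; and the same modulo-$\ell$ transducer for memory sufficiency together with the same family $\M_n$ (waiting in a control state until the mass returns around a length-$n$ deterministic cycle) for the linear lower bound. The memory lower bound you flag as the remaining obstacle is treated no more rigorously in the paper itself, which likewise only sketches it.
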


\begin{proof}
First, we prove the PTIME upper bound.
Given an MDP~$\M=\tuple{Q,\Act,\delta}$ and a state~$q_{\init}$, we say a simple deterministic 
cycle~$c=\q_{\ell} \q_{\ell-1} \cdots \q_0$ is sure winning (resp., almost-sure
 and limit-sure) for strong synchronization from~$q_{\init}$ if $\tuple{q_{\init},0}$
is sure winning  (resp., almost-sure and limit-sure) 
for the reachability objective~$\Diamond \{ \tuple{\q_0,0}\}$ in $\M\times[\ell]$.
We show that  
if  $c$ is sure winning 
(resp., almost-sure and limit-sure) for strong synchronization from~$q_{\init}$,
then so are all
  simple cycles
$c'=\p_{\ell'} \p_{\ell'-1} \cdots \p_0$  reachable from $c$
in the deterministic digraph  induced by $\M$.  

{\bf (1) sure winning:} Since $c$ is sure winning for  strong synchronization from~$q_{\init}$,
 $\M$ is $1$-synchronized in $\q_0$. 
Since there is a path via deterministic transitions
from $\q_0$ to $\p_0$,  $\M$ is $1$-synchronized in $\p_0$ too. 
So the cycle $c'$ is 
sure wining for  strong synchronization from~$q_{\init}$, too.

{\bf (2) limit-sure winning:} Assume that $c$ is limit-sure winning for  strong 
synchronization from~$q_{\init}$. 
By definition, for all $i\in \nat$, there exists $n$ such that  for 
all $j>n$ we have $\M$ is $1-2^{-i-j}$ in $\q_0$. It implies that 
for all $i$ there exists $n$ such that $\M$ is $1-2^{-2i}$-synchronized in $\q_0$.
Since there is a path via deterministic transitions
from $\q_0$ to $\p_0$, then $\M$ is $1-2^{-2i}$-synchronized in $\p_0$ for all~$i$. So the cycle $c'$ is 
limit-sure wining for  strong synchronization from~$q_{\init}$, too.

{\bf (3) almost-sure winning:} By corollary~\ref{col: almost-limit-strong}, 
since a cycle is almost-sure winning for  strong 
synchronization from~$q_{\init}$ if and only if it is limit-sure winning, the results follows.

The above arguments prove that 
if  a simple deterministic cycle $c$ is
 sure winning (resp., almost-sure and limit-sure)
 for strongly synchronizing objective from~$q_{\init}$, 
then all simple cycles reachable 
from $c$ in the graph of deterministic transitions~$G$ induced by $\M$, are 
sure winning (resp., almost-sure and limit-sure). 
In particular, it holds for all
simple cycles in the bottoms SCCs reachable
from $c$ in~$G$. 
Therefore, to decide membership problem for strongly synchronizing objective, 
it suffices to only check whether one cycle in each bottom SCC of $G$ is sure winning
(resp., almost-sure and limit-sure). 
Since the SCC decomposition for a digraph is in PTIME, 
and since the number of bottom SCCs in a connected digraph
is at most the size of the digraph (the number of states $\abs{Q}$),
the PTIME upper bound follows.  

For the PTIME-hardness, for all $\mu\in \{sure, almost, limit\}$ the proof is
by a reduction from monotone Boolean circuit problem (MBC).
Given an MBC with an underlying binary tree, 
the value of leaves are either $0$ or $1$ (called $0$ or $1$-leaves),
and the value of other vertices,  labeled
with $\wedge$ or $\vee$, are computed inductively.
It is known that deciding whether the value of the \emph{root}
is $1$ for a given  MBC,  is PTIME-complete~\cite{GB88}.
From an MBC, we construct an MDP $\M$ where the states are the 
vertices of the tree with three new absorbing states
$\sync$, $q_1$ and $q_2$,  and two actions $L,R$. 
On both actions $L$ and $R$, the next successor of
the $1$-leaves is only $\sync$, and the next successor of
the $0$-leaves is  $q_1$ or $q_2$ with probability~$\frac{1}{2}$.
The next successor of
a $\wedge$-state is each of their children with equal probability,
on all actions.
The next successor of a $\vee$-state is its left (resp., right)
child by action $L$ (resp., action $R$).
We can see that $\M$ can be synchronized only in $\sync$.

We call a subtree \emph{complete} if 
($1$)  \emph{root} is in subtree,
($2$) at least one child of all $\vee$-vertices is
 in the subtree,
($3$) both children of all $\wedge$-vertices  are 
 in the subtree. 
There is a bijection between a complete
subtree and a strategy in~$\M$.
The value of \emph{root} is 1 if and only if
there is a complete subtree such that it has
no $0$-leaves (all leaves are $1$-leaves).
For such subtrees, all plays under the corresponding strategy  
reach some $1$-leave and thus are  synchronized in~$\sync$.
It means that $root \in \win{strongly}{\mu}(\sync)$
if and only if
  the value of $root$ is $1$.

Finally, the result on memory requirement is established as follows.
Since memoryless strategies are sufficient for reachability objectives
in MDPs, it follows from the proof of Lemma~\ref{lem: p-strong} and 
Lemma~\ref{lem: strong} that the (memoryless) winning strategies in $\M\times[\ell]$ 
can be transferred to winning strategies with memory $\{0,1,\cdots,\ell-1\}$ in $\M$.
Since $\ell \leq \abs{Q}$, linear-size memory is sufficient to win strongly 
synchronizing objectives. We present a family of MDPs $\M_n$ ($n \in \nat$)
that are sure winning for strongly synchronization (according to $\max_Q$), 
and where the sure winning strategies require linear memory. 
The MDP $\M_2$ is shown in \figurename~\ref{fig:strong-max-memory},
and the MDP $\M_n$ is obtained by replacing the cycle $q_2 q_3$ of deterministic
transitions by a simple cycle of length $n$. Note that only in $q_1$ there is a real
strategic choice. Since $q_1$ and $q_2$ contain probability, we need to wait
in $q_1$ (by playing $b$) until we play $a$ when the probability in $q_2$ 
comes back in $q_2$ through the cycle. We need to play $n-1$ times $b$, 
and then $a$, thus linear memory is sufficient and it is easy to show that
it is necessary to ensure strongly synchronization.
 \qed
\end{proof}

\begin{exclude}
\begin{figure}[t]
\begin{center}
    \begin{picture}(50,42)

\node[Nmarks=i](n0)(10,20){$q_{\init}$}
\node[Nmarks=n](n1)(25,30){$q_1$}
\node[Nmarks=n](n2)(25,10){$q_2$}
\node[Nmarks=n](n3)(45,10){$q_3$}

\drawpolygon[dash={0.8 0.5}0](20,18)(50,18)(50,2)(20,2)
\node[Nframe=n](label)(35,0){target set $T$}

\drawedge[ELpos=40, ELside=l](n0,n1){$a,b: \frac{1}{2}$}
\drawedge[ELpos=40, ELside=r](n0,n2){$a,b: \frac{1}{2}$}
\drawedge(n0,n2){}
\drawloop[ELside=l,loopCW=y, loopangle=90, loopdiam=4](n1){$b$}
\drawedge(n1,n2){$a$}

\drawedge[ELpos=50, ELside=l, curvedepth=4](n2,n3){$a,b$}
\drawedge[ELpos=50, ELside=l, curvedepth=4](n3,n2){$a,b$}

\end{picture}
\end{center}
 \caption{An MDP that all strategies to win sure strongly synchronizing with function $\fmax_{\{q_2,q_3\}}$ require memory.\label{fig:strong-max-memory}}
\end{figure}
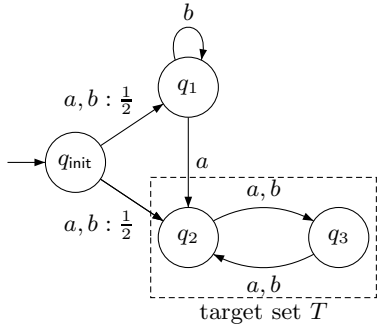
\end{exclude}

\subsection{Strong synchronization with function $\fsum$}

The strongly synchronizing objective with function $\fsum_{T}$ requires
that eventually all the probability mass remains in $T$. We show that this
is equivalent to a traditional reachability objective with target defined
by the set of sure winning initial distributions for the safety objective $\Box T$.

It follows that almost-sure (and limit-sure) winning for strong synchronization
is equivalent to almost-sure (or equivalently limit-sure) winning for
the coB\"uchi objective $\Diamond \Box T = \{q_{0} a_{0} q_{1} \dots \in \Plays(\M) 
\mid \exists j \cdot \forall i > j: q_i \in T\}$. However, sure strong synchronization
is not equivalent to sure winning for the coB\"uchi objective:
the MDP in \figurename~\ref{fig:coBuchi} is sure winning for the coB\"uchi objective $\Diamond \Box \{q_{\init},q_2\}$
from $q_{\init}$, but not sure winning for the reachability objective $\Diamond S$
where $S = \{q_2\}$ is the winning region for the safety objective $\Box \{q_{\init},q_2\}$
(and thus not sure strongly synchronizing). Note that this MDP is almost-sure
strongly synchronizing in target $T = \{q_{\init},q_2\}$ from $q_{\init}$, and almost-sure
winning for the coB\"uchi objective $\Diamond \Box T$, as well as almost-sure
winning for the reachability objective $\Diamond S$.

\begin{figure}[t]
\begin{center}
    \begin{picture}(60,18)(0,2)

\node[Nmarks=ir](q0)(5,5){$q_{\init}$}
\node[Nmarks=n](q1)(30,5){$q_1$}
\node[Nmarks=r](q2)(55,5){$q_2$}

\drawloop[ELside=l, loopangle=90, loopdiam=4](q0){$a:\frac{1}{2}$}
\drawedge[ELside=l, ELdist=.5](q0,q1){$a:\frac{1}{2}$}
\drawedge[ELside=l, ELdist=1](q1,q2){$a$}
\drawloop[ELside=l, loopangle=90, loopdiam=4](q2){$a$}

\end{picture}
\end{center}
 \caption{An MDP such that $q_{\init}$ is sure-winning for 
coB\"uchi objective in $T=\{q_{\init},q_2\}$ but not for 
strong synchronization according to $\fsum_T$. 
\label{fig:coBuchi}}
\end{figure}
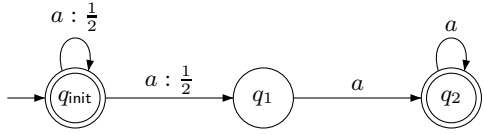

\begin{lemma}\label{lem: strong-sum}
Given a target set~$T$, an MDP $\M$ is sure (resp., almost-sure or limit-sure) winning 
for the strongly synchronizing objective according to $\fsum_T$
if and only if $\M$ is sure (resp., almost-sure or limit-sure) winning 
for the reachability objective~$\Diamond S$  
where $S$ is the sure winning region for the safety objective~$\Box T$.
\end{lemma}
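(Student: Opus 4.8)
The plan is to prove the equivalence separately for the three winning modes, exploiting the key observation that the strongly synchronizing objective according to $\fsum_T$ asks that eventually all probability mass stays in $T$, i.e.\ that $\M^{\alpha}_n(T) = 1$ for all sufficiently large $n$. The natural bridge is the set $S$ of distributions that are sure winning for the safety objective $\Box T$; concretely $S$ is the largest set $S \subseteq T$ such that from every $q \in S$ there is an action keeping the successor inside $S$ (so $S = \Pre^{*}$-stable inside $T$, computable as the greatest fixpoint of $X \mapsto T \cap \Pre(X)$). First I would record the elementary fact that $\M^{\alpha}_n(T)=1$ for all $n \geq N$ forces the support of $\M^{\alpha}_N$ to lie in $S$: if probability mass sits at a state from which no action can keep it in $T$, then at the next step some mass leaks out of $T$, contradicting strong synchronization. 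Conversely, once the whole mass is in $S$, a safe strategy keeps it in $S \subseteq T$ forever, so $\M^{\alpha}_n(T)=1$ thereafter.

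\textbf{The three modes.}
For the \emph{sure} mode I would argue that a sure strongly synchronizing strategy must, after finitely many steps, put \emph{all} the mass into $S$ (by the support argument above applied with probability exactly $1$), which is precisely a sure winning play for $\Diamond S$; conversely a sure reachability strategy for $\Diamond S$ followed by a safe strategy inside $S$ is sure strongly synchronizing. For the \emph{almost-sure} and \emph{limit-sure} modes I would use the same decomposition but now the mass need only tend to $1$. The forward direction says: if for every $\epsilon$ the mass in $T$ is eventually at least $1-\epsilon$, then (again by the leakage argument, quantitatively) the mass reaching and remaining in $S$ tends to $1$, giving limit-sure / almost-sure reachability of $S$. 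The backward direction composes a reachability strategy driving mass $1-\epsilon$ into $S$ with a safe strategy that then traps it, yielding strong synchronization.

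\textbf{Main obstacle.}
The delicate point is the forward direction in the limit-sure and almost-sure modes: converting "mass in $T$ eventually exceeds $1-\epsilon$ at \emph{every} step from some point on" into "mass genuinely accumulates in the \emph{safe} region $S$." The subtlety is that mass can reside in $T \setminus S$ and still satisfy a coB\"uchi-type condition without being synchronizable, as the example in \figurename~\ref{fig:coBuchi} shows for the sure mode; I would need a quantitative leakage bound (each step, a fixed fraction $\eta$ of the mass sitting outside $S$ must escape $T$ within a bounded number of steps, where $\eta$ is the smallest positive transition probability) to conclude that the liminf of the mass in $S$ equals the liminf of the mass in $T$. I expect this quantitative trapping argument, together with invoking the coincidence of almost-sure and limit-sure winning regions for reachability in MDPs~\cite{AHK07} to unify the last two modes, to be the technical heart of the proof; the sure case and both backward directions are routine once $S$ is identified as the greatest safe fixpoint inside $T$.
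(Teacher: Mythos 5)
Your proposal follows essentially the same route as the paper's proof: the quantitative leakage bound (mass on a state outside the sure-safe region $S$ must escape $T$ within at most $\abs{Q}$ steps with probability at least $\eta^{\abs{Q}}$) is exactly the paper's argument for the forward direction in the limit-sure mode, the almost-sure case is handled, as you suggest, via the coincidence of almost-sure and limit-sure reachability in MDPs, and the converse is the same composition of a strategy for $\Diamond S$ with a safety strategy for $\Box T$ upon entering $S$. One caveat to fix when writing it up: the intermediate claim that the $\liminf$ of the mass in $S$ equals the $\liminf$ of the mass in $T$ is false for a fixed strategy (take $T=\{t\}$ where the only action from $t$ stays in $t$ with probability $\frac{1}{2}$ and moves to $u\notin T$ with probability $\frac{1}{2}$, and $u$ returns deterministically to $t$: then $S=\emptyset$ while the mass in $T$ tends to $\frac{2}{3}$); what the leakage bound actually yields---and all that is needed, since $\epsilon\to 0$ in both the almost-sure and limit-sure modes---is that the mass in $S$ is at least $1-\epsilon$ at any step from which the mass in $T$ remains at least $1-\frac{\epsilon}{\abs{Q}}\cdot\eta^{\abs{Q}}$ forever.
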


\begin{proof}
First, assume that a state $q_{\init}$ of $\M$ is sure (resp., almost-sure 
or limit-sure) winning for the strongly synchronizing objective according to $\fsum_T$,
and show that $q_{\init}$ is sure (resp., almost-sure or limit-sure) 
winning for the reachability objective~$\Diamond S$.

$(i)$ \emph{Limit-sure winning}. For all $\epsilon >0$, 
let $\epsilon' = \frac{\epsilon}{\abs{Q}} \cdot \eta^{\abs{Q}}$ where
$\eta$ is the smallest positive probability in the transitions of $\M$.
By the assumption, from $q_{\init}$ there exists a strategy~$\alpha$ and $N \in \nat$
such that for all $n \geq N$, we have $\M^{\alpha}_n(T) \geq 1-\epsilon'$.
We claim that at step $N$, all non-safe states have probability at most $\frac{\epsilon}{\abs{Q}}$,
that is $\M^{\alpha}_N(q) \leq \frac{\epsilon}{\abs{Q}}$ for all $q \in Q \setminus S$.
Towards contradiction, assume that $\M^{\alpha}_N(q) > \frac{\epsilon}{\abs{Q}}$
for some non-safe state $q \in Q \setminus S$. Since $q \not\in S$ is not safe, 
there is a path of length $\ell \leq \abs{Q}$ from $q$ to a state in $Q \setminus T$,
thus with probability at least $\eta^{\abs{Q}}$. It follows that after $N + \ell$ steps we have
$\M^{\alpha}_{N+\ell}(Q \setminus T) > \frac{\epsilon}{\abs{Q}} \cdot \eta^{\abs{Q}} = \epsilon'$,
in contradiction with the fact $\M^{\alpha}_n(T) \geq 1-\epsilon'$ for all $n \geq N$.
Now, since all non-safe states have probability at most $\frac{\epsilon}{\abs{Q}}$
at step $N$, it follows that $\M^{\alpha}_N(Q \setminus S) \leq \frac{\epsilon}{\abs{Q}}\cdot \abs{Q} = \epsilon$
and thus $\Pr^{\alpha}(\Diamond S) \geq 1-\epsilon$. Therefore $\M$ is limit-sure
winning for the reachability objective $\Diamond S$ from $q_{\init}$.

$(ii)$ \emph{Almost-sure winning}. Since almost-sure strongly synchronizing
implies limit-sure strongly synchronizing, it follows from $(i)$ that
$\M$ is limit-sure (and thus also almost-sure) winning for the reachability objective 
$\Diamond S$, as limit-sure and almost-sure reachability coincide for MDPs~\cite{AHK07}.

$(iii)$ \emph{Sure winning}. From $q_{\init}$ there exists a strategy~$\alpha$ and $N \in \nat$ such 
that for all $n \geq N$, we have $\M^{\alpha}_n(T)=1$. Hence $\alpha$ is sure
winning for the reachability objective $\Diamond \Supp(\M^{\alpha}_N)$,
and from all states in $\Supp(\M^{\alpha}_N)$ the strategy $\alpha$ ensures
that only states in $T$ are visited. It follows that $\Supp(\M^{\alpha}_N) \subseteq S$
is sure winning for the safety objective $\Box T$, and thus $\alpha$ is sure winning
for the reachability objective $\Diamond S$ from $q_{\init}$.

For the converse direction of the lemma, assume that a state $q_{\init}$ is sure (resp., almost-sure 
or limit-sure) winning for the reachability objective~$\Diamond S$.
We construct a winning strategy for strong synchronization in $T$ as follows: 
play according to a sure (resp., almost-sure or limit-sure) winning
strategy for the reachability objective~$\Diamond S$, and whenever a state
of $S$ is reached along the play, then switch to a winning
strategy for the safety objective~$\Box T$. The constructed 
strategy is sure (resp., almost-sure or limit-sure) winning
for strong synchronization according to $\fsum_T$ because for
sure winning, after finitely many steps all paths from $q_{\init}$
end up in $S \subseteq T$ and stay in $S$ forever, and for almost-sure
(or equivalently limit-sure) winning, for all $\epsilon > 0$, after sufficiently 
many steps, the set $S$ is reached with probability at least $1 - \epsilon$,
showing that the outcome is strongly ($1 - \epsilon$)-synchronizing in $S \subseteq T$,
thus the strategy is almost-sure (and also limit-sure) strongly synchronizing.
\qed
\end{proof}

\begin{corollary}\label{col: almost-limit-strong-sum}
$\winlim{strongly}(\fsum_T) = \winas{strongly}(\fsum_T)$
for all target sets~$T$.
\end{corollary}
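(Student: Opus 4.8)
The plan is to derive this equality directly from the characterization already established in Lemma~\ref{lem: strong-sum}. That lemma reduces each winning mode for strong synchronization according to $\fsum_T$ to the corresponding winning mode for the plain reachability objective $\Diamond S$, where $S$ is the sure winning region for the safety objective $\Box T$. Crucially, the \emph{same} target set $S$ serves for all three modes, so the reduction is uniform across sure, almost-sure, and limit-sure winning, and this uniformity is exactly what makes the corollary fall out.

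First I would apply Lemma~\ref{lem: strong-sum} twice: once in the almost-sure mode, to obtain that $\winas{strongly}(\fsum_T)$ coincides with the set of states that are almost-sure winning for $\Diamond S$; and once in the limit-sure mode, to obtain that $\winlim{strongly}(\fsum_T)$ coincides with the set of states that are limit-sure winning for $\Diamond S$. These two identities hold for one and the same set $S$, namely the sure winning region for $\Box T$.

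Next I would invoke the classical fact that the almost-sure and limit-sure winning regions for reachability objectives coincide in MDPs~\cite{AHK07}. Applying this to the target $S$ shows that the almost-sure and limit-sure winning regions for $\Diamond S$ are equal. Chaining the three equalities then yields $\winas{strongly}(\fsum_T) = \winlim{strongly}(\fsum_T)$, as required.

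I do not expect any genuine obstacle here: all the substantive work --- in particular the nontrivial reduction of strong synchronization to reachability to the safety region, including the $\frac{\epsilon}{\abs{Q}}$-type probability-mass argument bounding the weight on non-safe states --- is already carried out in the proof of Lemma~\ref{lem: strong-sum}. The corollary is a direct consequence that relies only on the well-known robustness of reachability in MDPs, so the remaining step is purely a matter of chaining the stated equivalences for a common target set $S$.
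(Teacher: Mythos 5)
Your proposal is correct and matches the paper's own reasoning exactly: the corollary follows from Lemma~\ref{lem: strong-sum} applied in the almost-sure and limit-sure modes (with the same target $S$, the sure winning region for $\Box T$), combined with the classical coincidence of almost-sure and limit-sure reachability in MDPs~\cite{AHK07}. No gaps; this is the same argument the paper uses (stated explicitly for the analogous Corollary~\ref{col: almost-limit-strong} and implicitly here).
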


The following result follows from Lemma~\ref{lem: strong-sum} and the fact 
that the winning region for sure safety, sure reachability, and almost-sure 
reachability can be computed in polynomial time for MDPs~\cite{AHK07}. 
Moreover, memoryless strategies are sufficient for these objectives.

\begin{theorem}\label{theo:strongly-sum}
For the three winning modes of strong synchronization 
according to $\fsum_T$ in MDPs:

\begin{enumerate}
\item (Complexity). The membership problem is PTIME-complete.

\item (Memory). Pure memoryless strategies are sufficient.
\end{enumerate}
\end{theorem}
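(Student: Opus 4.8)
The plan is to read the theorem off directly from Lemma~\ref{lem: strong-sum}, which, for every winning mode $\mu \in \{sure, almost, limit\}$, reduces the membership problem for $\win{strongly}{\mu}(\fsum_T)$ to the membership problem for the plain reachability objective $\Diamond S$ in the \emph{same} mode, where $S$ is the sure winning region of the safety objective $\Box T$. For the PTIME upper bound I would compose two polynomial-time computations: first compute $S$, the sure winning region for $\Box T$, which is the greatest fixpoint $S = \nu X.\,(T \cap \Pre(X))$ and is obtained by a standard decreasing $\Pre$-iteration in polynomial time; then, given $S$, solve the reachability problem $\Diamond S$ in the required mode. Sure, almost-sure, and limit-sure reachability in MDPs are all decidable in polynomial time~\cite{AHK07}, and almost-sure and limit-sure reachability coincide, which matches Corollary~\ref{col: almost-limit-strong-sum}. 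Since both steps are in PTIME, membership is in PTIME for all three modes.

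For the matching PTIME lower bound I would note that for a singleton target $T = \{q\}$ we have $\fsum_T = \fmax_T$, so the PTIME-hardness already established in Theorem~\ref{theo:strongly-max} (which holds for a singleton target, via the reduction from the monotone Boolean circuit value problem) transfers verbatim to $\fsum_T$. Hence the membership problem is PTIME-complete, even when $T$ is a singleton.

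For the memory bound I would exhibit a single pure memoryless winning strategy by gluing together the memoryless strategies available for the two component objectives: both the sure safety objective $\Box T$ and the (sure, almost-sure, or limit-sure) reachability objective $\Diamond S$ admit pure memoryless optimal strategies in MDPs~\cite{AHK07}. Define $\alpha$ to play, in every state $q \in S$, the memoryless safety action, and in every state $q \in Q \setminus S$, the memoryless reachability action; this $\alpha$ is pure and memoryless since the choice depends only on the current state. I expect the only genuine subtlety to be justifying this memoryless gluing, and it rests on one structural fact: $S$ is a trap under the safety strategy, i.e.\ from any state of $S$ the safety action keeps the play inside $S \subseteq T$ forever. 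Consequently the state-indexed switch at $S$ faithfully realizes ``reach $S$, then stay safe'' without remembering whether $S$ has already been visited, and overriding the reachability strategy on $S$ does not affect its reachability guarantee (the target is already met upon entering $S$). Under $\alpha$, in the sure mode all probability mass lies in $S \subseteq T$ after finitely many steps and remains there, while in the almost-sure (equivalently limit-sure) mode the mass in $S$ tends to $1$ and stays trapped; in either case $\alpha$ is strongly synchronizing according to $\fsum_T$ in the corresponding mode, establishing sufficiency of pure memoryless strategies.
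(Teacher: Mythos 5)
Your proposal is correct and follows essentially the same route as the paper: the paper derives the theorem directly from Lemma~\ref{lem: strong-sum} together with the polynomial-time computability of the sure safety region and of sure/almost-sure reachability in MDPs~\cite{AHK07}, and the memoryless sufficiency of strategies for those objectives. You merely make explicit two points the paper leaves implicit --- the PTIME-hardness transfer via $\fsum_T = \fmax_T$ for singleton $T$ (using the monotone-circuit reduction of Theorem~\ref{theo:strongly-max}), and the state-based gluing of the memoryless reachability and safety strategies, justified by $S$ being a trap under the safety strategy --- both of which are sound and exactly what the paper's argument relies on.
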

\bibliographystyle{splncs03}
\bibliography{biblio} 
\end{document}